\theoremstyle{plain}
\newtheorem{assumption}[theorem]{Assumption}
\title{Implementing Non-Equilibrium Networks with Active Circuits of Duplex Catalysts} 
\titlerunning{Active Circuits of Duplex Catalysts} 
\author{Antti Lankinen}{Department of Bioengineering, Imperial College London, London, SW7 2AZ, United Kingdom}{antti.lankinen15@imperial.ac.uk}{}{}
\author{Ismael Mullor Ruiz}{Department of Bioengineering and Imperial College Centre for Synthetic Biology, Imperial College London, London, SW7 2AZ, United Kingdom}{i.mullor-ruiz16@imperial.ac.uk}{}{}
\author{Thomas E. Ouldridge}{Department of Bioengineering and Imperial College Centre for Synthetic Biology, Imperial College London, London, SW7 2AZ, United Kingdom}{t.ouldridge@imperial.ac.uk}{}{}
\authorrunning{A. Lankinen, I. Mullor Ruiz, and T. E. Ouldridge} 
\keywords{ {DNA strand displacement, Catalysis, Information-processing networks}} 
\begin{document}

\maketitle

\begin{abstract}
DNA strand displacement (DSD) reactions have been used to construct chemical reaction networks in which species act catalytically at the level of the overall stoichiometry of reactions. These effective catalytic reactions are typically realised through one or more of the following: many-stranded gate complexes to coordinate the catalysis, indirect interaction between the catalyst and its substrate, and the recovery of a distinct ``catalyst'' strand from the one that triggered the reaction. These facts make emulation of the out-of-equilibrium catalytic circuitry of living cells more difficult. Here, we propose a new framework for constructing catalytic DSD networks: Active Circuits of Duplex Catalysts (ACDC). ACDC components are all double-stranded complexes, with reactions occurring through 4-way strand exchange. Catalysts directly bind to their substrates, and and the ``identity'' strand of the catalyst recovered at the end of a reaction is the same molecule as the one that initiated it. We analyse the capability of the framework to implement catalytic circuits analogous to phosphorylation networks in living cells. We also propose two methods of systematically introducing mismatches within DNA strands to avoid leak reactions and introduce driving through net base pair formation. We then combine these results into a compiler to automate the process of designing DNA strands that realise any catalytic network allowed by our framework.
\end{abstract}
\newpage
\section{Introduction}
\label{sec:introduction}
DNA is an attractive engineering material due to the high specificity of Watson-Crick base pairing and well-characterised thermodynamics of DNA hybridisation \cite{santalucia_thermodynamics_2004, dirks_thermodynamic_2007}, which give DNA the most predictable and programmable interactions of any natural or synthetic molecule \cite{seeman_dna_2017}. DNA computing involves exploiting these properties to assemble computational devices made of DNA. The computational circuits are typically realised using DNA strand displacement (DSD) reactions, in which sections of DNA strands called \textit{domains} with partial or full complementarity hybridise, displacing one or more previously hybridised strands in the process \cite{zhang_dynamic_2011}. DSD is initiated by the binding of short complementary sequences called \textit{toeholds}. It is helpful to divide DSD reactions into a few common reaction steps, including: binding, unbinding, and three- or four-way strand displacement and branch migration, shown in Figure \ref{fig:dsdreactions}. DSD is an attractive scheme for computation as it can be used as a medium in which to realise chemical reaction networks (CRNs)\cite{soloveichik_dna_2010}, which provide an abstraction of systems exhibiting mass-action chemical kinetics and have been shown to be Turing complete \cite{magnasco_chemical_1997}. DSD is then Turing complete as well \cite{qian_efficient_2011, yahiro_implementation_2016}. DSD has been used to construct, for example, logic circuits \cite{seelig_enzyme-free_2006, qian_scaling_2011}, artificial neural networks \cite{qian_neural_2011, genot_scaling_2013, cherry_scaling_2018}, dynamical systems \cite{srinivas_enzyme-free_2017}, catalytic networks \cite{zhang_engineering_2007, qian_simple_2011, chen_programmable_2013}, and other computational devices \cite{adleman_molecular_1994, yin_programming_2008}. To facilitate testing and realisation of DSD systems, domain-level design tools  \cite{lakin_visual_2011, spaccasassi_logic_2019} as well as domain-to-sequence translation \cite{zadeh_nupack_2011} software have been introduced.

While DNA nanotechnology is concerned with using DNA as a non-biological material, a key goal of DNA nanotechnology is the imitation and augmentation of cellular systems. It is therefore worth considering how these natural systems typically perform computation and information processing. One ubiquitous biological paradigm for signal propagation and processing is the catalytic activation network, as exemplified by kinases \cite{marshall_map_1994,herskowitz_map_1995,manning_protein_2002}. Kinases are catalysts that modify substrates by phosphorylation and consume ATP in the process. These substrates can be, for example, transcription factors, but can also be kinases themselves that are either activated or deactivated by phosphorylation. The opposite function, dephosphorylation, is performed by phosphatases \cite{barford_structure_1998}. The emergent catalytic network then performs information propogation or computation by converting species, kinases and phosphatases, between their active and passive states. Kinase cascades are featured in many key biological functions, such as cellular growth, adhesion, and differentiation \cite{widmann_mitogen-activated_1999, manning_protein_2002} and long-term potentiation \cite{sweatt_neuronal_2001}. 

The fuel-consuming, catalytic nature of these circuits is vital in allowing them to perform functions such as signal splitting, amplification, time integration and insulation \cite{govern_energy_2014,mehta_landauer_2016,ouldridge_thermodynamics_2017,deshpande_high_2017,barton_energy_2013}. Moreover, since the key molecular species are recovered rather than consumed by reactions, catalytic networks can operate continuously, responding to stimuli as they change over time - unlike many architectures for DSD-based computation and information processing that operate by allowing the key components to be consumed \cite{qian_neural_2011,cherry_scaling_2018,adleman_molecular_1994}. This ability to operate continuously is invaluable in autonomous environments such as living cells. 

In this work, we propose a minimal mechanism for implementing reaction networks of molecules that exist in catalytically active and inactive states, a simple abstraction of natural kinase networks. In these catalytic activation networks, we implement arbitrary activation reactions of the form $A^\prime + B +\sum_i F_i \rightarrow A^\prime + B^{\prime} + \sum_i W_i$. Here, the active catalyst $A^\prime$ drives $B$ between its inactive and active states by the conversion of fuel molecules $\{F_i\}$ into waste $\{W_i\}$. Equivalent deactivation reactions in which an active catalyst deactivates a substrate are also considered. 

The rest of this paper is organised as follows. In Section \ref{sec:mole_catalysts}, we propose and motivate the concept of a direct bimolecular catalytic reaction and consider the necessary conditions for DSD species that are able to perform such reactions. Section \ref{sec:species} introduces a novel DSD framework to implement these reactions, and its computational properties are analysed in Section \ref{sec:networks}. Based on these findings, we propose a systematic method of introducing mismatched base pairs within species in our framework to improve its function in Section \ref{sec:mismatch}. We combine our findings and propositions into a software to automate the sequence-level design of any CRN that is realisable within our framework, and detail this software in Section \ref{sec:compiler}. In Section \ref{sec:discussion}, we discuss our framework, findings, and future work. We conclude the paper in Section \ref{sec:conclusion}.

\section{Direct Action of Molecular Catalysts}
\label{sec:mole_catalysts}
In kinase cascades, functional changes in substrates are a result of direct binding of the catalyst to the substrate. Moreover, the essential products of the reaction (the activated substrate and recovered catalyst) are the same molecules that initially bound to each other - albeit with some modification of certain residues, or turnover of small molecules such as ATP or ADP to which they are bound. Motivated by these facts, we propose the following definition for a direct bimolecular catalytic activation reaction.

\begin{definition}[Direct bimolecular catalytic activation]{\label{def:catalyticreaction}}
Consider the (non-elementary) reaction
\begin{align*}{\label{eq:catalyticreactiondefinition}}
    {A^\prime + B +\sum_i F_i \rightarrow A^\prime + B^{\prime} + \sum_i W_i},
\end{align*}
where $A^\prime$ catalyses the conversion of inactive $B$ to active $B^\prime $, using ancillary fuels $\{F_i\}$ and producing waste $\{W_i\}$. The overall reaction is a direct bimolecular catalytic activation reaction if and only if:
\begin{enumerate}
    \item The reaction is initialised with the interaction of $A^\prime$ and $B$.
   \item The $A^\prime$ and $B$ molecules have molecular cores that are retained in the products $A^\prime$ and $B^\prime$, rather than the input molecules being consumed and distinct outputs released. 
\end{enumerate}
Deactivation reactions have an equivalent form, but convert $B^\prime$ to $B$.
If the same overall reaction stoichiometry is implemented differently, the reaction is a \normalfont{pseudocatalytic bimolecular activation reaction}. 
\end{definition}

\begin{figure}
 \centering
 \begin{subfigure}[b]{0.2 \textwidth}
  \centering
  \includegraphics[width=\textwidth]{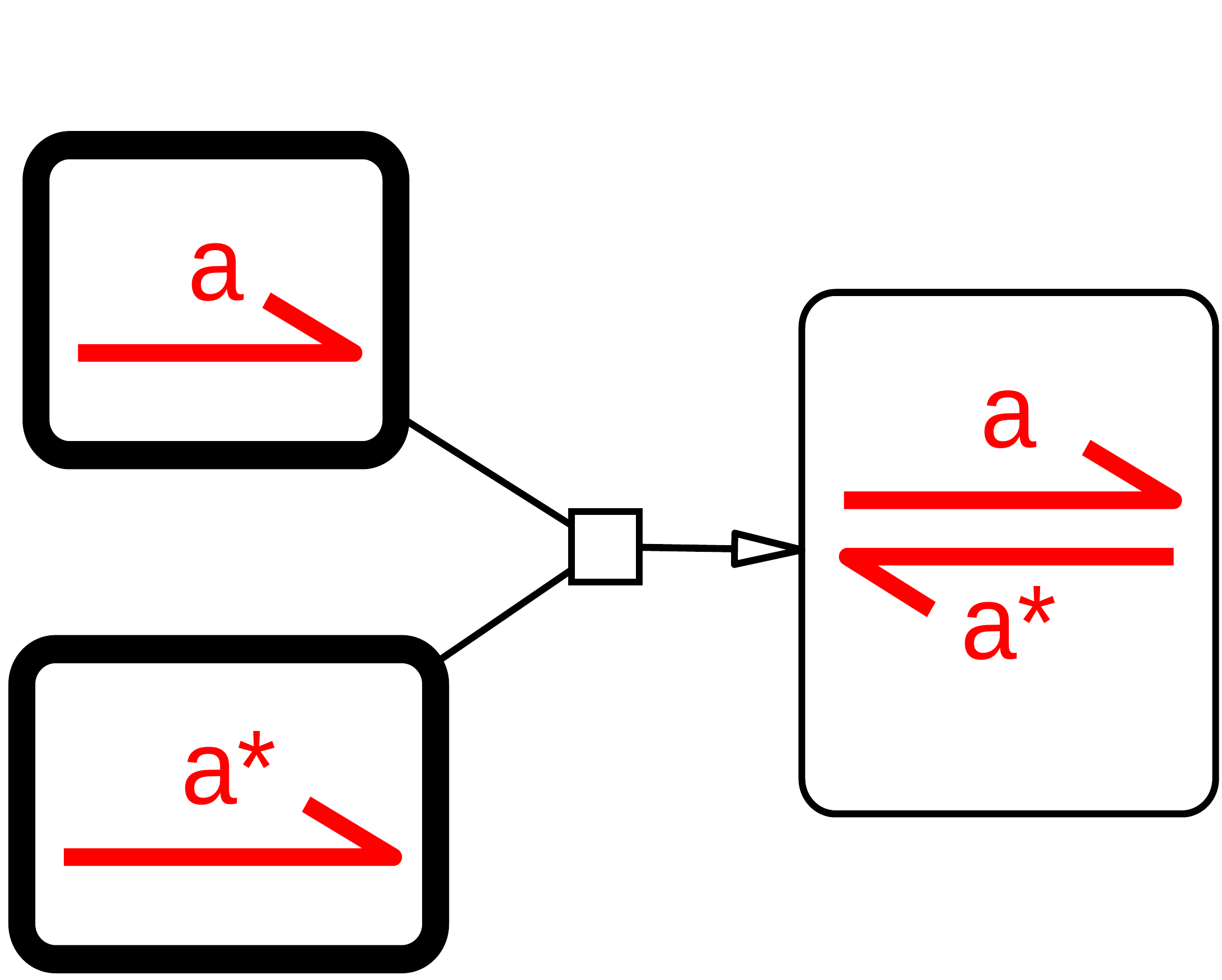}
  \caption{Bind}
 \end{subfigure}
 \begin{subfigure}[b]{0.2 \textwidth}
  \centering
  \includegraphics[width=\textwidth]{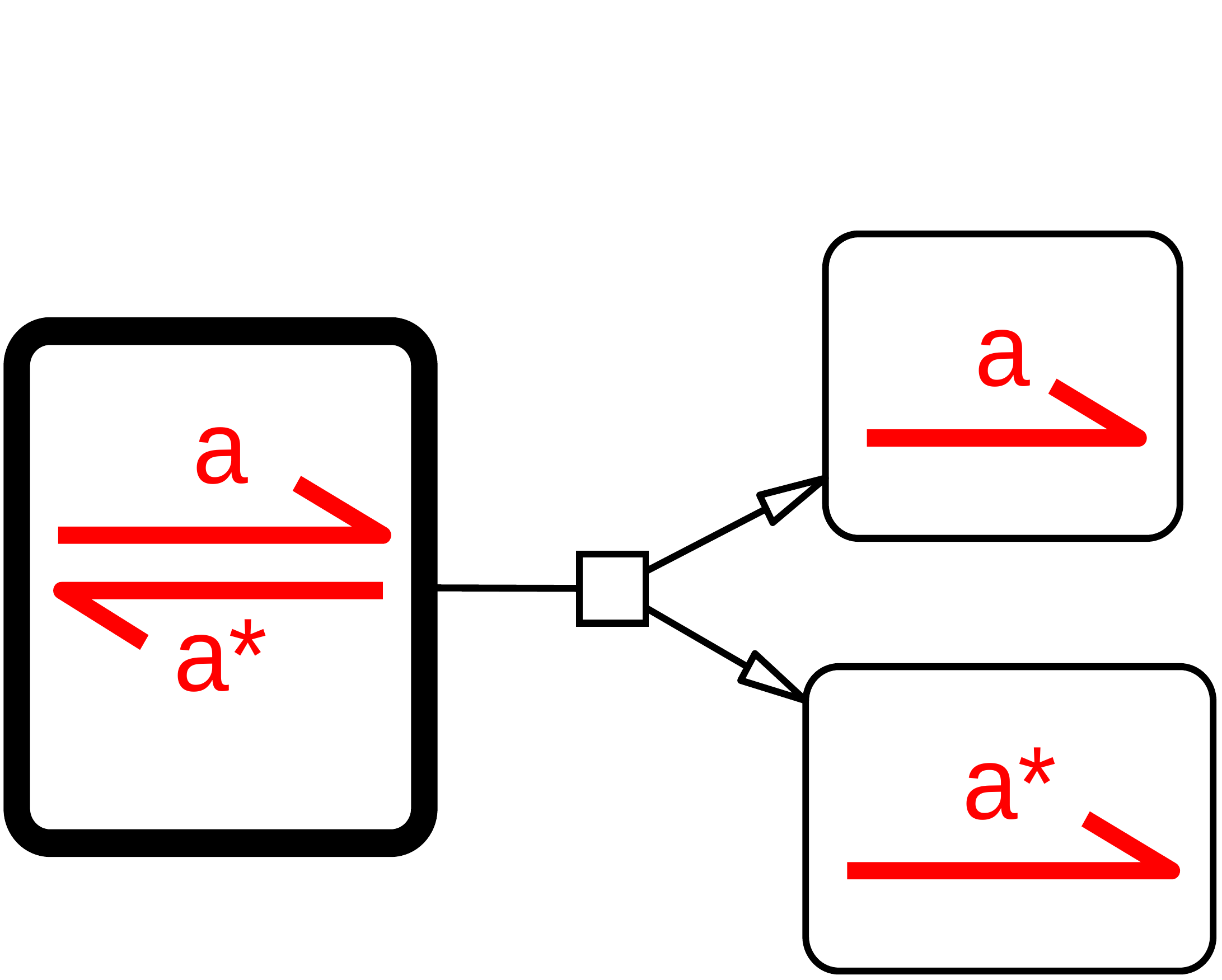}
  \caption{Unbind}
 \end{subfigure}
 \begin{subfigure}[b]{0.22 \textwidth}
  \centering
  \includegraphics[width=\textwidth]{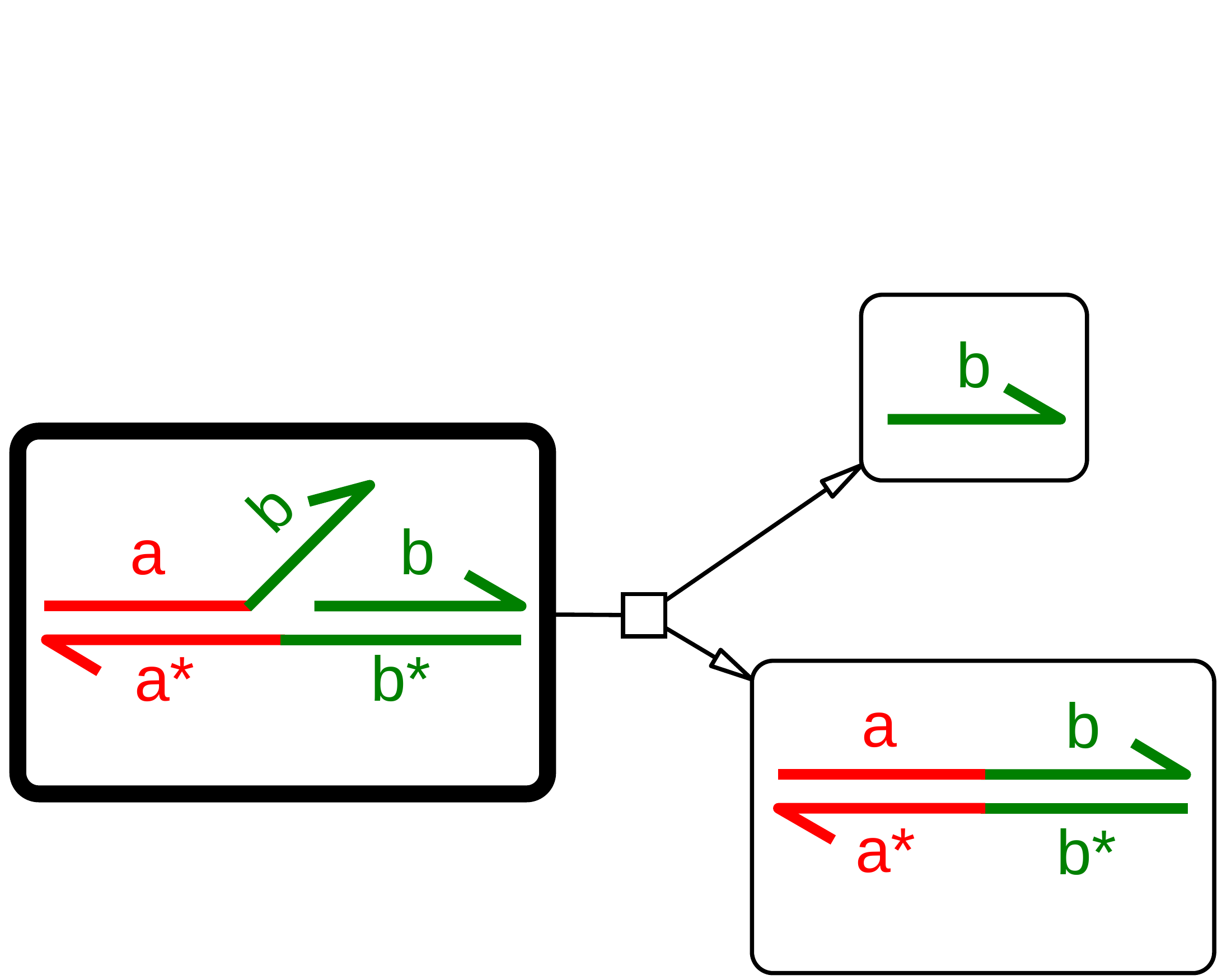}
  \caption{Displace (3-way)}
 \end{subfigure}
 \begin{subfigure}[b]{0.33 \textwidth}
  \centering
  \includegraphics[width=\textwidth]{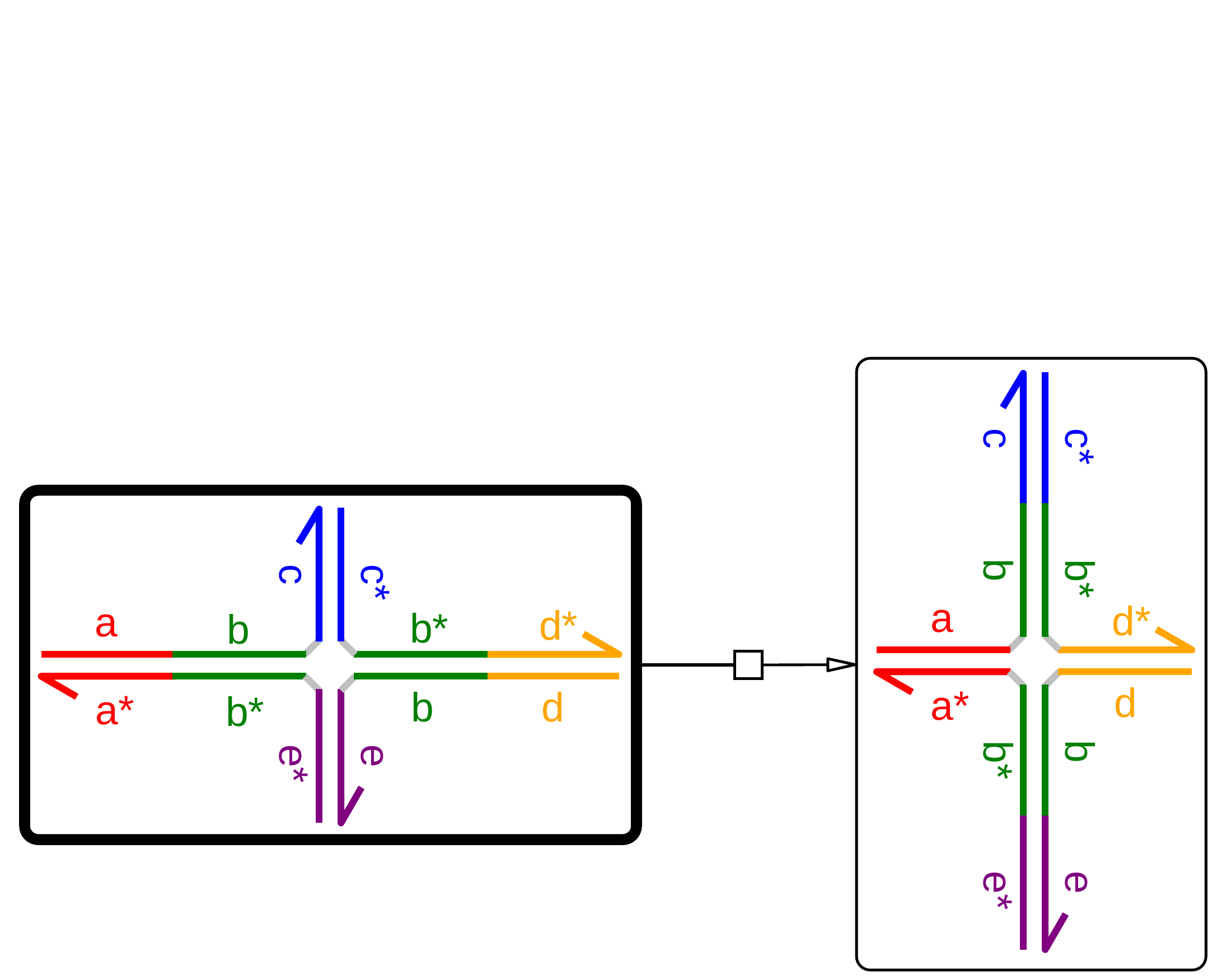}
  \caption{Branch migrate (4-way)}
 \end{subfigure}
 \caption{Basic reaction steps in the DSD formalism, as represented by Visual DSD \cite{lakin_visual_2011}. Each domain is represented by a letter and a colour. "*" denotes the Watson-Crick complement. The barbed end of a strand indicates the 3' end.}
 \label{fig:dsdreactions}
\end{figure}

Direct bimolecular catalytic (de)activation reactions have some important functional properties. The first is that, if the first step of the reaction requires the presence of $A^\prime$ and $B$, nothing can happen unless both molecules are present. In pseudocatalytic implementations, as we discuss below, it is possible to produce activated $B^\prime$ or sequester $A^\prime$ even if no $B$ is present, violating the logic of activation-based networks. The second is that the persistence of a molecular core of both the substrate and the catalyst allows either or both to be localised on a surface or scaffold, as is observed for some kinase cascades in living cells \cite{elion_ste5_1995,whitmarsh_mammalian_1998,schaeffer_mp1_1998} and is often proposed for DNA-based systems \cite{qian_parallel_2014, teichmann_robustness_2014, ruiz_connecting_2015, bui_localized_2018, chatterjee_spatially_2017}.

A number of DNA computing frameworks have been developed to implement reactions of the stoichiometry of Definition \ref{def:catalyticreaction}. The simplest, illustrated in Figure \ref{fig:seesaw}\,(a), involves a two-step seesaw gate \cite{qian_simple_2011,haley_design_2020}. An input molecule ($A^\prime$ in Definition \ref{def:catalyticreaction}) binds to a gate-output complex ($F$), releasing the output ($B^\prime$). The input is then displaced by a molecule conventionally described as the fuel, but fulfilling the role of $B$ from Definition \ref{def:catalyticreaction} in the context of catalysis, recovering $A^\prime$ and generating a waste duplex ($W$). Although the $A^\prime$ strand recovered at the end of the process is the same one that initiated the process, the $B$ and $B^\prime$ molecules are distinct and the reaction is not initiated by the binding of $A^\prime$ and $B$; it is therefore pseudocatalytic. 

This pseudocatalysis can have important consequences. If a small quantity of input $A^\prime$ is added to a solution containing the gate-output complex $F$ but no $B$, a large fraction of $A^\prime$ is sequestered and a corresponding amount of $B^\prime$ is produced. This sequestration of $A^\prime$ and production of $B^\prime$ from nothing violates the logic of ideal catalytic activation networks.

More complex strategies to implement reactions of the stoichiometry of Definition \ref{def:catalyticreaction}
using DSD exist \cite{chen_programmable_2013,qian_efficient_2011}. 
These approaches rely on the catalyst and substrate ($A^\prime$ and $B$ from Definition \ref{def:catalyticreaction}) interacting with a gate, rather than binding to each other, and the recovered catalyst and product are separate strands - the reactions are therefore pseudocatalytic. In certain limits, these strategies can approximate a mass-action dependence of reaction rates on the concentrations of $A^\prime$ and $B$ \cite{chen_programmable_2013, plesa_stochastic_2018}, providing a better approximation to the logic of ideal catalytic activation circuits than the simple seesaw motif. The price, however, is the need to construct large multi-stranded gate complexes to facilitate the reaction; the complexity of these motifs is a major barrier to implementing such systems in autonomous setting such as living cells. Moreover, localising catalysts and substrates to a scaffold or surface remains challenging when the molecules themselves are not recovered.

\begin{figure}
 \centering
 \includegraphics[width=0.6 \textwidth]{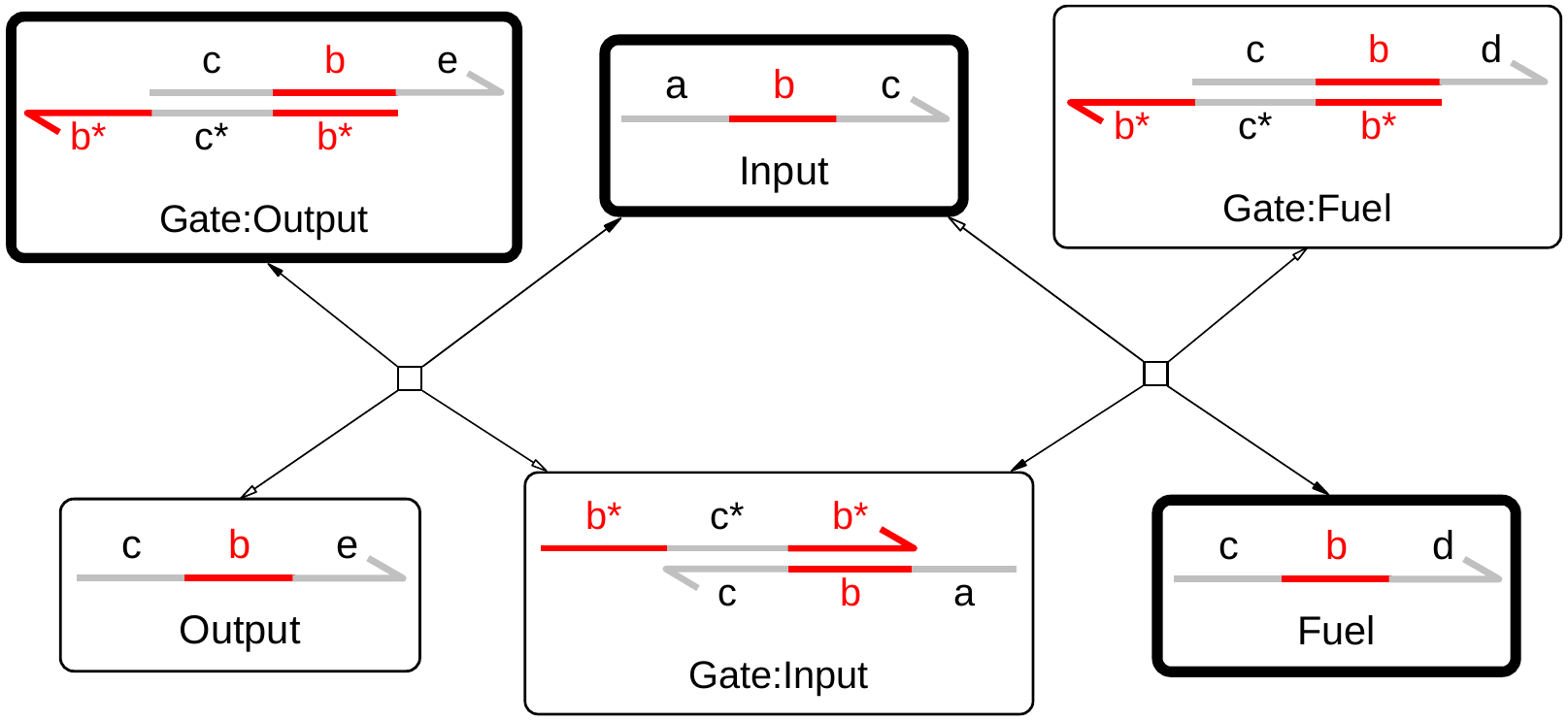}
 \caption{Catalytic reaction using a seesaw gate \cite{qian_simple_2011,haley_design_2020}. Reactants are shown in bold boxes; the input acts pseudocatalytically to ``convert" the fuel into an output, with ancillary gate complexes consumed and produced. Each compound reaction is illustrated by a small square, and consists of sequential bind, displace, and unbind reactions. All reactions are reversible; open arrows indicate reactions proceeding forwards, and closed arrows by reactions proceeding backwards.}
 \label{fig:seesaw}
\end{figure}

We now consider how to design minimal DSD-based units that implement direct biomolecular catalytic (de)activation in catalytic activation networks. If the core of the substrate species $B$ must be retained in the product $B^\prime$, $B$ and $B^\prime$ cannot simply be two strands with a slightly different sequence. Instead, $B$ and $B^\prime$ must either be distinct complexes of strands, in which at least one strand is common, or have different secondary structure within a single strand, or both. To avoid complexities in balancing the thermodynamics of hairpin loop formation with bimolecular association, and suppressing the kinetics of unimolecular rearrangement, we do not pursue the possibility of engineering metastable secondary structure within a strand. At least one of $B$ and $B^\prime$ must therefore consist of at least two strands. Moreover, since each activation state of each species must be a viable substrate in an arbitrary catalytic (de)activation network, the simplest approach that allows for a generic catalytic mechanism is to implement all substrate/catalyst species as two-stranded complexes. 

\section{ACDC: A Duplex-Based Catalytic DSD Framework}
\label{sec:species}
We introduce the Active Circuits of Duplex Catalysts (ACDC) scheme to implement catalytic activation networks through direct bimolecular catalytic (de)activation. Each reaction has three inputs: a substrate, a catalyst, and a single fuel complex. The outputs are a modified substrate, the recovered catalyst and a waste complex. The domain-level structures of these species are shown in  Figure \ref{fig:species}.

Substrates and catalysts -- hereafter referred to as \textit{major species} -- are anatomically identical. Each consists of two strands, each of which has one central long domain ($\sim$ 20 nucleotides (nt)) and two toeholds ($\sim$ 5 nt) on each side of the long domain. In major species, these strands are called the \textit{identity strand} and the \textit{state strand}. The identity strand is the preserved molecular core; the state strand specifies the activation state of a major species at a particular time (specifically, through the domain at its $5^\prime$ end - labelled ``a'' in Fig.~\ref{fig:species}). 

The two strands in a major species are bound by three central domains; the \textit{outer toeholds} at either end of the strands are \textit{available} (unbound). 
Major species thus contain two \textit{interfaces} at either end of the molecule, both displaying two available toeholds, one on each constituent strand. The \textit{inner toeholds}, which are bound in major species, are described as \textit{hidden}. We call the interface at the 5’ end of the state strand and the 3’ end of the identity strand the \textit{downstream} interface and the interface with the 3’ end of the state strand and 5’ end of the identity strand the \textit{upstream} interface.

All other two-stranded species in ACDC, including fuel and waste species, are described as \textit{ancillary species}. They have a distinct structure from major species, but are identical to each other (Figure \ref{fig:species}). Ancillary species also consist of two strands of five domains, but are bound by the central long domain and two shorter flanking toeholds (one outer toehold and one inner toehold) on one side. They therefore possess just one interface of available toeholds, but this interface presents two contiguous available toeholds on each strand. 

\begin{figure}
 \centering
 \begin{subfigure}[b]{0.5 \textwidth}
  \centering
  \includegraphics[width=\textwidth]{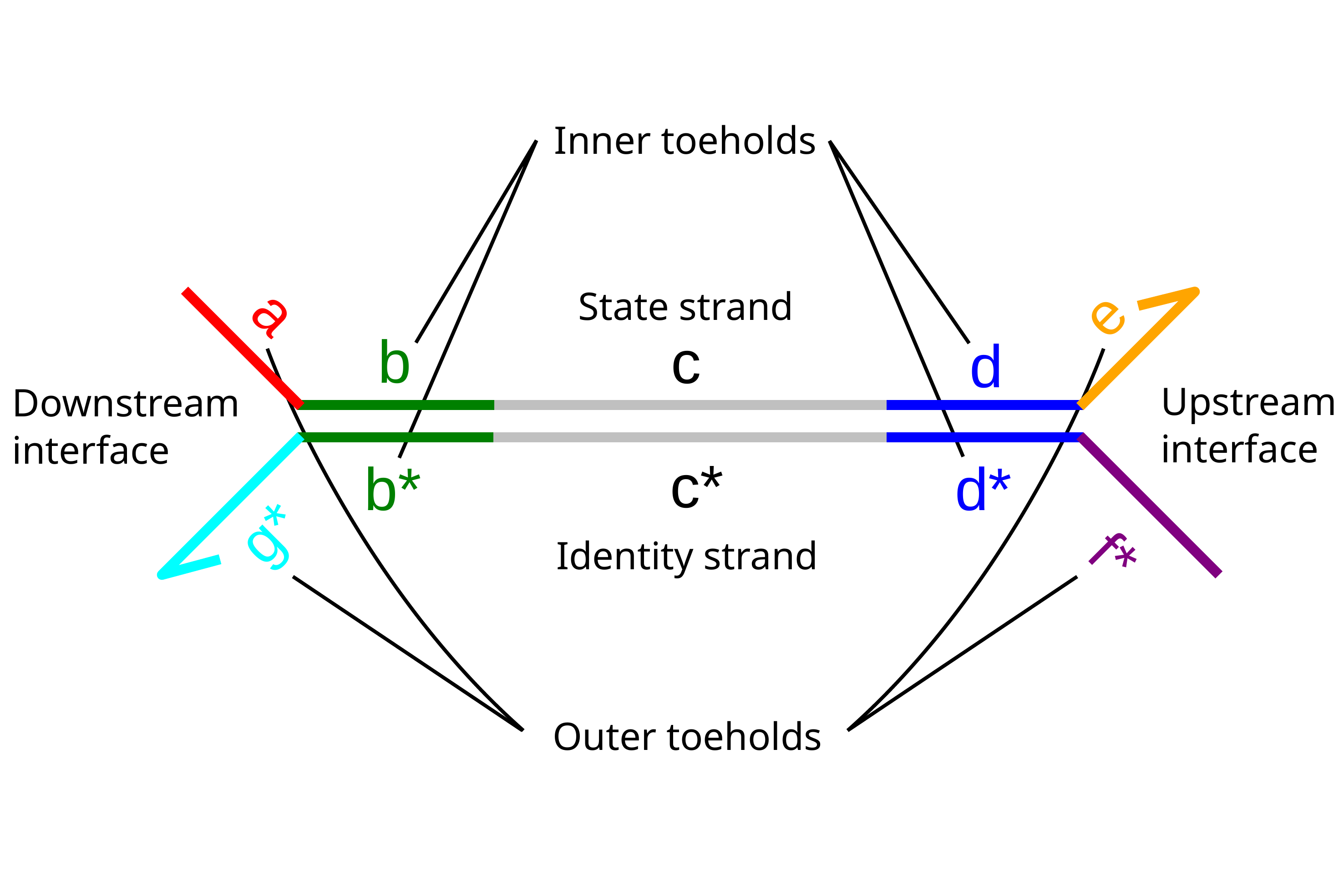}
  \caption{Major species}
 \end{subfigure}
 \begin{subfigure}[b]{0.4 \textwidth}
  \centering
  \includegraphics[width=\textwidth]{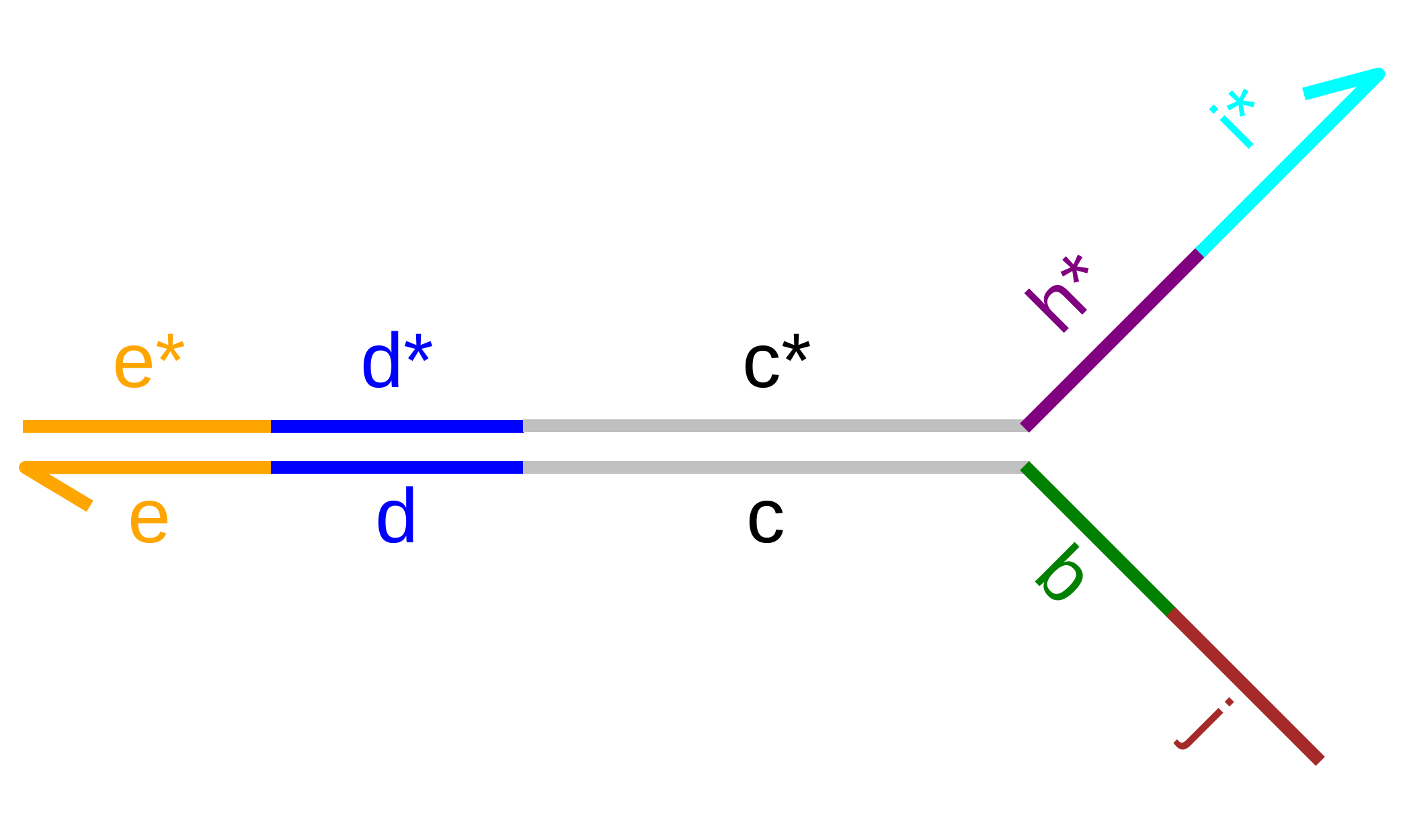}
  \caption{Ancillary species}
 \end{subfigure}
 \caption{(a) Topology of major species in the ACDC system (substrates or catalysts), illustrating upstream and downstream interfaces, and inner and outer toeholds. The long central domain forms a stable binding duplex. (b) Topology of ancillary species (fuel, waste or substrate-catalyst complex).}
 \label{fig:species}
\end{figure}

The catalytic reaction of a single ACDC  unit proceeds as shown in Figure \ref{fig:reaction}. The downstream interface of the catalyst $A^\prime$ and upstream interface of the substrate $B$ bind together through recognition of all four available toeholds in the relevant interfaces. The resultant complex undergoes a 4-way branch migration, with the base pairs between the state and identity strand of the substrate and catalyst being exchanged for base pairs between the two state strands and the two identity strands. 
After the exchange of a hidden toehold and the central binding domain, the 4-stranded complex is held together by only two inner toeholds on either side of a 4-way junction. Dissociation by spontaneous detachment of these toeholds  creates two ancillary product species, a waste $W_{AB\rightarrow B^\prime}$ and an intermediate complex $AB$. The sequence of these three reactions is called the \textit{2r-4} reaction \cite{johnson_impossibility_2019}.

The fuel $F_{AB\rightarrow B^\prime}$ is identical to the waste, except for a single toehold. This toehold corresponds to the outer toehold of the state strand of $B$ from the downstream interface. $F_{AB\rightarrow B^\prime}$ and $AB$ can undergo another \textit{2r-4} reaction, producing $B^\prime$ ($B$, but with a single domain changed in the downstream interface) and recovering the catalyst. With the downstream interface of substrate $B$ changed into that of $B^\prime$, the substrate has been activated and could act as a catalyst to another reaction, provided that an appropriate downstream substrate and fuel were present. An equivalent catalytic process could trigger another reaction converting $B'$ to $B$, \textit{deactivating} $B$, analogous to dephosphorylation by a phosphatase. 

The basic ACDC unit in Figure \ref{fig:reaction} satisfies the criteria for direct bimolecular catalytic activation, since the reaction is initiated by the binding of $A^\prime$ and $B$, and the identity strands in the major species are retained throughout. ACDC relies on the experimentally-verified mechanism of toehold-mediated 4-way branch migration \cite{venkataraman_autonomous_2007, dabby_synthetic_2013,lin_hierarchical_2018, kotani_multi-arm_2017}. The number of base pairs and complexes is unchanged by each \textit{2r-4} reaction, and therefore a bias for clockwise activation cycles (as opposed to anticlockwise deactivation) would require a large excess of fuel complexes $F_{AB\rightarrow B^\prime}$ relative to waste $W_{AB\rightarrow B^\prime}$. In addition, for a single catalytic cycle to operate as intended, the following assumptions must hold:

\begin{assumption}[Stability of complexes]
\label{ass:stability}
It is assumed that strands bound together by long domains are stable and will not spontaneously dissociate. It is also assumed that if two strands are bound by a pair of complementary domains, any adjacent pairs of complementary domains that could bind to form a contiguous duplex are not available.  
\end{assumption}

\begin{assumption}[Detachment of products]
\label{ass:detach}
It is assumed that 4-stranded complexes bound together by two pairs of toehold domains either side of a junction can dissociate into duplexes.   
\end{assumption}

\begin{assumption}[Need for two complementary toeholds to trigger branch migration]
\label{ass:specificity}
It is assumed that if a 4-stranded complex is formed by the binding of a single pair of toehold domains, it will dissociate into product duplexes, rather than undergo branch migration. 
\end{assumption}

Assumption \ref{ass:stability} ensures that the system keeps its duplex-based structure, and that toeholds are well hidden in complexes when required.
Assumption \ref{ass:detach} is necessary to avoid all species being sequestered into 4-stranded complexes. Note that the assumption is not that detachment must happen extremely quickly, since such 4-stranded complexes need to be metastable enough to initiate branch migration with reasonable frequency. It is equivalent to the need for single toeholds to detach in 3-way toehold exchange reactions \cite{qian_simple_2011}. In practice, toehold length and conditions such as temperature could be tuned to optimize the relative propensity for branch migration and detachment. Given a reasonable balance between branch migration and detachment, Assumption \ref{ass:specificity} -- which enables the switching of $B$ and $B^\prime$ to have a downstream effect -- is also likely to be satisfied.

\begin{figure}[t]
 \centering
 \includegraphics[width=\textwidth]{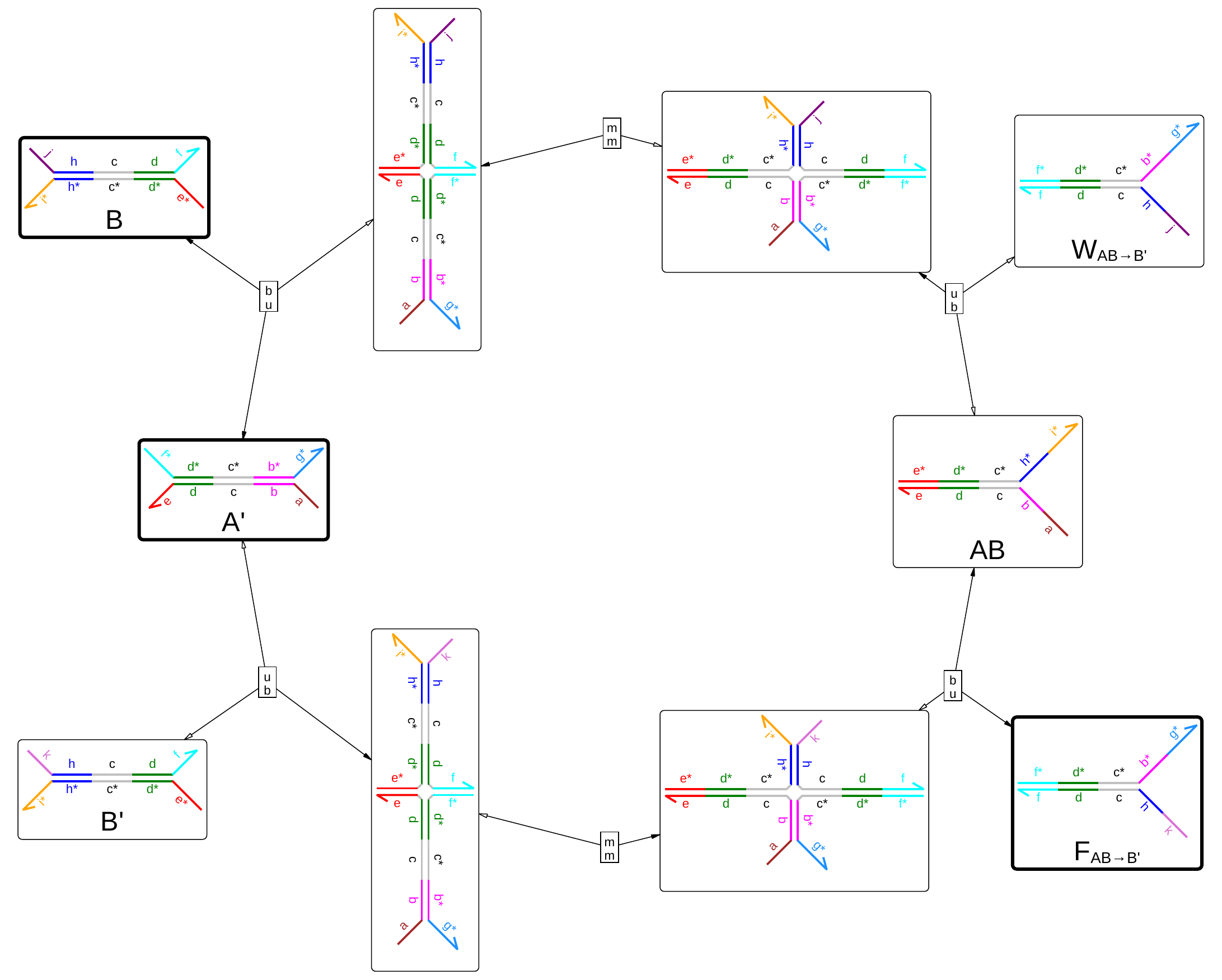}
 \caption{A basic ACDC reaction unit $A^\prime + B + F_{AB \rightarrow B^\prime} \rightarrow A^\prime + B^{\prime} +  W_{AB \rightarrow B^\prime}$, as represented by Visual DSD \cite{lakin_visual_2011}. Inputs to the reaction are shown in bold, and each small box corresponding to a reaction step is labelled with b/u (bind/unbind) or m (migrate). Imbalances in the concentration of fuel and waste drive the reaction clockwise (the direction indicated by open arrows).}
 \label{fig:reaction}
\end{figure}

\section{Domain-based constraints in ACDC Networks}
\label{sec:networks}
Larger catalytic activation networks can be constructed from the basic ACDC units of Figure \ref{fig:reaction}, since the activated substrate $B^\prime$ can itself act as a catalyst. Let $A \rightarrow B$ be a shorthand for the reaction ${{A'} + B + F_{AB\rightarrow B^\prime}\rightarrow {A'} + {B'} + W_{AB\rightarrow B^\prime}}$ and $C \dashv B$ a shorthand for the reaction ${C'} + {B'} + F_{CB^\prime \rightarrow B} \rightarrow {C} + B +W_{CB^\prime \rightarrow B}$. Then, any potential catalytic activation network can be represented as a weighted directed graph, where nodes represent catalyst/substrate species and edges represent activation (edge weight 1) or deactivation (edge weight -1). Is it possible to realise any such graph using ACDC? 

\begin{figure}
 \centering
 \begin{subfigure}[t]{0.18 \textwidth}
  \centering
  \includegraphics[width=\textwidth]{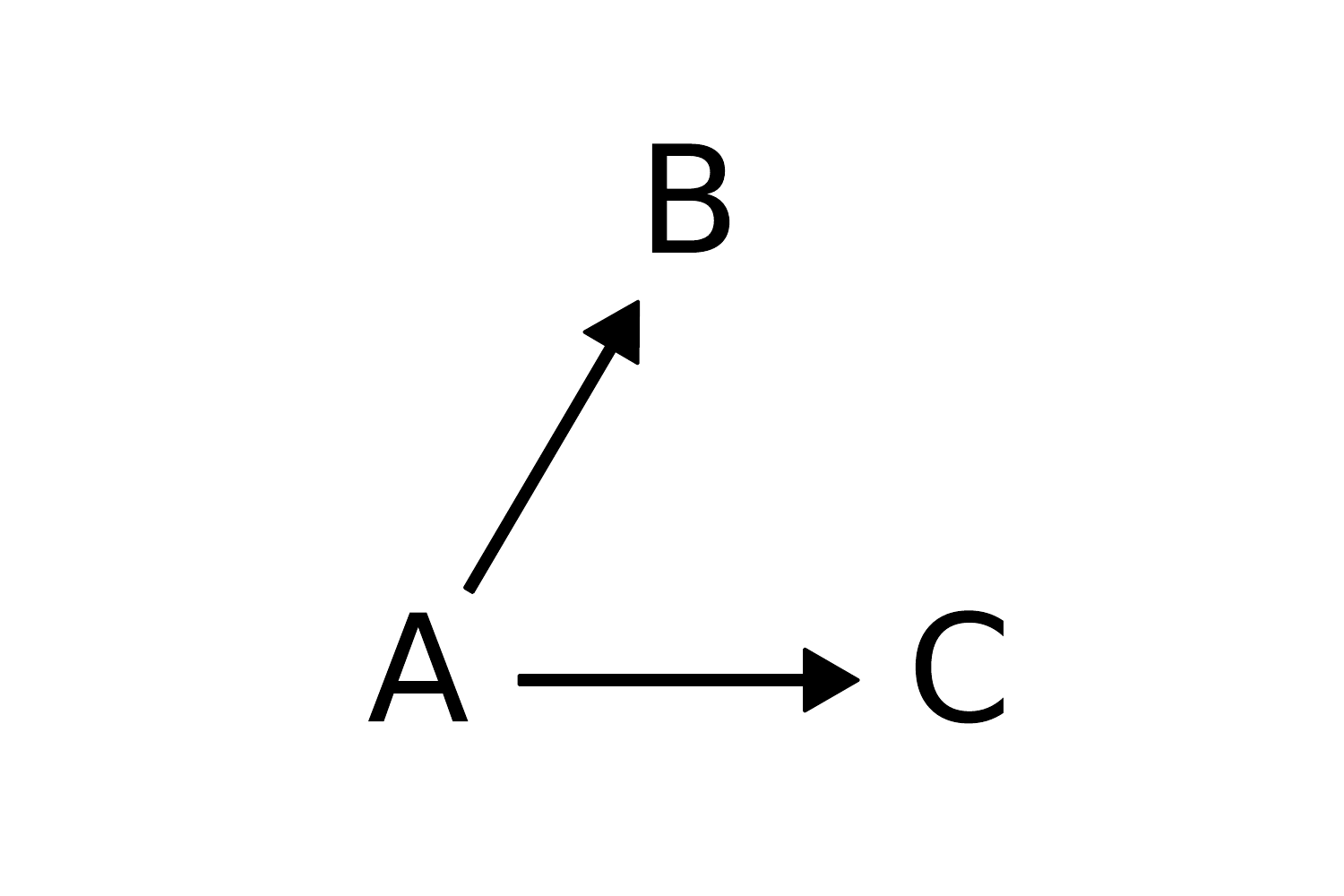}
  \caption{Split}
 \end{subfigure}
 \begin{subfigure}[t]{0.18 \textwidth}
  \centering
  \includegraphics[width=\textwidth]{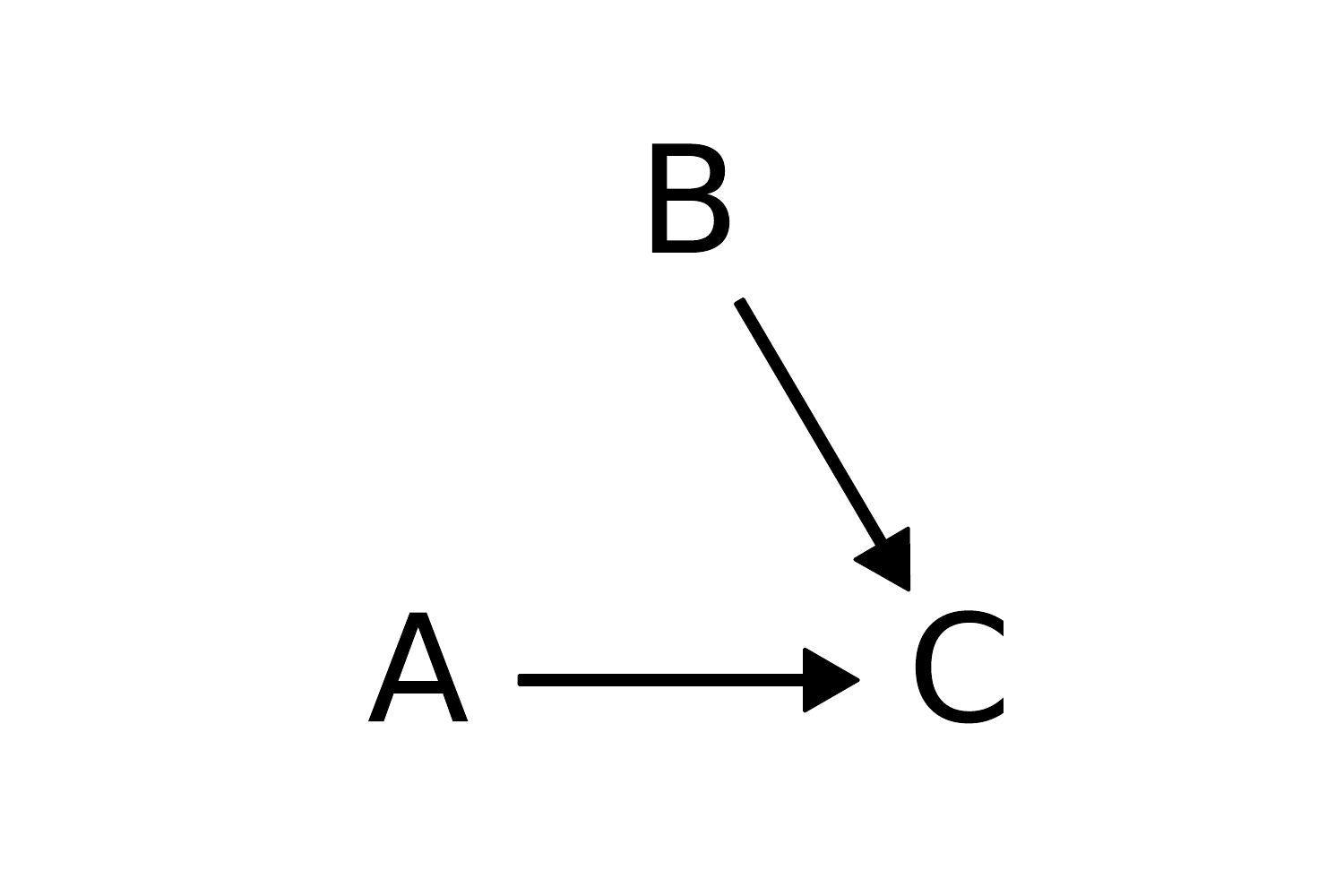}
  \caption{Integrate}
 \end{subfigure}
 \begin{subfigure}[t]{0.18 \textwidth}
  \centering
  \includegraphics[width=\textwidth]{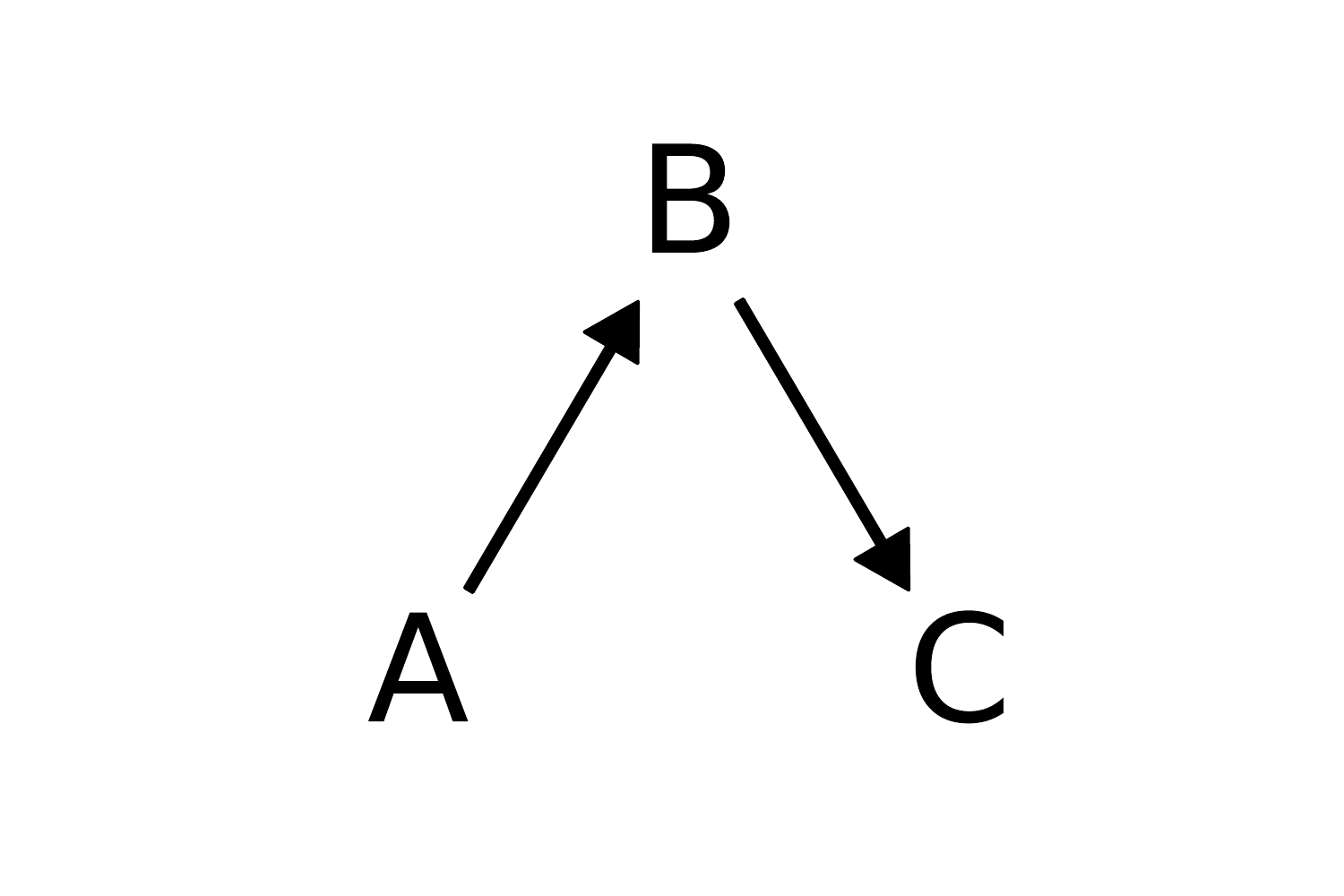}
  \caption{Cascade}
 \end{subfigure}

 \begin{subfigure}[t]{0.18 \textwidth}
  \centering
  \includegraphics[width=\textwidth]{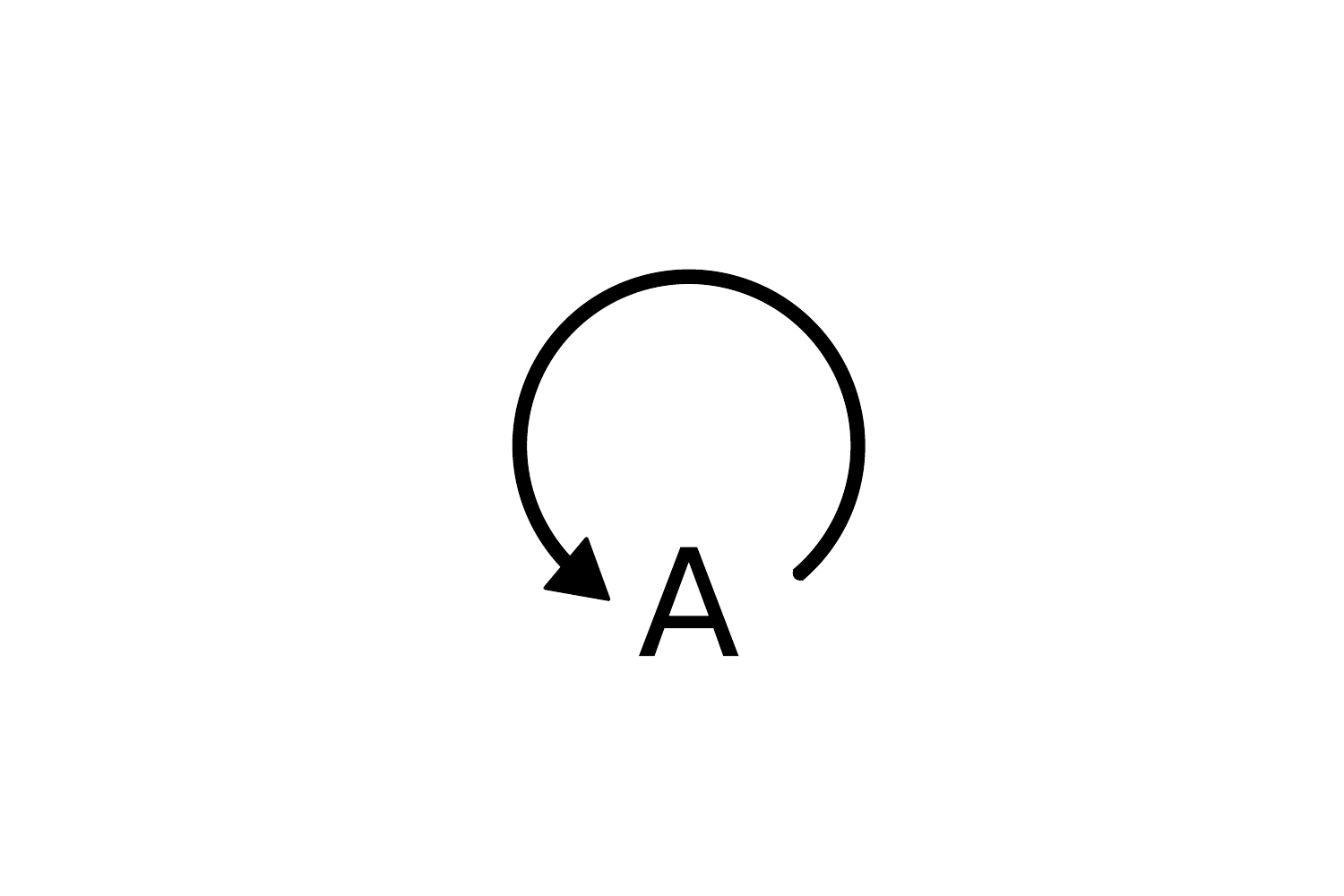}
  \caption{Auto-activation loop}
 \end{subfigure}
 \begin{subfigure}[t]{0.18 \textwidth}
  \centering
  \includegraphics[width=\textwidth]{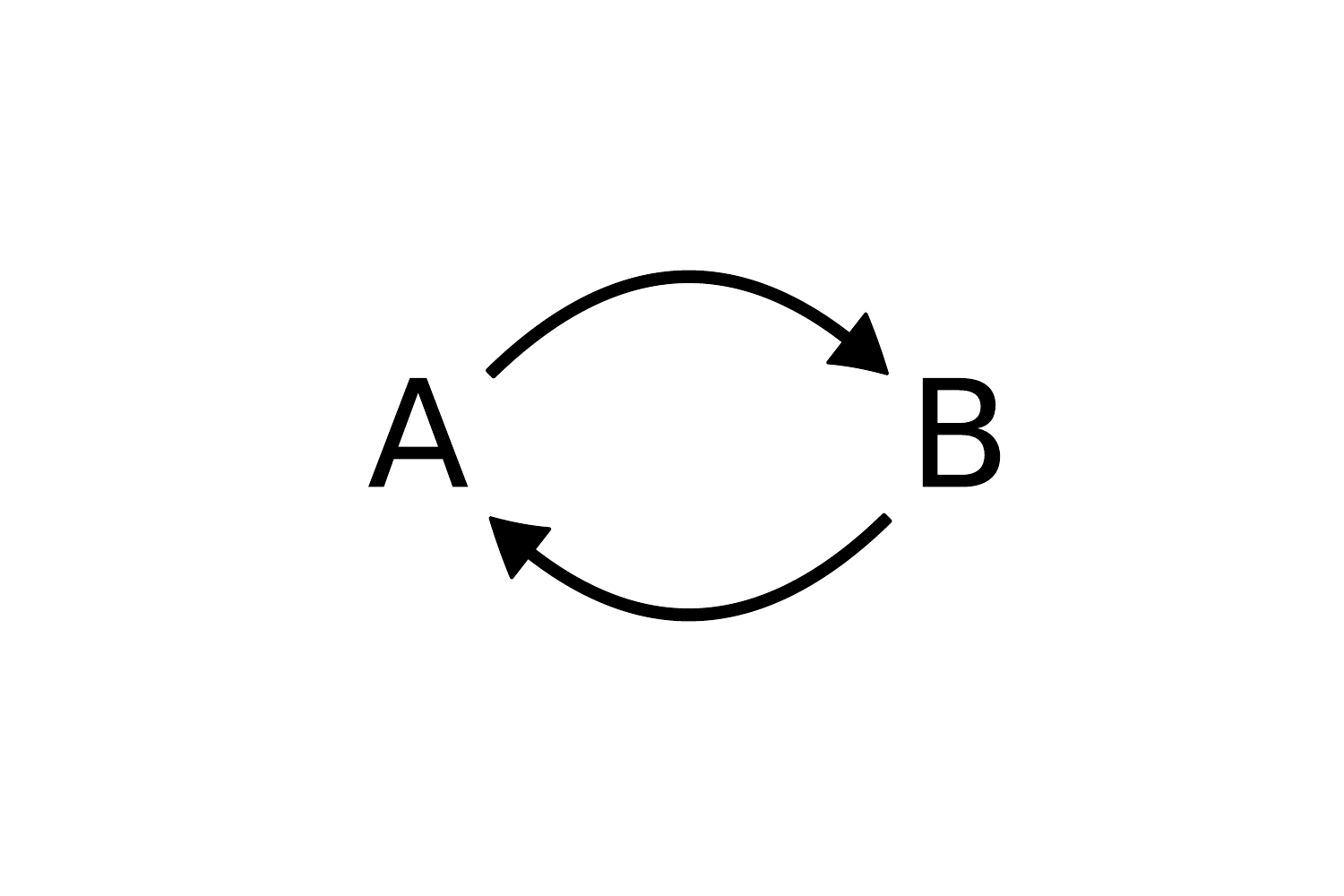}
  \caption{Bidirectional edge}
 \end{subfigure}
 \begin{subfigure}[t]{0.18 \textwidth}
  \centering
  \includegraphics[width=\textwidth]{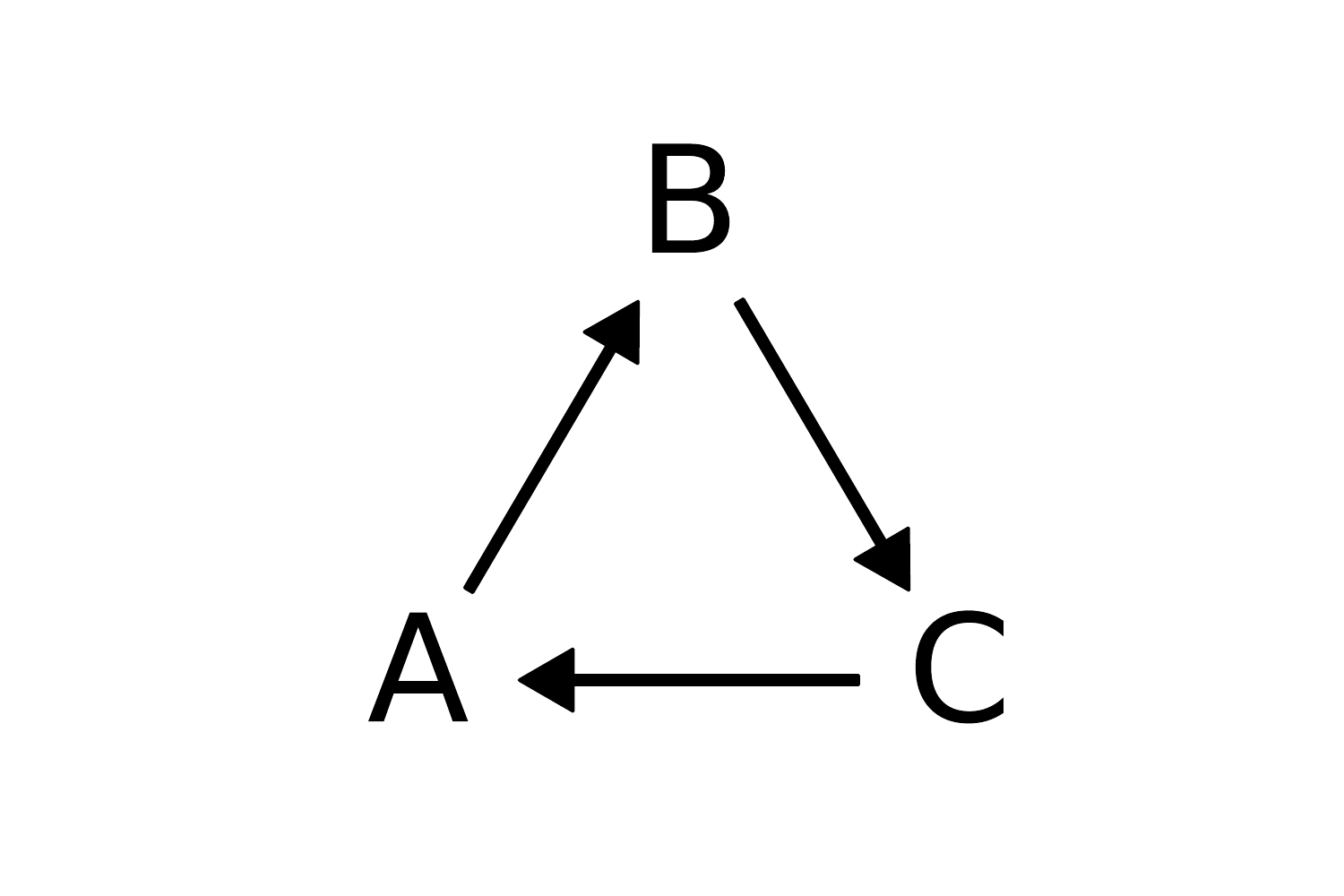}
  \caption{Feedback loop}
 \end{subfigure}
 \begin{subfigure}[t]{0.18 \textwidth}
  \centering
  \includegraphics[width=\textwidth]{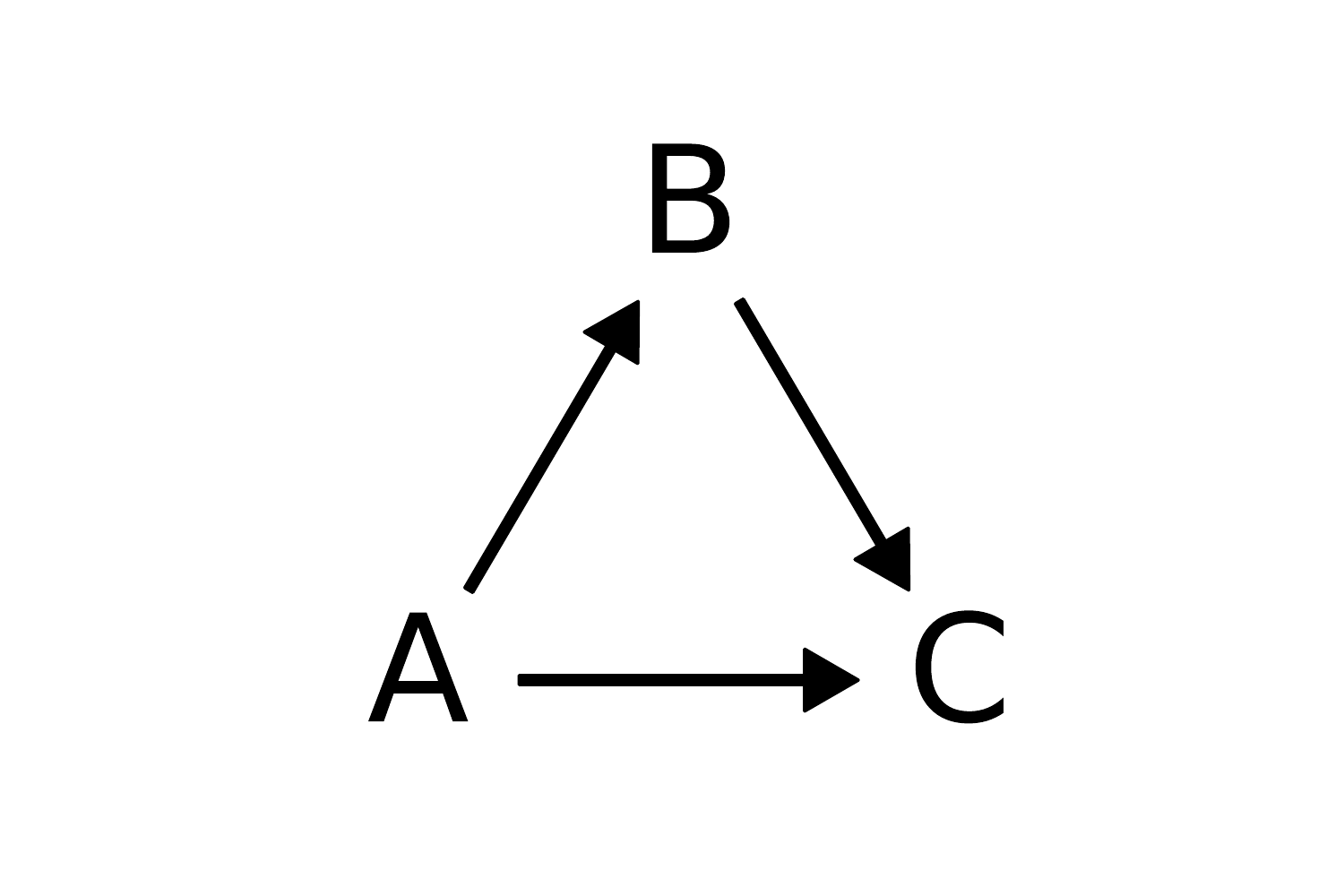}
  \caption{Feedforward loop}
 \end{subfigure}
 \caption{Minimal example motifs of interest in a catalytic activation network.}
 \label{fig:motifs}
\end{figure}


\begin{assumption}[Toehold orthogonality]
{\label{ass:orthogonality}}
We assume that there are sufficiently many toehold domain sequences that cross-talk between non-complementary domains is negligible. 
\end{assumption}

Since ACDC components share a long central domain, specificity is entirely driven through toehold recognition. As noted by Johnson, \cite{johnson_impossibility_2019}, there is a finite number of orthogonal short toehold domains that limits the size of the connected network that can be constructed. We assume that the network of interest does not violate this limit. We instead ask the realisability question at the level of domains.

\begin{definition}[Realisability]
\label{def:realisable}
 A catalytic activation network is realisable using the ACDC framework if a domain structure for a set of strands can be specified such that: 
 \begin{enumerate} 
    \item All network reactions are represented by a basic ACDC unit. 
    \item No two species possess domains that allow a \textit{2r-4} reaction (a full four-way strand exchange) that preserves the number of bound domains and is initiated by the binding of two available and complementary pairs of toeholds, unless the reaction is part of an ACDC unit representing a reaction in the network.
    \item No two strands can form an uninterrupted duplex of four bound domains or more.
    \item No two species (including all wastes, fuels and catalyst-substrate complexes) possess two available toehold pairs that could form a contiguous complementary duplex.
 \end{enumerate}
\end{definition}

Condition 2 rules out reactions that respect the architecture of ACDC, but which involve reactants that are not intended to interact. Condition 3 rules out strand exchange reactions that  allow an increase in the number of bound domains, which would sequester additional toeholds and violate the ACDC architecture (it is assumed that strand exchange reactions that would reduce the number of bound domains can be neglected). Condition 4 rules out the formation of 4-stranded complexes that can only dissociate by disrupting an uninterrupted two-toehold duplex. Contiguous duplexes of this kind are potentially stable, even if they cannot undergo strand exchange, and would potentially sequester components.

\begin{theorem}[Realisability with  activation implies realisability with deactivation ]{\label{thm:equivalence}}
If a catalytic activation network with purely activation reactions is realisable using the basic ACDC formalism, it is also realisable using the basic ACDC formalism if any subset of those reactions are converted to deactivation.  
\end{theorem}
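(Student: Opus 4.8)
The plan is to exhibit a single domain assignment that simultaneously realises the all-activation network and every network obtained from it by converting an arbitrary subset of activation edges to deactivation, exploiting the fact that each ACDC unit (Figure~\ref{fig:reaction}) is assembled entirely from reversible elementary steps -- toehold binding/unbinding, 4-way branch migration, and detachment of a 4-stranded junction -- so that reversing such a unit turns an activation reaction into the corresponding deactivation reaction \emph{without altering the set of molecular species present}.

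First I would pin down, at the strand level, the correspondence between the activation unit $A\to B$ and the deactivation unit $A\dashv B$. In the activation reaction $A' + B + F_{AB\to B'}\to A' + B' + W_{AB\to B'}$, the substrate engages the catalyst only through its upstream interface, which is identical for $B$ and $B'$ because the two activation states of a major species differ solely in the outer toehold of the state strand at the \emph{downstream} interface. Hence the first \textit{2r-4} sub-reaction $A' + B \to AB + W_{AB\to B'}$ produces an intermediate complex $AB$ that is independent of the activation state of $B$, while the waste carries $B$'s old downstream state toehold and the fuel carries the replacement toehold. Reading the whole unit in reverse, $A' + B' + W_{AB\to B'}\to A' + B + F_{AB\to B'}$, is therefore precisely the deactivation reaction, with $F_{AB'\to B}=W_{AB\to B'}$, $W_{AB'\to B}=F_{AB\to B'}$ and the same intermediate. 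Thus converting an activation edge to a deactivation edge introduces no new strand or complex: it only reassigns which member of an already-present fuel/waste pair plays the role of fuel.

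Then I would check the four conditions of Definition~\ref{def:realisable} for the flipped network, using the same domain assignment $\mathcal{D}$ that realises the all-activation network. Condition~1 holds verbatim for the unflipped edges, and holds for a flipped edge because its ACDC unit is the reverse traversal of a valid activation unit and each elementary step is reversible (Assumptions~\ref{ass:stability}--\ref{ass:specificity} are direction-symmetric). Conditions~3 and~4 concern only the set of species and their domain sequences, which is unchanged. For Condition~2, the key observation is that a \textit{2r-4} reaction is by definition reversible, so the collection of \textit{2r-4} reactions that arise as constituents of some network reaction's ACDC unit is identical for the two networks -- the constituent steps of $A\dashv B$ are exactly the reverses of those of $A\to B$ -- and therefore $\mathcal{D}$ enables no unintended \textit{2r-4} for the flipped network that it did not already enable for the all-activation one.

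The main obstacle is the middle step: rigorously establishing that reversing an activation unit yields exactly the deactivation unit as the framework defines it, i.e.\ that the intermediate complex is genuinely state-independent and that the fuel and waste of the two directions coincide as molecular species rather than merely having matching stoichiometry. This needs the detailed topology of major and ancillary species in Figure~\ref{fig:species} together with the step-by-step mechanism of Figure~\ref{fig:reaction}; once this identification is secured, inheritance of all four realisability conditions follows immediately from reversibility, and since the flipped subset was arbitrary the theorem follows.
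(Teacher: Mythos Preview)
Your proposal is correct and rests on the same key observation as the paper: in the mismatch-free ACDC formalism, a deactivation reaction is the activation reaction with the roles of fuel and waste interchanged, so the underlying domain specification and the set of species are unchanged. The paper states this in a single sentence; your version is considerably more detailed, explicitly verifying that the intermediate $AB$ is state-independent and checking each of the four realisability conditions, but the argument is the same in substance.
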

\begin{proof}
 A deactivation reaction is simply an activation reaction with the role of the fuel and waste reversed. Therefore a domain structure specification that realises a given network with activation reactions also realises all networks of the same structure.
\end{proof}

\subsection{Realisability of Motifs in the ACDC formalism}

\begin{figure}
 \centering
 \begin{subfigure}[t]{0.85 \textwidth}
  \centering
  \includegraphics[width=\textwidth]{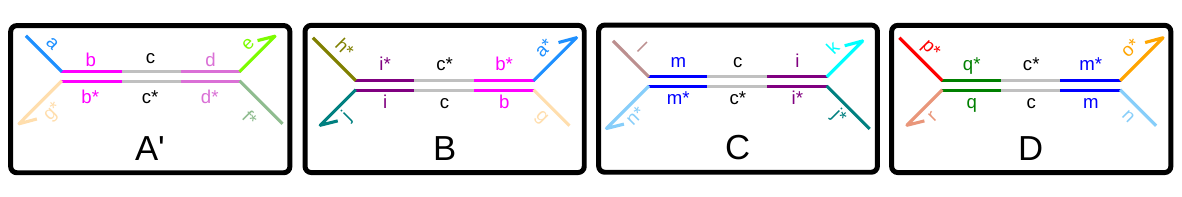}
  \caption{Major species}
 \end{subfigure}
 \begin{subfigure}[t]{0.9 \textwidth}
  \centering
  \includegraphics[width=\textwidth]{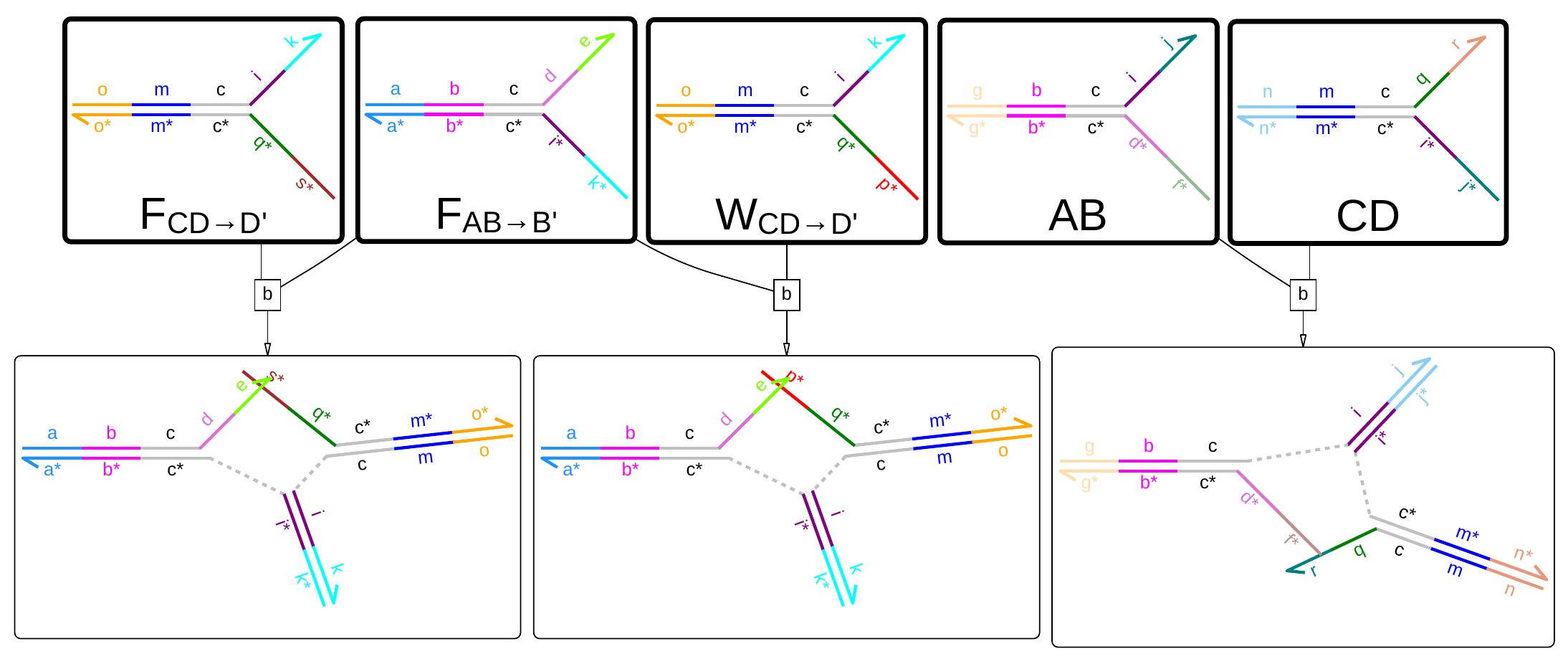}
  \caption{Ancillary species and unwanted reactions}
 \end{subfigure}
 \caption{Major species and a subset of ancillary species from an implementation of $A \rightarrow B \rightarrow C \rightarrow D$ using the ACDC formalism. Three unwanted reactions occur between the shown ancillary species.}
 \label{fig:leaks}
\end{figure}

Since there are infinitely many networks, we restrict our analysis to a set of motifs (generalised versions of the minimal examples depicted in Figure  \ref{fig:motifs}), establishing whether these motifs can be realised in isolation. 
The \textit{split}, \textit{integrate} 
 \textit{cascade}, \textit{self-activation}, \textit{bidirectional edge}, \textit{feedback loop} (FBL), and \textit{feedforward loop} (FFL) are chosen because of their importance in biology and synthetic biology \cite{alon_introduction_2019,gardner_construction_2000,elowitz_synthetic_2000}. The proofs of theorems not explicitly given in this section are provided in Appendix \ref{app:proofs}.

\subsubsection{Motifs Without Loops}

 Theorems \ref{thm:split} and \ref{thm:integrate} establish that arbitrarily complex split and integrate motifs are realisable. \par
\begin{theorem}[Split motifs are realisable]{\label{thm:split}}
 Consider the set of $N$ reactions
 \begin{align*}
   A \rightarrow B_1 \hspace{5mm}
   A \rightarrow B_2 \hspace{5mm}
 \hdots \hspace{5mm}
 &A \rightarrow B_N,
 \end{align*}
 in which all $B_i$ are distinct nodes from $A$.
 Such a network is realisable for any $N \geq 1$.
\end{theorem}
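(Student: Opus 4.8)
The plan is to establish realisability directly, by writing down a domain assignment and verifying the four conditions of Definition~\ref{def:realisable}. The case $N=1$ is just the basic unit of Figure~\ref{fig:reaction}, so I would assume $N\ge 2$ and, without loss of generality, that the $B_i$ are pairwise distinct. The construction is essentially forced by Condition~1: for each $i$ the reaction $A\to B_i$ must be a basic ACDC unit, so the upstream end of $B_i$---its two outer toeholds, the two adjacent hidden toeholds, and the long-domain orientations---must be the Watson--Crick complement of the downstream end of $A'$. Since $A'$ has a single downstream interface, this forces the upstream ends of all the $B_i$ to be identical. Every other domain is free, and I would give the two state-determining domains and the downstream hidden and identity toeholds of each node, together with the whole upstream end of $A$, fresh pairwise-non-complementary toehold sequences (available by Assumption~\ref{ass:orthogonality}), so that the only complementary pairs in the system are the ones Condition~1 forces for the units $A\to B_i$; in particular the state-strand and identity-strand toeholds that meet at any single interface are then non-complementary, which is the fact the argument leans on. The fuels $F_{AB_i\to B_i'}$, wastes $W_{AB_i\to B_i'}$ and catalyst--substrate intermediates $AB_i$ are then determined by the ACDC mechanism.

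Verifying the conditions splits into a routine part and a delicate one. Condition~1 holds by construction. For major species, Conditions~2 and~4 are easy: an unintended four-way exchange, or a stable two-toehold duplex, between two major species would require the downstream interface of one to be complementary to the upstream interface of another, but the only upstream interfaces present are the shared one of the $B_i$ and the orthogonal one of $A$, so the only matches are $A'$ with some $B_i$---exactly the intended units---and neither inactive $A$ nor any $B_i$ or $B_i'$ can stand in for $A'$, since their state-determining domains differ from that of $A'$. Condition~3 is equally routine: every strand consists of a long domain flanked by a hidden toehold and then an outer toehold on each side, and by the orthogonality choices the only three-domain complementary runs are the intended duplexes, none of which can extend to a fourth domain because the flanking domains are non-complementary.

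The real work---and the step I expect to be the main obstacle---is Conditions~2 and~4 for the ancillary species, because the forced sharing of the $B_i$ upstream end propagates into all of them: every $F_{AB_i\to B_i'}$ and $W_{AB_i\to B_i'}$ is a duplex of two state strands whose exposed interface has an ``$A$-half'' (inherited from $A$'s state strand) identical for all $i$, and every $AB_i$ is a duplex of two identity strands whose ``$A$-half'' is likewise common, so one might fear $\Theta(N^2)$ pairs of near-identical ancillary species that cross-react. The observation that rescues the construction is that this shared $A$-half contributes only a single complementary toehold pair to any pairing of a state-strand duplex with an identity-strand duplex: the hidden toeholds match, but the two outer toeholds at $A$'s upstream interface are non-complementary to each other by construction. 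The non-shared ``$B$-halves'' add a second complementary toehold pair only when the indices coincide, and two state-strand duplexes, or two identity-strand duplexes, share no complementary toeholds at all. Hence the only ancillary pairs presenting two complementary toehold pairs are $F_{AB_i\to B_i'}+AB_i$ and $W_{AB_i\to B_i'}+AB_i$---the forward and backward steps of the intended units---while every mismatched pair presents at most one, so by Assumption~\ref{ass:specificity} it simply dissociates and, for the same reason, cannot form a contiguous two-toehold duplex. This establishes Conditions~2 and~4, and hence the theorem, for every $N\ge 1$ within the orthogonal-toehold budget of Assumption~\ref{ass:orthogonality}.
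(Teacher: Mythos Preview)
Your approach is essentially the same as the paper's: specify the forced complementarities from Condition~1, make every remaining toehold pairwise non-complementary, and then verify Definition~\ref{def:realisable} by inspection. The paper's own proof simply states the domain requirements and asserts that ``it can be directly verified that Definition~\ref{def:realisable} is not violated by the major species and associated ancillary species''; you have actually carried out that verification, and your case analysis of the ancillary pairs---in particular the observation that the shared $A$-half contributes only the inner toehold pair $SH3(A')\diamond IH5(A)$ while the outer pair $SO3(A')$, $IO5(A)$ is non-complementary by Definition~\ref{def:species}, so that mismatched indices $i\ne j$ leave at most one complementary pair---is correct and is precisely the ``direct verification'' the paper leaves implicit.
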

\begin{theorem}[Integrate motifs are realisable]{\label{thm:integrate}}
 Consider the set of $N$ reactions
 \begin{align*}
   A_1 \rightarrow B \hspace{5mm}
   A_2 \rightarrow B \hspace{5mm}
 \hdots \hspace{5mm}
 &A_2 \rightarrow B,
 \end{align*}
 in which all $A_i$ are distinct nodes from $B$. Such a system is realisable
 for any $N \geq 1$.
\end{theorem}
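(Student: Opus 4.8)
The plan is to prove this constructively: exhibit an explicit assignment of toehold domains to $B$, to $B'$, and to every catalyst $A_i$ — which then fixes the domain content of all the induced fuel, waste and intermediate species — and then verify the four conditions of Definition~\ref{def:realisable} in turn. (The statement is already the pure-activation case, so Theorem~\ref{thm:equivalence} is not needed here, although it shows the same construction covers the mixed case as well.) The first step is to see what the motif forces. A basic ACDC unit for $A_i \to B$ begins with the downstream interface of the active catalyst $A_i'$ binding the upstream interface of $B$, so because all $N$ units share the single target $B$, the downstream-interface toeholds of every $A_i'$ are forced to coincide, pinned to the complement of $B$'s one upstream interface. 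This is essentially the only coincidence the motif demands: $B$'s downstream interface, the two activation-state toeholds with $d^{st}_B \neq d^{st}_{B'}$, and each $A_i$'s upstream interface and inactive-state toehold are all free, and by Assumption~\ref{ass:orthogonality} there are enough orthogonal toeholds to make every one of these free toeholds distinct from every other and from the forced ones. I would in particular choose them so that the catalysts are distinguished pairwise by a fresh toehold in their upstream interface that nothing else in the system complements.

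The second step is the verification. Condition~1 holds by construction: with $A_i'$'s downstream interface complementary to $B$'s upstream interface, unit $A_i \to B$ runs exactly as in the single-reaction analysis of Section~\ref{sec:species}. Conditions~2 and 3, restricted to the major species, are easy: the only mutually complementary pair of interfaces is the downstream interface of $A_i'$ with the upstream interface of $B$, since every other interface carries a fresh, nowhere-complemented toehold; and since the single long central domain is always buried inside a duplex (Assumption~\ref{ass:stability}), no two strands expose four contiguous complementary domains. The work is in the ancillary species. Unit $i$ emits a waste $W_i$, an intermediate $C_i$ and a fuel $F_i$, whose two-strand contents are, respectively, the catalyst identity strand with the old (inactive) substrate state strand, the catalyst active-state strand with the substrate identity strand, and the catalyst identity strand with the new (active) substrate state strand. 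One must enumerate the ordered pairs among these $3N$ species together with $B$ and $B'$ and check that the only available interfaces which are simultaneously complementary and homologous in the migration region are the intended pairs $F_i$ and $C_i$ — whose binding begins the second \textit{2r-4} reaction of unit $i$ and hence is sanctioned by condition~2 of Definition~\ref{def:realisable} — and likewise that no two available interfaces form a stable two-toehold duplex except in that sanctioned way; condition~3 again reduces to orthogonality together with the burial of the long domain.

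The main obstacle is this last enumeration, because the units are coupled through $B$: every waste $W_i$ contains the substrate's inactive state strand, every fuel $F_i$ its active state strand, and every intermediate $C_i$ the substrate identity strand — and, worse, the portion of the shared strand that sits at a species' available interface is identical for all $i$. So one must rule out a spurious four-way exchange, or a spurious stable duplex, between $W_i$ and $W_j$, between $W_i$ and $F_j$ for $i \neq j$, and so on. The argument closes because in any such pair at least one side presents a fresh $A_i$- or $A_j$-specific outer toehold that is complemented nowhere, so the two species cannot bind; and in the one remaining family of pairs — those whose available interfaces consist entirely of shared, forced toeholds, such as $W_i$ with $W_j$ or $C_i$ with $C_j$ — the shared available interface can be shown not to be self-complementary, which is exactly the condition that also makes each individual $W_i$, $C_i$ and $F_i$ a stable duplex rather than a hairpin. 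Once this finite case analysis is discharged, all four conditions of Definition~\ref{def:realisable} hold, for every $N \geq 1$.
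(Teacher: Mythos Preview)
Your approach matches the paper's: assign domains satisfying the species and reaction constraints, make every unconstrained toehold orthogonal by Assumption~\ref{ass:orthogonality}, then verify Definition~\ref{def:realisable} directly; the paper's own proof compresses all of this into ``it can be directly verified'', so your explicit case analysis for the ancillary species actually goes further than the paper does.

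Two small corrections worth making before you execute the plan. First, the strand contents you list for the ancillary species are swapped on the catalyst side: in the ACDC mechanism the waste and fuel each carry the two \emph{state} strands (of $A_i'$ together with the state strand of $B$ or $B'$ respectively), while the intermediate $A_iB$ carries the two \emph{identity} strands---compare the enumeration in the proof of Lemma~\ref{lemma:leak1}, where the available toehold pairs are $S3(A_i'),\,S5(B)$ in $W_i$; $I5(A_i),\,I3(B)$ in $A_iB$; and $S3(A_i'),\,S5(B')$ in $F_i$. This does not break your argument, because each ancillary interface still decomposes into one catalyst-specific pair (fresh per $i$) and one substrate-derived pair (shared across $i$), which is exactly the dichotomy your case analysis exploits. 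Second, your treatment of condition~3 conflates ``exposed domains'' with strand-level complementarity: condition~3 of Definition~\ref{def:realisable} asks whether any two strands in the system could form a four-domain duplex, regardless of what is currently hybridised. The correct observation is that the forced complementarities of Definition~\ref{def:domaincomp} never extend beyond three contiguous domains between any two strands in this motif, so condition~3 holds once the free toeholds are chosen orthogonal.
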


Although all networks consist of simply combining split and integrate motifs for each node, proving that all split and integrate motifs are realisable in isolation does not prove that any network assembled from them is realisable.  We therefore explore other simple motifs. For example, consider the \textit{cascade} motif (a 3-component example is illustrated in Figure \ref{fig:motifs}). 

\begin{lemma}[The ancillary species of a catalyst's upstream reactions and substrate's  downstream reactions cause leak reactions]
Consider a reaction $B \rightarrow C$, and further assume that $A \rightarrow B$ and $C \rightarrow D$ for at least one species $A$ and at least one species $D$. Then  $AB$ and $CD$, and $F_{AB\rightarrow B^\prime}$ and $F_{CD\rightarrow D^\prime}$/$W_{CD\rightarrow D^\prime}$ possess two available toehold pairs that could form a contiguous complementary duplex. No other violations of realisability occur.
\label{lemma:leak1}
\end{lemma}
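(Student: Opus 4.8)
The plan is to exhibit one concrete domain‑level realisation and then reduce the claim to a bounded case analysis. First I would fix the structure the framework forces. Write a major species $X$ as a state strand $S_X$ and an identity strand $I_X$, each of the form (outer toehold)–(inner toehold)–(long domain)–(inner toehold)–(outer toehold), with the single long domain $\ell$ shared by all species and self‑complementary (this is what makes the duplexes that fall out of a 2r‑4 reaction into genuine two‑stranded ancillary species). Denote the outer toeholds of $S_X$ by $a^X$ (the state domain, at the $5'$ end) and $b^X$, those of $I_X$ by $\bar a^X,\bar b^X$, and the inner toeholds by $g^X,f^X$, with complements $\hat g^X=(g^X)^*,\ \hat f^X=(f^X)^*$ on $I_X$. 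A reaction $X\to Y$ forces the downstream interface of $X'$ to be the reverse complement of the upstream interface of $Y$, i.e.
\[
 b^Y=(a^{X'})^*,\qquad \bar b^Y=(\bar a^X)^*,\qquad g^X=(f^Y)^*,
\]
the last equality being the inner toehold that is swapped by branch migration. All remaining toeholds are free and, by Assumption~\ref{ass:orthogonality}, may be taken pairwise non‑complementary unless one of these equalities forces otherwise; in particular $a^Y\neq a^{Y'}$ and the two outer toeholds at any interface of a major species are non‑complementary. For the situation of the lemma this pins down every species: the three edges are realised by standard ACDC units, with ancillary species $W_{XY\to Y'}=(S_X^{X'},S_Y)$, $F_{XY\to Y'}=(S_X^{X'},S_Y^{Y'})$ (differing from $W$ only in $a^Y$ versus $a^{Y'}$), and $XY=(I_X,I_Y)$, for $XY\in\{AB,BC,CD\}$.

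Next I would read off the single available interface of each ancillary species: for $XY=(I_X,I_Y)$ it is the contiguous pair $(\hat f^X,\bar b^X)$ on $I_X$ together with $(\hat g^Y,\bar a^Y)$ on $I_Y$; for $W_{XY\to Y'}$ it is $(f^X,b^X)$ on $S_X^{X'}$ and $(g^Y,a^Y)$ on $S_Y$, with $a^Y$ replaced by $a^{Y'}$ in the fuel. The positive part of the lemma is then immediate. Inside $AB$ the strand $I_B$ exposes $\hat g^B$ next to $\bar a^B$; inside $CD$ the strand $I_C$ exposes $\hat f^C$ next to $\bar b^C$; and the edge $B\to C$ gives exactly $\bar a^B=(\bar b^C)^*$ and (from $g^B=(f^C)^*$) also $\hat g^B=(\hat f^C)^*$, so these two toehold pairs form a contiguous complementary duplex — a violation of Definition~\ref{def:realisable}(4). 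Likewise $F_{AB\to B'}$ exposes $g^B$ next to $a^{B'}$ on $S_B^{B'}$, while $F_{CD\to D'}$ and $W_{CD\to D'}$ both expose $f^C$ next to $b^C$ on $S_C^{C'}$, and the edge $B\to C$ gives $a^{B'}=(b^C)^*$ and $g^B=(f^C)^*$; again a contiguous complementary duplex. I would also point out why it is the fuel and not the waste of the first reaction that appears: the waste $W_{AB\to B'}$ exposes the \emph{inactive} domain $a^B$ next to $g^B$, and $a^B$ is constrained by no edge.

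It then remains to show that nothing else goes wrong, and this is the bulk of the work. For Definition~\ref{def:realisable}(4) I would run through the $\binom{9}{2}$ pairs of ancillary species taken from the three units; for each pair, and for each of its (at most two) candidate contiguous overlaps, one checks that one of the two required toehold equalities fails. The organising fact is that an inner‑toehold equality can hold only between the $g$‑toehold and the $f$‑toehold of the \emph{same} reaction, and an outer‑toehold equality of the shape $\bar a^X=(\bar b^Y)^*$ or $a^{X'}=(b^Y)^*$ can hold only when $X\to Y$ is an edge; since the only edge linking the data of the $A\to B$ unit to that of the $C\to D$ unit is $B\to C$, the only surviving overlaps are the two already found (each, of course, present as one clash between the relevant interfaces). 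Conditions (1)–(3) are handled in the same spirit: (1) holds by construction; (3) reduces to the statement that no pair of adjacent available toeholds in a single species can zip up, which follows from orthogonality of the free toeholds together with the non‑complementarity of the two outer toeholds at each major‑species interface; and (2) requires checking that no pair of interfaces — outer toeholds for major species, the inner‑plus‑outer toeholds for ancillary species — is complementary on \emph{both} sides of a potential branch‑migration region in a way that is not already a forward or reverse step of one of the three units, which again comes down to orthogonality and the single‑edge nature of the forced equalities, using that it is the inner toeholds of the ancillary species that seed the second 2r‑4 of each unit, and that the $AB$/$CD$ and $F_{AB\to B'}$/$F_{CD\to D'}$ overlaps sit entirely on one interface and hence cannot seed a 2r‑4 at all.

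The main obstacle is precisely this exhaustive verification rather than any single step: the conceptual content — that the leak is localised to the $I_B$/$I_C$ and $S_B^{B'}$/$S_C^{C'}$ interfaces and is forced by the shared edge $B\to C$ — is short, but making ``no other violations occur'' airtight means being careful about the two possible orientations of every candidate overlap (which inner toehold sits proximal to the duplex), about the distinction between the edge‑constrained active domain $a^{Y'}$ and the unconstrained inactive domain $a^Y$, and about keeping condition~(4) clashes (a stable contiguous duplex) separate from condition~(2) clashes (a genuine 2r‑4), which behave differently according to whether the matching toeholds lie on a single interface or on opposite sides of a long domain.
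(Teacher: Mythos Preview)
Your overall strategy---fix a domain assignment satisfying Definitions~\ref{def:species} and~\ref{def:domaincomp} with all unconstrained toeholds orthogonal, list the available contiguous toehold pair on each of the nine ancillary species, and run the $\binom{9}{2}$ pairwise check for condition~4 while dispatching conditions~1--3 separately---is exactly what the paper does, only spelled out in more detail than the paper's ``verified by inspection''. Your identification of the two leak families, and of the asymmetry whereby $F_{AB\to B'}$ but not $W_{AB\to B'}$ participates (because the waste exposes the inactive outer toehold $SO5(B)$ rather than $SO5(B')$), is correct and matches the paper.

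There is, however, a factual error in your setup: the long central domain is \emph{not} self-complementary. In the ACDC framework the long domains alternate along a cascade: $IL(A)\diamond IL(B)$ and $IL(B)\diamond IL(C)$ force $IL(A)=IL(C)$ and $IL(B)=IL(D)$, but $IL(A)\ne IL(A)^*$. A self-complementary central domain is explicitly excluded (see Theorem~\ref{thm:odd_loops} and the remark following it) because it would invite hairpins. The ancillary complex $AB=(I_A,I_B)$ is a genuine duplex not because $\ell$ is palindromic but because the edge $A\to B$ forces $IL(A)\diamond IL(B)$. For the toehold-level case analysis that drives this lemma the error is largely cosmetic, but it contaminates your treatment of condition~3: under your assumption every pair of strands shares a complementary long domain, so reducing condition~3 to ``no pair of adjacent available toeholds in a single species can zip up'' is not sound---you would have to rule out many spurious candidate four-domain duplexes (e.g.\ $I_A$ against $I_C$, or any strand against a second copy of itself) that simply do not arise once the correct alternating structure is used. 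Replace the palindromic $\ell$ by the alternating $\ell/\ell^*$ and your plan goes through as the paper's does.
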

An example is shown in Fig.\ref{fig:leaks}. The essence of the problem is that both the inner and outer toehold domains from the downstream end of $B^\prime$ are available in $AB$ and $F_{AB\rightarrow B^\prime}$, and the inner and outer toehold domains from the upstream end of $C$ are available in $CD$, $F_{CD\rightarrow D^\prime}$ and $W_{CD\rightarrow D^\prime}$. Since the downstream end of $B$ is complementary to the upstream end of $C$, the result is that the species can bind to each other strongly.

\begin{theorem}[Cascades with $N\geq 4$ components are not realisable]
Consider the set of $N$ reactions
$A_1 \rightarrow A_{2},\, A_{2} \rightarrow A_{3} \,...\, A_{N-1}\rightarrow A_N$, in which all $A_i$ are distinct. For $N\geq 4$, this network is not realisable.  
\label{thm:cascade}
\end{theorem}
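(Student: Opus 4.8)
The plan is to derive the theorem from Lemma~\ref{lemma:leak1}, of which it is essentially a corollary. The observation that makes this work is that a cascade of length $N\ge 4$ contains, as a local sub-pattern, precisely the configuration whose realisability that lemma obstructs: the reaction $A_2\rightarrow A_3$ has an upstream reaction $A_1\rightarrow A_2$ acting on its catalyst $A_2^\prime$ and a downstream reaction $A_3\rightarrow A_4$ acting on its substrate $A_3$. So the first step is simply to instantiate Lemma~\ref{lemma:leak1} with $B:=A_2$, $C:=A_3$, $A:=A_1$, $D:=A_4$; this is legitimate because the nodes of a cascade are pairwise distinct and, for $N\ge 4$, all three reactions $A_1\rightarrow A_2$, $A_2\rightarrow A_3$, $A_3\rightarrow A_4$ are present. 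The lemma then states that in \emph{every} ACDC realisation the ancillary species $A_1A_2$ and $A_3A_4$ --- and likewise the pairs $F_{A_1A_2\rightarrow A_2^\prime}$ with $F_{A_3A_4\rightarrow A_4^\prime}$ and with $W_{A_3A_4\rightarrow A_4^\prime}$ --- each expose an available toehold pair, and these pairs are mutually complementary and adjacent, so they could close into a contiguous duplex. That is exactly the configuration excluded by condition~4 of Definition~\ref{def:realisable}; hence no assignment of domains can satisfy conditions~1--4 simultaneously, and the cascade is not realisable for $N\ge 4$.

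Two small points I would record in passing. First, the bound is sharp for this argument: when $N=3$ the cascade has only the reactions $A_1\rightarrow A_2$ and $A_2\rightarrow A_3$, and neither of them has both an upstream reaction on its catalyst and a downstream reaction on its substrate --- $A_1\rightarrow A_2$ lacks the former and $A_2\rightarrow A_3$ lacks the latter --- so Lemma~\ref{lemma:leak1} does not apply and nothing is concluded, consistent with the statement being restricted to $N\ge 4$. Second, although the theorem is phrased with activation edges only, the conclusion does not depend on the signs of the edges, since a deactivation is just an activation with fuel and waste interchanged (cf.\ Theorem~\ref{thm:equivalence}); it therefore suffices to treat the all-activation cascade, exactly as stated.

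I expect essentially none of the work to sit in this theorem --- it is a one-line consequence of Lemma~\ref{lemma:leak1} --- and essentially all of it to sit in the lemma, which I would want fully in hand before writing this proof. The substantive point there is that the toehold clash is \emph{unavoidable}: because the reaction $B\rightarrow C$ is realised (condition~1) by $B^\prime$ catalysing $C$ through the binding of the downstream interface of $B^\prime$ to the upstream interface of $C$, the branch-migration mechanism of a basic ACDC unit forces the outer and inner toeholds at the downstream end of $B^\prime$ to be complementary, domain for domain, to those at the upstream end of $C$; and, tracing the two \textit{2r-4} steps that define the units for $A\rightarrow B$ and $C\rightarrow D$, exactly those toeholds remain exposed in $AB$ (and in $F_{AB\rightarrow B^\prime}$) on one side and in $CD$ (and in $F_{CD\rightarrow D^\prime}$, $W_{CD\rightarrow D^\prime}$) on the other. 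Pinning down this structure --- together with the lemma's complementary claim that nothing else goes wrong, which is not needed here --- is the real content; once it is granted, the cascade theorem is immediate.
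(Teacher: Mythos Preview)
Your proposal is correct and matches the paper's approach exactly: the paper's proof is the single line ``A direct consequence of Lemma~\ref{lemma:leak1} and Definition~\ref{def:realisable}.'' Your instantiation $A=A_1$, $B=A_2$, $C=A_3$, $D=A_4$ and appeal to condition~4 of Definition~\ref{def:realisable} is precisely the intended argument, and your ancillary remarks about sharpness at $N=3$ and the irrelevance of edge signs (via Theorem~\ref{thm:equivalence}) are consistent with the paper as well.
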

\begin{proof}
 A direct consequence of Lemma \ref{lemma:leak1} and Definition \ref{def:realisable}.
\end{proof}

\begin{theorem}[Cascades with $N\leq 3$ components are realisable]{\label{thm:cascade2}}
The set of reactions
$A_1 \rightarrow A_{2},\, A_{1} \rightarrow A_{2},\, A_{2}\rightarrow A_3$, in which all $A_i$ are distinct, is realisable.    
\end{theorem}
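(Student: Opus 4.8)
The plan is to prove Theorem~\ref{thm:cascade2} by explicit construction: exhibit a domain-level specification for the six major species $A_1,A_1',A_2,A_2',A_3,A_3'$ and for every ancillary species it induces --- the fuels $F_{A_1A_2\to A_2'}$, $F_{A_2A_3\to A_3'}$, the wastes $W_{A_1A_2\to A_2'}$, $W_{A_2A_3\to A_3'}$, and the intermediates $A_1A_2$, $A_2A_3$ --- and then verify the four conditions of Definition~\ref{def:realisable} for it directly. The two reactions $A_1\to A_2$ and $A_2\to A_3$ force exactly two complementarities at the domain level: the downstream interface of $A_1'$ must be complementary to the upstream interface of $A_2$, and the downstream interface of $A_2'$ must be complementary to the upstream interface of $A_3$ (together with the single-toehold ``$a$''-domain changes that distinguish each $A_i$ from $A_i'$). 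Every other toehold --- the upstream interface of $A_1$, the downstream interface of $A_3$, and all the remaining state markers --- is free, and I would give each a fresh, mutually orthogonal sequence, which is legitimate by Assumption~\ref{ass:orthogonality}. Condition~1 of Definition~\ref{def:realisable} then holds by construction, and by Theorem~\ref{thm:equivalence} it does not matter whether each reaction is read as an activation or a deactivation.

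The real content is verifying Conditions~2--4: that no pair among the twelve species can undergo an unintended full four-way exchange, can form a contiguous bound duplex of four or more domains, or exposes two available complementary toehold pairs that could zip into a contiguous duplex. Because every toehold not forced by the two reactions is chosen fresh, the only pairs that can interact at all are (i) the intended reactant combinations of each ACDC unit, and (ii) pairs of ancillary species that overlap through the shared node $A_2$: the intermediate $A_1A_2$ and fuel $F_{A_1A_2\to A_2'}$ expose the downstream toeholds of $A_2$, while $A_2A_3$, $F_{A_2A_3\to A_3'}$ and $W_{A_2A_3\to A_3'}$ expose the upstream toeholds of $A_2$. I would tabulate precisely which of the five domains on each constituent strand remain available in each of these six ancillary complexes after the successive \textit{2r-4} reactions, and then run down the resulting short list of toehold comparisons; in each case the two complexes share at most a single complementary toehold pair, which by Assumption~\ref{ass:specificity} cannot trigger branch migration and is too short to constitute a Condition~4 violation.

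The structural reason this works for $N=3$ but fails for $N\ge4$ is exactly the mechanism behind Lemma~\ref{lemma:leak1}: the leak there requires a reaction $B\to C$ that is \emph{both} fed by some reaction $A\to B$ \emph{and} followed by some reaction $C\to D$, so that one ancillary species exposes the full (inner-plus-outer) downstream end of $B$ and another exposes the full upstream end of $C$, which are complementary because of the $B\to C$ reaction. In the cascade $A_1\to A_2\to A_3$ no middle reaction has this property --- $A_1\to A_2$ has nothing upstream of $A_1$, and $A_2\to A_3$ has nothing downstream of $A_3$ --- so the offending contiguous two-toehold duplex is never assembled. I would make this explicit by observing that the full downstream end of $A_2$ is exposed only by $A_1A_2$/$F_{A_1A_2\to A_2'}$, that the full upstream end of $A_2$ is exposed only by $A_2A_3$-type species, and that $A_2$'s downstream and upstream ends are built from independent fresh toeholds and hence are not mutually complementary.

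The main obstacle I anticipate is purely combinatorial bookkeeping: the intermediates and fuels each inherit available toeholds from two different major species and from two successive strand exchanges, so it is easy to miscount which domains end up exposed, and the whole argument rests on that table being correct. Once it is written down accurately, Conditions~2--4 collapse to inspecting a handful of toehold comparisons, all of which pass because the ``chaining'' between the two cascade reactions only ever leaves a single shared toehold pair available, rather than the two contiguous pairs that Lemma~\ref{lemma:leak1} would require for a leak.
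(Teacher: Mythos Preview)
Your proposal is correct and follows essentially the same reasoning as the paper: the paper's proof is the one-line observation that the result is ``a direct consequence of Lemma~\ref{lemma:leak1} and Definition~\ref{def:realisable},'' relying on the ``no other violations of realisability occur'' clause of Lemma~\ref{lemma:leak1} to conclude that a 3-cascade (which lacks both an upstream $A$ for the first reaction and a downstream $D$ for the second) is clean. You simply unwind that appeal, redoing for $N=3$ the explicit domain-tabulation that the proof of Lemma~\ref{lemma:leak1} performs for $N=4$; this is more self-contained but not a different argument.
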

\begin{proof}
 A direct consequence of Lemma \ref{lemma:leak1} and Definition \ref{def:realisable}.
\end{proof}

\begin{theorem}[Long cascades are non-realisable due to a particular type of leak reaction only]
Consider the set of $N$ reactions
$A_1 \rightarrow A_{2},\, A_{2} \rightarrow A_{3} \,...\, A_{N-1}\rightarrow A_N$, in which all $A_i$ are distinct. This network would be realisable if reactions between ancillary species $A_iA_{i+1}$ and $A_{i+2}A_{i+3}$, and  $F_{A_{i}A_{i+1}\rightarrow A_{i+1}^\prime}$ and $F_{A_{i+2}A_{i+3}\rightarrow A_{i+3}^\prime}$/$W_{A_{i+2}A_{i+3}\rightarrow A_{i+3}^\prime}$,  were absent. 
\label{thm:cascade3}
\end{theorem}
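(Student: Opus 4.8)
The plan is to prove the statement constructively. I would exhibit one domain-level assignment of strands that realises every reaction of the cascade $A_1 \rightarrow A_2 \rightarrow \cdots \rightarrow A_N$ by a basic ACDC unit, and then show that, for this assignment, the \emph{only} failures of Definition~\ref{def:realisable} are the Condition~4 clashes named in the statement; deleting precisely those ancillary interactions then yields a design meeting all four conditions, which is the meaning of ``would be realisable''. The assignment I have in mind is the canonical ``block'' construction: each reaction $A_i \rightarrow A_{i+1}$ is allotted its own fresh toehold sequences, the long central domain is shared by all species as elsewhere in the paper, and toeholds are reused across reactions only where the ACDC mechanism forces it --- the downstream interface of the active catalyst $A_i^\prime$ must be the Watson--Crick complement of the upstream interface of the substrate $A_{i+1}$, and it is the change of $A_{i+1}$'s outer downstream state-strand toehold on activation that links reaction $i$ to reaction $i+1$.

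The first step is bookkeeping. For each $A_i$ I would name the identity strand and the state strand (hence the two conformations $A_i$ and $A_i^\prime$, differing only in that one outer toehold), and for each reaction the fuel $F_{A_iA_{i+1}\rightarrow A_{i+1}^\prime}$, the waste $W_{A_iA_{i+1}\rightarrow A_{i+1}^\prime}$ and the catalyst--substrate complex $A_iA_{i+1}$; then I would tabulate, for each species, which toehold domains are available and which are hidden. The single fact that carries the argument is a \emph{locality} statement: by construction the toehold sequences occurring in the species of reaction $i$ are drawn only from the binding interfaces of reactions $i-1$, $i$ and $i+1$, so species of reactions $i$ and $j$ share no toehold, and no complementary pair of toeholds, once $|i-j| \geq 3$. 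Proving this carefully --- in particular pinning down which toeholds each ancillary species leaves available, which is where the coupling between consecutive reactions lives --- is the technical heart of the argument.

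Granting locality, the four conditions reduce to finite local checks. Condition~1 holds by construction. For Condition~3 (no uninterrupted four-domain duplex), locality restricts attention to strands whose shared central domains could align with both flanking inner toeholds also complementary, and enumeration shows the only such duplexes are the intended identity--state pairings inside major species and inside fuels and wastes. For Condition~2 (no stray \textit{2r-4} reaction), locality restricts attention to species from reactions at distance at most one, and the case enumeration (major/major, major/ancillary, ancillary/ancillary, across reactions $i$ and $i \pm 1$) leaves only the two \textit{2r-4} steps that constitute the ACDC unit itself. For Condition~4, locality places every relevant pair of species inside some four-component window $A_i \rightarrow A_{i+1} \rightarrow A_{i+2} \rightarrow A_{i+3}$; applying Lemma~\ref{lemma:leak1} to that window (with $A = A_i$, $B = A_{i+1}$, $C = A_{i+2}$, $D = A_{i+3}$) shows its only Condition~4 violations are the pairs $(A_iA_{i+1},\, A_{i+2}A_{i+3})$ and $(F_{A_iA_{i+1}\rightarrow A_{i+1}^\prime},\, F_{A_{i+2}A_{i+3}\rightarrow A_{i+3}^\prime}/W_{A_{i+2}A_{i+3}\rightarrow A_{i+3}^\prime})$. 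Taking the union over $i = 1, \dots, N-3$ gives exactly the set quoted in the theorem, and locality guarantees that nothing else can occur. (For $N \leq 3$ there are no such windows, which is consistent with Theorem~\ref{thm:cascade2}.)

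I expect the main obstacle to be making locality watertight and then not dropping a case in the Condition~2 and Condition~4 enumerations. The delicate species are the ancillary ones $A_iA_{i+1}$, $F_{A_iA_{i+1}\rightarrow A_{i+1}^\prime}$ and $W_{A_iA_{i+1}\rightarrow A_{i+1}^\prime}$: each presents two \emph{contiguous} available toeholds on each strand at its single interface, so two such interfaces have more ways to line up into a contiguous duplex, or to seed a four-way junction, than the split pairs of available toeholds on a major species --- and it is precisely these contiguous pairs, carried over from reaction $i$ and meeting the mirror-image contiguous pairs from reaction $i+2$, that Lemma~\ref{lemma:leak1} already flags. The remaining work is to show that the block construction produces no \emph{further} coincidence of this type; once the available/hidden tables are written out, that check is routine but must be carried out exhaustively.
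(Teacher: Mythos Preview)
Your proposal is correct and follows essentially the same route as the paper: exhibit the canonical domain assignment with toeholds shared only where the mechanism forces it, establish the locality principle that toeholds from reaction $i$ appear only in reactions at distance at most two, reduce all realisability checks to a four-component window, and invoke Lemma~\ref{lemma:leak1} on that window. The paper's version is terser---it collapses your separate treatment of Conditions~2--4 into a single appeal to Lemma~\ref{lemma:leak1}, whose proof already verifies that Conditions~2 and~3 are not violated in the $N=4$ window---but the logical structure is identical.
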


The result of Theorem \ref{thm:cascade} is discouraging, since cascades are a major feature of kinase networks \cite{marshall_map_1994, herskowitz_map_1995}. Nonetheless, we will continue the analysis of remaining motifs, and present a potential solution in Section \ref{sec:mismatch}.

\subsubsection{Motifs With Loops}
A network possesses a loop if it is possible to traverse a path that begins and ends at the same node without using the same edge twice. For the purposes of this classification, a given (directed) edge can be traversed in either direction. Loops are common components of natural networks, providing the possibility of oscillation, bistability and filtering \cite{de_ronde_multiplexing_2014, alon_introduction_2019}.

\begin{theorem}[Loops of odd length are not realisable]
\label{thm:odd_loops}
Consider a system of reactions $A_1 \leftrightarrow A_2 \leftrightarrow A_3 \hdots A_{N-1} \leftrightarrow A_1$, where $\leftrightarrow$ indicates a catlytic activation in either direction. This network is not realizable if $N$ is odd, unless the long central domain is self-complementary.
\end{theorem}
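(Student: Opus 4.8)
The plan is to extract from the ACDC mechanism a constraint that is precisely two-colourability of the network's underlying graph, and then invoke the elementary fact that a graph is two-colourable if and only if it contains no cycle of odd length. First I would attach a colour to every major species. By Assumption~\ref{ass:stability} and the anatomy of a major species (Fig.~\ref{fig:species}), its identity and state strands are held together in part by the single long central domain that every ACDC component shares; hence one of the two strands carries that domain, call it $m$, and the other carries its complement $m^\ast$. Provided $m$ is not self-complementary, so that $m\neq m^\ast$, this lets me define the colour $c(X)\in\{m,m^\ast\}$ of a major species $X$ to be the long central domain that appears on its identity strand. I would stress that this colour is \emph{forced}, not a free design choice: the framework supplies only one long domain, so every identity strand carries either $m$ or $m^\ast$.

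Next I would show that every network edge joins two oppositely-coloured nodes. Take a reaction $X\to Y$ realised by a basic ACDC unit (Condition~1 of Definition~\ref{def:realisable}) and follow its first \emph{2r-4} step (Fig.~\ref{fig:reaction}): once $X'$ and $Y$ bind and branch migration runs inward through the hidden toeholds and the long central domain, the four-stranded complex separates into two ancillary duplexes, in which the base pairs lie between the two identity strands and between the two state strands respectively. Because ancillary species are duplexes whose two strands are held together by the long central domain (Fig.~\ref{fig:species}), the identity strands of $X$ and $Y$ must carry complementary long domains; equivalently, the branch migration can only run through the long domain if the displaced and invading long domains are homologous, which is exactly this complementarity. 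Hence $c(X)\neq c(Y)$. The same holds for the reaction $Y\to X$ since the relation is symmetric, and a deactivation edge uses the same strands and branch migration as the corresponding activation (cf.\ Theorem~\ref{thm:equivalence}), so the orientation and sign of the edge are irrelevant. Thus any realisation of the network induces a proper two-colouring of its underlying undirected graph.

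To finish, I would close with graph theory: a graph admits a proper two-colouring exactly when it is bipartite, i.e.\ when it has no cycle of odd length. A loop of odd length — which is what the stated loop is when $N$ is odd, each $\leftrightarrow$ contributing one undirected edge — therefore admits no proper two-colouring, contradicting the previous paragraph, so such a network cannot be realised. If instead the long central domain is self-complementary, then $m=m^\ast$, the two colours coincide, the first step yields no constraint, and the argument correctly breaks down — which is the exception recorded in the statement.

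The graph-theoretic closing step is routine; the substantive step is the second paragraph — reading off from the four-way branch-migration mechanism exactly which strands are paired in each of the two ancillary products, and hence that neighbouring species must carry complementary long central domains. The only other care needed is to confirm that the colouring is genuinely dictated by the framework rather than chosen freely, and that a self-complementary long central domain is the unique way this parity obstruction can vanish.
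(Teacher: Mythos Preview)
Your proposal is correct and follows essentially the same idea as the paper: the long central domain must alternate between a sequence and its complement on the identity strands of adjacent major species, which is impossible around an odd cycle unless the domain is self-complementary. You have recast this alternation as a proper two-colouring and invoked bipartiteness, which is a more formal packaging of exactly the paper's two-sentence argument; the substantive content is the same.
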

\begin{proof}
 ACDC circuits require that the long central domain alternates between a sequence and its complement in the identity strands of catalysts and their substrates. If $N$ is odd, then the sequence must be self-complementary for this alternation to happen.
\end{proof}
Introducing a self-complementary central domain is a strategy that risks a competition between duplexes and single-stranded hairpins. We do not consider it further.  

\begin{theorem}[Self interactions and bidirectional edges are not realisable]
{\label{thm:self_bi}}
Consider a system of reactions $A_1 \rightarrow A_2 \rightarrow A_3 \hdots A_{N-1} \rightarrow A_1$. This network is not realisable if $N\leq2$.
\end{theorem}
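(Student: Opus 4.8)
The plan is to reduce both cases in the range $N\le 2$ --- the self-loop $A_1\to A_1$ ($N=1$) and the bidirectional edge given by the two reactions $A_1\to A_2$ and $A_2\to A_1$ ($N=2$) --- to a single structural fact: once a node is simultaneously the catalyst and the substrate of reactions of the relevant shape, the domain complementarities forced by the basic ACDC unit make two identity strands exact reverse complements of one another, so the associated catalyst--substrate complex collapses into an uninterrupted five-domain duplex, contradicting condition~3 (equivalently condition~4) of Definition~\ref{def:realisable}. The first task is to record, for a single reaction $X\to Y$ realised by a basic ACDC unit, the constraints it places on the \emph{identity} strands (the state strand being irrelevant here, since it alone carries the activation label). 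Labelling the five domains of the identity strand of major species $Z$, read $5'\to 3'$, as $a_Z\,c_Z\,L_Z\,e_Z\,b_Z$ (upstream outer toehold, upstream inner toehold, long domain, downstream inner toehold, downstream outer toehold), the recognition that initiates the first \textit{2r-4} step --- pairing the downstream interface of the catalyst with the upstream interface of the substrate --- forces $b_X=a_Y^{*}$, while the following 4-way branch migration preserves base pairs only if $e_X=c_Y^{*}$ and $L_X=L_Y^{*}$ (this last being the ``alternation'' already noted in the proof of Theorem~\ref{thm:odd_loops}). Hence the intermediate complex $XY$ is $\{I_X,I_Y\}$ bound over the contiguous block $b_X\,e_X\,L_X$ (against $a_Y\,c_Y\,L_Y$) and exposing $a_X,c_X$ on $I_X$ and $e_Y,b_Y$ on $I_Y$ --- the structure underlying Lemma~\ref{lemma:leak1}, whose statement, importantly, imposes no distinctness on its species $A,B,C,D$.

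For $N=2$ I would apply these relations to $A_1\to A_2$ (yielding $b_{A_1}=a_{A_2}^{*}$, $e_{A_1}=c_{A_2}^{*}$, $L_{A_1}=L_{A_2}^{*}$) and to $A_2\to A_1$ (yielding $b_{A_2}=a_{A_1}^{*}$, $e_{A_2}=c_{A_1}^{*}$, $L_{A_2}=L_{A_1}^{*}$); the two long-domain identities are mutually consistent, so no self-complementary central domain is required --- which is precisely why the $N=2$ case is not already covered by Theorem~\ref{thm:odd_loops}. Comparing $I_{A_1}$ read $5'\to 3'$ with $I_{A_2}$ read $3'\to 5'$ and substituting, every one of the five domain pairs is complementary, so $I_{A_1}$ and $I_{A_2}$ --- distinct strands, since $A_1\neq A_2$ --- can form an uninterrupted duplex of five bound domains, violating condition~3 of Definition~\ref{def:realisable}; hence the network is not realisable. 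Equivalently, one instantiates Lemma~\ref{lemma:leak1} with $B\to C$ equal to $A_1\to A_2$ and both $A\to B$ and $C\to D$ equal to $A_2\to A_1$: its conclusion then says that the complex $A_1A_2$ has an interface complementary to itself, and the same for the fuel and waste of $A_2\to A_1$, which contradicts condition~4.

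For $N=1$ I would apply the same relations with $X=Y=A_1$, obtaining $b_{A_1}=a_{A_1}^{*}$, $e_{A_1}=c_{A_1}^{*}$ and $L_{A_1}=L_{A_1}^{*}$; thus $I_{A_1}$ is its own reverse complement, and the two copies of $I_{A_1}$ occurring together in the intermediate $A_1A_1$ form an uninterrupted five-domain duplex, again violating condition~3. Here the self-complementarity of the long domain --- the sole loophole that Theorem~\ref{thm:odd_loops} leaves open for a length-one (odd) loop --- is itself forced, and in combination with the forced toehold relations it is fatal, so appealing to that loophole does not rescue realisability.

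The step I expect to be the real work is the first one: getting the $5'/3'$ bookkeeping exactly right so that ``the downstream interface of the catalyst matches the upstream interface of the substrate'' and ``the branch migration is base-pair neutral'' translate into the precise per-domain equations above, and checking that it is the identity strand, not the state strand, that thereby becomes over-constrained. After that, both cases are immediate. A minor additional point is to confirm that the degenerate uses of Lemma~\ref{lemma:leak1} (with $A=C$ and $B=D$, or with all four species equal) are legitimate --- they are, as that lemma carries no distinctness hypothesis --- although I would present the argument primarily through condition~3 so as not to depend on that reading.
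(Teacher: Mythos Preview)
Your argument is correct, and for $N=2$ it coincides with the paper's: both show that the two reactions force $I3(A_1)\diamond I5(A_2)$, $IL(A_1)\diamond IL(A_2)$, and $I5(A_1)\diamond I3(A_2)$, so the identity strands of $A_1$ and $A_2$ are full reverse complements and condition~3 of Definition~\ref{def:realisable} is violated.

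For $N=1$ you take a genuinely different route. The paper dispatches this case in one line by citing Theorem~\ref{thm:odd_loops} (a length-one loop is odd), relying on the standing convention stated immediately after that theorem that self-complementary long domains are not considered. You instead close the loophole directly: the constraints of $A_1\to A_1$ force $b_{A_1}=a_{A_1}^{*}$, $e_{A_1}=c_{A_1}^{*}$, and $L_{A_1}=L_{A_1}^{*}$, so the \emph{entire} identity strand is palindromic at the domain level, and two copies of it (present together in the intermediate $A_1A_1$) form a five-domain duplex in violation of condition~3. This buys you a self-contained proof that does not appeal to the ``we do not consider it further'' convention; the paper's version is shorter but leans on that convention. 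Your alternative reading via a degenerate instance of Lemma~\ref{lemma:leak1} is legitimate (the lemma has no distinctness hypothesis), though, as you note, it is not needed once the condition-3 argument is in place.
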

The ACDC system is not inherently suited to auto-activation or bidirectional interactions. These motifs require complementarity between both the downstream and upstream toeholds of either a single species, or two species. Strands in the system therefore violate condition 3 of Definition \ref{def:realisable} and will tend to hybridise to form fully complementary duplexes.

An isolated feedback loop is a network of size $N$ with a single directed path around the network. A simple example of length 3 is shown in Fig. \ref{fig:motifs}(f).
\begin{theorem}[Feedback loops are not realisable]
{\label{thm:fbl}}
  Consider the feedback loop
$A_1 \rightarrow A_2 \rightarrow A_3 \hdots A_{N-1} \rightarrow A_1$.
 Such a system is not realisable for any $N$.
\end{theorem}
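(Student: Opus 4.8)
The plan is to obtain Theorem~\ref{thm:fbl} as a short corollary of two results already available: Theorem~\ref{thm:self_bi} for the degenerate short loops, and Lemma~\ref{lemma:leak1} for everything larger. First I would dispose of $N\le 2$ — a self-activation loop or a bidirectional edge — by direct appeal to Theorem~\ref{thm:self_bi}, which already rules these out (there the obstruction is condition~3 of Definition~\ref{def:realisable}, not a leak). This leaves the case $N\ge 3$, which is where the real content sits.

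For $N\ge 3$, the key observation is purely combinatorial: in a feedback loop $A_1\to A_2\to\cdots\to A_N\to A_1$ every node has exactly one incoming and one outgoing edge, so for each reaction $A_j\to A_{j+1}$ (all indices read modulo $N$) the network also contains the reaction $A_{j-1}\to A_j$ and the reaction $A_{j+1}\to A_{j+2}$. That is precisely the hypothesis of Lemma~\ref{lemma:leak1} with $A=A_{j-1}$, $B=A_j$, $C=A_{j+1}$, $D=A_{j+2}$. The Lemma then forces the catalyst--substrate complexes $A_{j-1}A_j$ and $A_{j+1}A_{j+2}$ (equivalently, the associated fuel/waste species) to present two available toehold pairs forming a contiguous complementary duplex, which is exactly a violation of condition~4 of Definition~\ref{def:realisable}. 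Since condition~1 (every reaction realised as a basic ACDC unit) guarantees these ancillary species exist in any candidate implementation, no domain assignment can simultaneously satisfy condition~4, and the loop is not realisable. I would also note that by Theorem~\ref{thm:equivalence} the conclusion extends unchanged to loops in which an arbitrary subset of the edges are deactivations.

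The one step that is not entirely mechanical — and the main thing to be careful about — is the wrap-around in the smallest long loop, $N=3$, where $A_{j-1}$ and $A_{j+2}$ are the same node, so the instantiation of Lemma~\ref{lemma:leak1} has $A=D$. The hard part of writing this up is therefore checking that (i) the Lemma's hypotheses merely assert the existence of the flanking reactions and nowhere require $A,B,C,D$ to be distinct, and (ii) the two ancillary species flagged by the Lemma, $A_{j-1}A_j$ and $A_{j+1}A_{j+2}$, are nevertheless genuinely distinct complexes for every $N\ge 3$ (they cannot coincide as ordered pairs since $A_j\ne A_{j+1}$), so that condition~4 is really violated rather than vacuously met by a species against itself. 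It is also worth remarking that for odd $N$ this argument is strictly stronger than the parity obstruction of Theorem~\ref{thm:odd_loops}, because the leak cannot be evaded by taking a self-complementary central domain — which is why the statement here carries no such loophole.
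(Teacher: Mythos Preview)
Your argument is correct and close in spirit to the paper's, but with a cleaner decomposition. The paper's proof simply cites Theorems~\ref{thm:cascade}, \ref{thm:odd_loops}, and~\ref{thm:self_bi}: short loops ($N\le 2$) fall to~\ref{thm:self_bi}, odd loops to the parity obstruction of~\ref{thm:odd_loops}, and even loops of length $\ge 4$ to the cascade theorem~\ref{thm:cascade} (itself a corollary of Lemma~\ref{lemma:leak1}). You instead handle all $N\ge 3$ uniformly through Lemma~\ref{lemma:leak1}, exploiting that every node in a loop has both an upstream and a downstream reaction. This buys you two things: you avoid the self-complementary-domain caveat that Theorem~\ref{thm:odd_loops} carries (so your closing remark is apt), and you need not split odd from even. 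The price is the check at $N=3$ that the lemma still fires with $A=D$; your points (i) and (ii) are exactly the right things to verify, and they go through since the key complementarity $I3(B)\diamond I5(C)$ in the lemma's proof depends only on $B\to C$. Just note that the ``no other violations'' clause of the lemma may fail when $A=D$, but you do not rely on it.
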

\begin{proof}
 A direct consequence of Theorems \ref{thm:cascade}, \ref{thm:odd_loops}, and \ref{thm:self_bi}.
\end{proof}
As a consequence of Theorems \ref{thm:odd_loops} and \ref{thm:self_bi}, any realisable feedback loop must have $N\geq4$. However, a feedback loop of this length faces the same issues as a cascade: formation of stable, undesired products between ancillary species. As with cascades, the problem is essentially local, due to interactions between ancillary species in reaction $n$ and reaction $n+2$.
\begin{theorem}[Long feedback loops with an even number of units are non-realisable due to a particular type of leak reaction only]
{\label{thm:fbl}}
  Consider the feedback loop
 \begin{align*}
  &A_1 \rightarrow A_2 \hspace{5mm} A_2 \rightarrow A_3 \hspace{5mm} \hdots \hspace{5mm} A_{N-1} \rightarrow B_N \hspace{5mm} A_N \rightarrow A_1 
  \end{align*}
For $N$ even, $N\geq 4$, this network would be realisable if reactions between ancillary species $A_iA_{i+1}$ and $A_{i+2}A_{i+3}$, and  $F_{A_{i}A_{i+1}\rightarrow A_{i+1}^\prime}$ and $F_{A_{i+2}A_{i+3}\rightarrow A_{i+3}^\prime}$/$W_{A_{i+2}A_{i+3}\rightarrow A_{i+3}^\prime}$,  were absent.  Here, the index $j$ in $A_j$ should be interpreted modularly: $A_j = A_{j-N}$ for $j>N$.       
\end{theorem}

An isolated feedforward loop is a network of size $N$ with two directed paths from one node $i$ to another node $j$. Every other node appears exactly once in one of these paths. An example with path lengths of 1 and 2 is shown in Figure \ref{fig:motifs}.

\begin{theorem}[The relative lengths of paths are constrained in feedforward loops]{\label{thm:ffl}}
 Consider the generalised feedforward loop
 \begin{align*}
  &A \rightarrow B_1 \hspace{5mm} B_1 \rightarrow B_2 \hspace{5mm} \hdots \hspace{5mm} B_{N-1} \rightarrow B_N \hspace{5mm} B_N \rightarrow D \\
  &A \rightarrow C_1 \hspace{5mm} C_1 \rightarrow C_2 \hspace{5mm} \hdots \hspace{5mm} C_{M-1} \rightarrow C_M \hspace{5mm} C_M \rightarrow D 
 \end{align*}
  For such a network to be realisable, it is necessary that $N \geq 1$, $M \geq 1$, and $N-M$ is even. 
\end{theorem}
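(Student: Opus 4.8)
The plan is to obtain all three conditions from results already established for loops (Theorem~\ref{thm:odd_loops}) and cascades (Lemma~\ref{lemma:leak1} / Theorem~\ref{thm:cascade}), by observing that, as an undirected graph, a feedforward loop is a single cycle, and that in the degenerate cases $N=0$ or $M=0$ it contains a long cascade.

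First I would establish that $N-M$ is even. Since the two directed paths $A\to B_1\to\cdots\to B_N\to D$ and $A\to C_1\to\cdots\to C_M\to D$ meet only at their endpoints $A$ and $D$, together they form a single undirected cycle through all nodes of the motif, with $(N+1)+(M+1)=N+M+2$ edges. As in the proof of Theorem~\ref{thm:odd_loops}, the long central domain of the identity strand must alternate between a fixed sequence and its complement across every edge of an ACDC reaction; traversing the cycle once flips it $N+M+2$ times and must return it to its starting value, so $N+M+2$ is even unless the central domain is self-complementary (a case set aside earlier). Hence $N+M$, and therefore $N-M$, is even.

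Next I would show $N\geq 1$; the bound $M\geq 1$ then follows from the symmetry exchanging the two paths. Suppose $N=0$, so the motif is the edge $A\to D$ together with the path $A\to C_1\to\cdots\to C_M\to D$; since a genuine feedforward loop has two distinct paths, $M\geq 1$. If $M=1$, the three reactions $A\to C_1$, $C_1\to D$, $A\to D$ form an undirected $3$-cycle, which is non-realisable by Theorem~\ref{thm:odd_loops}. If $M\geq 2$, Lemma~\ref{lemma:leak1} applies to the reaction $C_1\to C_2$ --- its preceding reaction $A\to C_1$ and a following reaction ($C_2\to C_3$, or $C_2\to D$ when $M=2$) are both present --- so the ancillary species $AC_1$ and $C_2C_3$ (or $C_2D$), together with the corresponding fuels, violate Condition~4 of Definition~\ref{def:realisable}; equivalently the motif contains the $(M+2)$-component cascade $A\to C_1\to\cdots\to C_M\to D$, non-realisable by Theorem~\ref{thm:cascade}. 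In every case $N=0$ is incompatible with realisability, so $N\geq 1$, and by symmetry $M\geq 1$.

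I expect the argument itself to be short; the only real care needed is bookkeeping in the degenerate cases --- checking that when $N=0$ the undirected cycle uses each edge at most once and the sub-cascade has pairwise-distinct nodes, so that Theorems~\ref{thm:odd_loops} and~\ref{thm:cascade} genuinely apply, and that the hypothesis of a genuine feedforward loop rules out $N=M=0$. I would close by emphasising that the three conditions are only necessary, not sufficient: for instance the ``diamond'' $N=M=1$ satisfies them yet is not obstructed by Lemma~\ref{lemma:leak1}, because in an isolated feedforward loop the sink $D$ has no downstream reaction with which to trigger a leak.
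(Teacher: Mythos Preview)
Your argument is correct. For the parity constraint you follow the paper exactly (both derive it from Theorem~\ref{thm:odd_loops}), but for $N\geq 1$ you take a different route. The paper argues directly via criterion~2 of Definition~\ref{def:realisable}: if $N=0$ then both $A'$ and $C_M'$ activate $D$, forcing $S5(C_M')=S5(A')$ and $I3(C_M')=I3(A')$; since $A'$ binds $C_1$, so can $C_M'$, giving an unwanted \emph{2r-4} reaction between $C_M'$ and $C_1$. You instead observe that the $C$-branch together with $A$ and $D$ is an $(M{+}2)$-component cascade, and invoke Lemma~\ref{lemma:leak1}/Theorem~\ref{thm:cascade} to obtain a criterion-4 violation among the ancillary species. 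Both are valid; the paper's argument is slightly more self-contained (it does not rely on the monotonicity of non-realisability under adding edges, which you are using implicitly when you embed a non-realisable cascade inside the FFL), while yours is more modular and ties the FFL obstruction explicitly to the cascade leak already characterised. Your separate treatment of $M=1$ is harmless but redundant: once $N-M$ even is established, $N=0$ forces $M$ even and hence $M\geq 2$.
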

The constraint on the relative length of loops arises from Theorem \ref{thm:odd_loops}. Feedforward loops involving paths with no intermediates are not realisable due to the existence of unintended strand exchange reactions within the path that contains intermediates. 

Since each path in a feedforward loop is a cascade, Theorems \ref{thm:cascade} and \ref{thm:ffl} imply that only feedforward loops with a single intermediate in each branch are realisable.

\begin{theorem}[Realisability of feedforward loops]
\label{thm:ffl2}
Consider the generalised feedforward loop
 \begin{align*}
  &A \rightarrow B_1 \hspace{5mm} B_1 \rightarrow B_2 \hspace{5mm} \hdots \hspace{5mm} B_{N-1} \rightarrow B_N \hspace{5mm} B_N \rightarrow D \\
  &A \rightarrow C_1 \hspace{5mm} C_1 \rightarrow C_2 \hspace{5mm} \hdots \hspace{5mm} C_{M-1} \rightarrow C_M \hspace{5mm} C_M \rightarrow D 
 \end{align*}
 Such a system is realisable if and only if $N = 1$ and $M = 1$.
\end{theorem}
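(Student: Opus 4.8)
The statement is an ``if and only if'', so I would split it into necessity and sufficiency; the necessity direction is short and reuses the preceding results, while sufficiency is the real content and rests on Lemma~\ref{lemma:leak1}.

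\emph{Necessity.} Each branch of the feedforward loop, read in isolation, is a cascade: on the node set $\{A,B_1,\dots,B_N,D\}$ the induced sub-network is exactly the cascade $A\to B_1\to\cdots\to B_N\to D$ on $N+2$ nodes, and likewise the $C$-branch is a cascade on $M+2$ nodes. Since every clause of Definition~\ref{def:realisable} is a property of species or pairs of species, realisability is inherited by every induced sub-network; hence if the whole loop were realisable, both branches would be realisable as standalone cascades. Theorem~\ref{thm:cascade} then forces $N+2\le 3$ and $M+2\le 3$, i.e.\ $N\le 1$ and $M\le 1$, while Theorem~\ref{thm:ffl} supplies $N\ge 1$ and $M\ge 1$ (the parity condition of Theorem~\ref{thm:ffl} being then vacuous). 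Hence $N=M=1$.

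\emph{Sufficiency.} For $N=M=1$ the network is the ``diamond'' $A\to B_1,\ B_1\to D,\ A\to C_1,\ C_1\to D$. I would assign domains as a split at $A$ glued to an integrate at $D$ dictates: $A'$'s downstream interface complementary to the (necessarily common) upstream interface of $B_1$ and $C_1$; the activated downstream interfaces of $B_1'$ and $C_1'$ complementary to $D$'s upstream interface; the long central domain alternating $L/L^{*}$ along each path; and every toehold not pinned by one of the four reactions drawn from a fresh orthogonal pool (Assumption~\ref{ass:orthogonality}). With those choices Conditions~1 and~3 of Definition~\ref{def:realisable} hold at once. For Conditions~2 and~4 I would invoke Lemma~\ref{lemma:leak1}, which locates every realisability violation at a reaction $X\to Y$ for which the network \emph{also} contains an activation reaction producing $X'$ and an activation reaction catalysed by $Y'$ — equivalently, an edge whose tail is not a source and whose head is not a sink. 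The diamond has no such edge: the edges out of $A$ have a source as their tail and the edges into $D$ have a sink as their head. This is precisely the situation of a three-node cascade (Theorem~\ref{thm:cascade2}); by Lemma~\ref{lemma:leak1} together with Definition~\ref{def:realisable} no realisability violation occurs, so the diamond is realisable.

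\emph{Main obstacle.} The delicate step is making the appeal to Lemma~\ref{lemma:leak1} watertight for the diamond rather than for an isolated cascade. The diamond enforces two coincidences a cascade does not have — $B_1$ and $C_1$ share an upstream interface because both are activated by $A'$, and $B_1'$ and $C_1'$ share an activated downstream interface because both activate $D$ — and one must confirm that these coincidences do not generate an unintended four-way strand exchange or a contiguous two-toehold duplex of a type outside the lemma's case analysis, for instance between $AB_1$ and $AC_1$, or between the fuel/waste species of $B_1\to D$ and those of $C_1\to D$. Concretely this means re-deriving the available-toehold content of each ancillary species of the diamond and checking all pairs; since the proof of Lemma~\ref{lemma:leak1} already performs exactly this enumeration for an arbitrary network, the verification goes through, but it is the one place where the motif-level results cannot simply be quoted and one must return to toehold-level bookkeeping.
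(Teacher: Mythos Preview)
Your necessity argument matches the paper's in substance: each branch is a cascade, so Theorem~\ref{thm:cascade} forces $N+2\le 3$ and $M+2\le 3$, and Theorem~\ref{thm:ffl} gives $N,M\ge 1$. The paper's one-line proof additionally name-checks the split, integrate and short-cascade theorems, but your sub-network inheritance observation makes explicit that only the cascade obstruction and the loop constraints are doing the work.

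For sufficiency, both you and the paper ultimately rely on direct inspection of the $N=M=1$ diamond; the paper says only ``can be verified by inspection''. Your more structured write-up, however, leans on Lemma~\ref{lemma:leak1} in a way the lemma does not support. The lemma is stated and proved for the specific three-reaction cascade $A\to B\to C\to D$; its ``no other violations'' clause refers to that system, and the enumeration in its proof (Appendix~\ref{app:proofs}) lists exactly the nine ancillary species of that cascade. It is therefore not a general theorem locating all realisability violations in an arbitrary network at edges whose tail is not a source and whose head is not a sink, and its proof does not ``perform this enumeration for an arbitrary network'' as you claim in the obstacle paragraph. You correctly identify the cross-branch coincidences that the diamond introduces---$B_1$ and $C_1$ sharing an upstream interface, $B_1'$ and $C_1'$ sharing an activated downstream interface---and you correctly note that pairs such as $AB_1$ versus $AC_1$, or the fuels/wastes of $B_1\to D$ versus those of $C_1\to D$, must be checked separately. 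That check \emph{is} the content of the sufficiency proof: one lists the available toeholds on all ancillary species of the four diamond reactions and verifies pairwise that no contiguous two-toehold complement and no unintended $2r$-$4$ exchange arises. This is the ``inspection'' the paper invokes; it cannot be delegated to Lemma~\ref{lemma:leak1}.
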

\begin{proof}
 As a consequence of Theorems \ref{thm:split}, \ref{thm:integrate}, \ref{thm:cascade}, \ref{thm:cascade2}, and \ref{thm:ffl}, all other FFLs are not realisable. The realisability of the FFL with N=1 and M=1 can be verified by inspection.
\end{proof}

Typically, feedforward loops use branches of different lengths to achieve a complex response to a signal over time \cite{de_ronde_multiplexing_2014, alon_introduction_2019}. Such networks are not realisable. Indeed, our analysis of various motifs has revealed that the 
majority are not realisable. Broadly speaking, there are a number of small motifs (eg. auto-activation, bi-directional reactions, feedforward loops with no intermediates in one branch) that cannot be achieved because the major species themselves interact directly. In addition, loops of odd total length are not realisable due to the nature of complementary base pairs. However, most motifs are ruled out because of a single type of interaction, between the ancillary species in one reaction and the ancillary species in another reaction that occurs two steps downstream. In Section \ref{sec:mismatch}, we propose a strategy to overcome this last problem, massively increasing the scope of the ACDC framework.

\section{Overcoming the Cascade Leak Reaction and Introducing Hidden Thermodynamic Drive}
\label{sec:mismatch}
The most severe limitation of the ACDC system detailed in Section \ref{sec:species} is expressed by Theorem \ref{thm:cascade}. Long cascades, and loops incorporating cascades, are non-realisable due to interactions between ancillary species of a given reaction, and ancillary species of a reaction separated by two catalytic steps (Theorem \ref{thm:cascade3}).

\begin{assumption}[Mismatches destabilise complexes held together by two contiguous toehold domains]
\label{ass:mismatches}
We assume that a single mismatched C-C or G-G base pair, positioned adjacent to the interface of two toehold domains, is sufficiently destabilizing that an unwanted complex formed only by the binding of these toehold domains no longer precludes realisability. 
\end{assumption}
The basic design of the ACDC motif assumes that toehold binding is relatively weak; two toehold domains on either side of a junction must be able to dissociate by Assumption \ref{ass:stability}. Individual C-C or G-G mismatches are known to be highly destabilising \cite{santalucia_thermodynamics_2004}, and should similarly allow for two contiguous domains to detach. Given Assumption \ref{ass:mismatches}, the challenge is then to systematically introduce mismatches so that all interactions between ancillary species identified in Theorem \ref{thm:cascade3} are compromised by a mismatch, without compromising intended circuit activity. Our full scheme is visualised in Figure \ref{fig:mismatches}.

\begin{definition}[Mismatches proposed to destabilize unintended complexes]
We propose the following mismatches.
\begin{enumerate}
    \item We propose that the upstream interface of every major species is made distinct for active and inactive states. Specifically, we introduce a G base at the inner edge of the outer toehold domain of the state strand of the inactive species, and a C base in the same position for the active species. Catalysts that (de)activate that species possess a C(G) in the complementary position of their downstream interface. 
    \item We introduce a C-C mismatch at the outer edge of the inner toehold domain at the downstream interface of each major species. This mismatch is eliminated in the formation of waste complexes, and retained in the substrate-catalyst complexes. 
\end{enumerate}
\label{def:miss1}
\end{definition}

\begin{assumption}[Mismatches cannot cause leak reactions]{\label{ass:mismatch_new_leaks}}
 We assume that the sequence constraints introduced by mismatch inclusion do not violate Assumption \ref{ass:orthogonality}, and that the destabilisation of duplexes does not violate Assumption \ref{ass:stability}.
\end{assumption}
In practice, mismatches will likely result in some increase in the rate of interactions between otherwise hidden toeholds; we assume that these rates remain negligible.

\begin{theorem}[Mismatches successfully destabilize unintended complexes]
 The scheme proposed in Definition \ref{def:miss1} satisfies the following:
 \begin{enumerate}
 \item  All motifs that are realisable in the mismatch-free ACDC design remain realisable in the mismatch-based scheme. 
     \item Cascades of arbitrary length $N$ with at most the first and last reactions deactivating are realisable;
     \item Feedback loops with $N$ even and $N\geq 6$ in which all reactions are activating are realisable;
     \item Feedforward loops with $N\geq 1$, $M \geq 1$, $N-M$ even, in which at most the first and last reactions are deactivating in each branch, are realisable. 
 \end{enumerate}

 \label{thm:cascade4}
\end{theorem}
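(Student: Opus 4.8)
The plan is to check the four items of the theorem one at a time against Definition~\ref{def:realisable}, using the mismatch scheme of Definition~\ref{def:miss1} together with Assumptions~\ref{ass:mismatches} and~\ref{ass:mismatch_new_leaks}, and leaning heavily on the ``would be realisable but for leak X'' results already proved --- Theorem~\ref{thm:cascade3} for cascades, the long even feedback loop version of Theorem~\ref{thm:fbl}, and the argument behind Theorem~\ref{thm:ffl2} for feedforward loops. With those in hand the only thing left is to verify that the two mismatches in Definition~\ref{def:miss1} destroy exactly the offending interactions while leaving every intended reaction and every previously-cleared interaction intact.

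The first item is the quick part. Both modifications are designed to be transparent to the intended 2r-4 chemistry: the C/G variation at the inner edge of the upstream outer toehold is always read out by the complementary G/C base on the cognate catalyst's ``a'' domain, so intended catalyst--substrate binding and 4-way branch migration are unaffected; and the C-C mismatch on the downstream inner toehold always sits next to the long central domain in a substrate--catalyst complex or intermediate, so it is tolerated there (Assumption~\ref{ass:stability}, preserved by Assumption~\ref{ass:mismatch_new_leaks}) and is eliminated whenever the two state strands pair up to form a waste. Since Assumption~\ref{ass:mismatch_new_leaks} also guarantees that the new sequence constraints introduce no fresh cross-talk and no new contiguous long duplex, the set of realisability-violating interactions can only shrink, so every motif realisable in the mismatch-free design stays realisable.

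The remaining three items carry the weight. For a cascade, Theorem~\ref{thm:cascade3} reduces the obstruction to two families of leaks between the ancillary species of reaction $i$ and reaction $i+2$: an identity-strand channel (between the substrate--catalyst complexes $A_iA_{i+1}$ and $A_{i+2}A_{i+3}$) and a state-strand channel (between $F_{A_iA_{i+1}\rightarrow A_{i+1}^\prime}$ and $F_{A_{i+2}A_{i+3}\rightarrow A_{i+3}^\prime}/W_{A_{i+2}A_{i+3}\rightarrow A_{i+3}^\prime}$). I would show that the C-C mismatch of Definition~\ref{def:miss1} lands across the toehold interface of the identity-strand channel --- the C retained on the exposed downstream inner toehold of $A_{i+1}$ meeting the C that the waste-elimination requirement forces onto the exposed upstream inner toehold of $A_{i+2}$ --- and that the active/inactive distinction of Definition~\ref{def:miss1} lands a C-C mismatch across the toehold interface of the state-strand channel --- the C on the catalyst's ``a'' domain (present because, in an activating reaction, it acts on an \emph{inactive} substrate) meeting the C that marks the \emph{active} form of the catalyst of reaction $i+2$, whose state strand is exactly the one that appears in the leaking fuel/waste. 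In each case the intended complex built on the same toeholds survives because it additionally contains the long duplex, and Assumption~\ref{ass:mismatches} lets me discard the leak. Deactivating reactions need separate care: a deactivation swaps the structural roles of fuel and waste (Theorem~\ref{thm:equivalence}) and flips the active/inactive label of its substrate, which permutes which forms of the state strands occur in each ancillary product and hence which junctions inherit the mismatch of Definition~\ref{def:miss1}; I would show this reshuffling still leaves a mismatch across every leak junction precisely when deactivations sit at the two ends of the cascade --- a terminal reaction has only one leak partner of the relevant orientation, so there is still enough freedom --- and that an interior deactivation leaves some junction a clean C-G pair, making the restriction tight. A feedback loop is the cyclic version: now every reaction is interior-like, so no deactivation can be absorbed (hence ``all activating''); ``$N$ even'' survives untouched because Theorems~\ref{thm:odd_loops} and~\ref{thm:self_bi} concern the long central domain and direct major-species complementarity, which the mismatches do not touch; and $N=4$ must be excluded separately, because the wrap-around then joins a single pair of ancillary species by two short duplexes simultaneously --- showing this case genuinely resists the fix is what forces $N\geq 6$. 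For feedforward loops, each branch is a cascade, so I would apply the cascade analysis branch by branch (yielding ``at most the first and last reaction of each branch deactivating''), keep the ``$N-M$ even'' constraint from Theorem~\ref{thm:ffl} for the reason above, and also check that no new leak appears between the ancillary species of the two branches near the shared nodes $A$ and $D$.

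The main obstacle is this bookkeeping, and especially making the side conditions \emph{tight} rather than merely sufficient. That the mismatch scheme suffices under the stated hypotheses is a finite, if fiddly, check of a handful of toehold junctions; that the hypotheses are necessary --- that an interior deactivation, or any deactivation in a loop, or a length-$4$ loop, leaves some leak uncovered (or an intended reaction broken) for \emph{every} assignment of the free bases --- requires pinpointing, in each excluded case, the single junction that the two mismatches cannot simultaneously serve. Keeping straight the fuel/waste-role reversal for deactivations and the wrap-around leak channels in short loops is where the argument is least mechanical and most error-prone.
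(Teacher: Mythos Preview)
Your plan is correct and follows essentially the same route as the paper: reduce each item to the ``would be realisable but for the ancillary leaks'' results (Theorem~\ref{thm:cascade3}, Theorem~\ref{thm:fbl}), then verify that the two mismatch types in Definition~\ref{def:miss1} place a C--C at the interface of each offending contiguous two-toehold duplex (the identity-strand channel via the second mismatch, the state-strand channel via the first), invoking Assumption~\ref{ass:mismatches} to discard the leak; the $N=4$ loop exclusion and the interior-deactivation failure are handled exactly as you describe. One minor over-reach: the theorem as stated only asserts sufficiency, so you do not need to prove the side conditions tight---the paper's proof merely notes, as asides, why intermediate deactivations and $N=4$ loops still fail, without claiming these observations are part of the theorem.
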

The proof for Theorem \ref{thm:cascade4} is given in Appendix \ref{app:proofs}.

Note that the introduction of mismatches proposed in Definition \ref{def:miss1} invalidates Theorem \ref{thm:equivalence}, since the downstream domains of activating and deactivating catalysts are now distinct. Indeed, the described strategy only eliminates unwanted sequestration in cascades in which the intermediate steps are activating. Nonetheless, it makes complex networks in which - for example - deactivating catalysts are always active realisable. Networks of this kind are common in biology \cite{marshall_map_1994,herskowitz_map_1995}.

The first type of mismatch in Definition \ref{def:miss1} ensures that there is always a C-C mismatch between the upstream toeholds of the state strand of $A^\prime_{i+2}$ and the downstream toeholds of the state strand of $A^\prime_{i+1}$ in the cascade $A_i \rightarrow / \dashv  A_{i+1} \rightarrow A_{i+2} \rightarrow / \dashv A_{i+3}$, weakening the unwanted binding between the fuel and waste species identified in Theorem \ref{thm:cascade3}. Here $\rightarrow / \dashv $ indicates activation or deactivation. The second type of mismatch in Definition \ref{def:miss1} ensures that the upstream toeholds of the identity strand of $A_{i+2}$ are no longer fully complementary to the downstream toeholds of  $A_{i+1}$ in the cascade $A_i \rightarrow / \dashv A_{i+1} \rightarrow / \dashv A_{i+2} \rightarrow / \dashv A_{i+3}$, weakening the unwanted binding between ancillary species $A_iA_{i+1}$ and  $A_{i+2}A_{i+3}$. 

Having proposed these mismatches, it is important to determine that they would not compromise the intended reactions. The first type of mismatch in Definition \ref{def:miss1} is not present in any complex that must form during the operation of the network; only in the initially-prepared fuel and if a (de)activating catalyst binds to an (in)active substrate. It therefore presents no issues for intended reactions.    

\begin{figure}
 \centering
 \begin{subfigure}[b]{0.49 \textwidth}
 \centering
 \includegraphics[width=\textwidth]{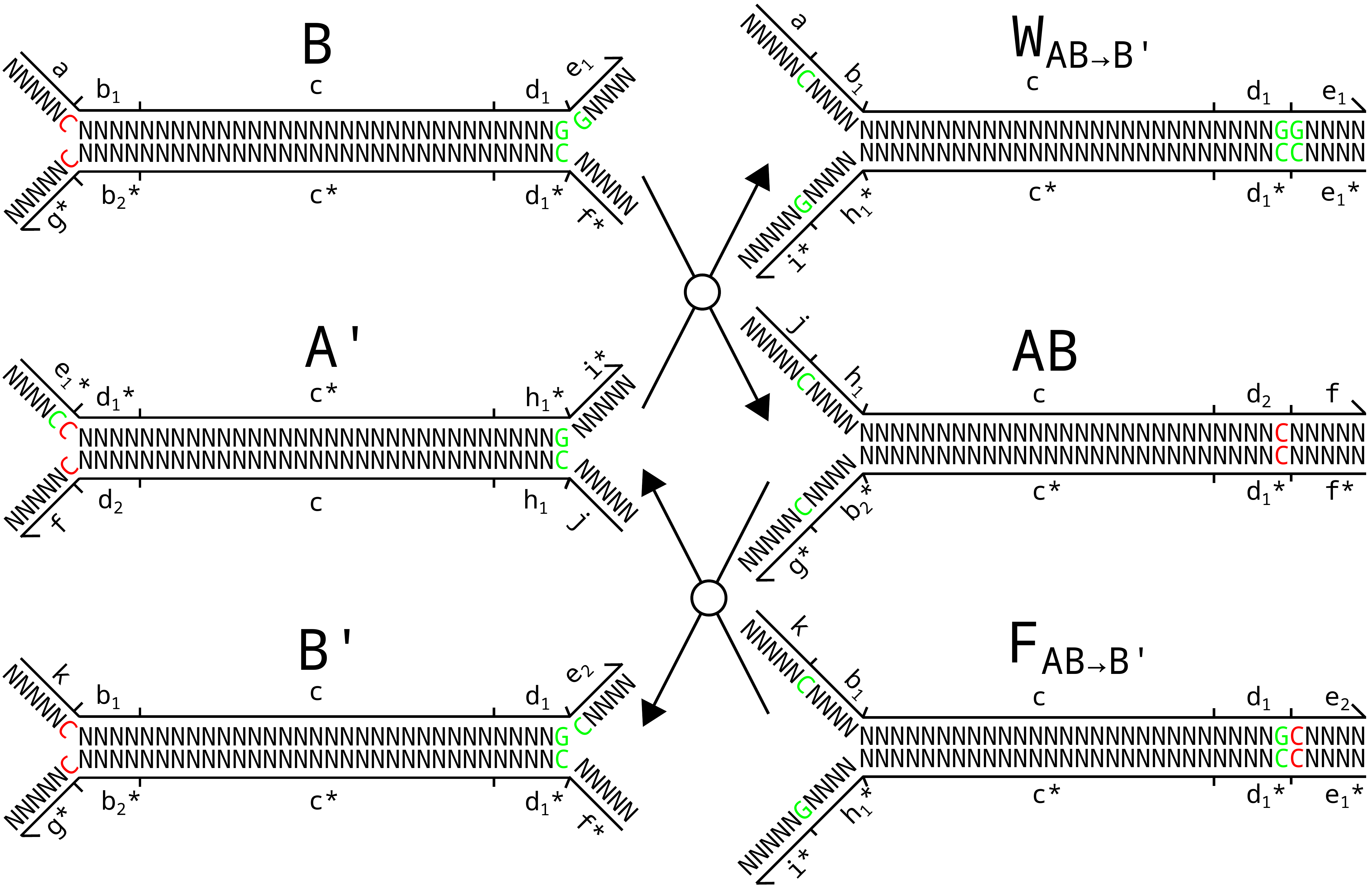}
 \caption{$A \rightarrow B$}
 \end{subfigure}
 \begin{subfigure}[b]{0.49 \textwidth}
 \centering
 \includegraphics[width=\textwidth]{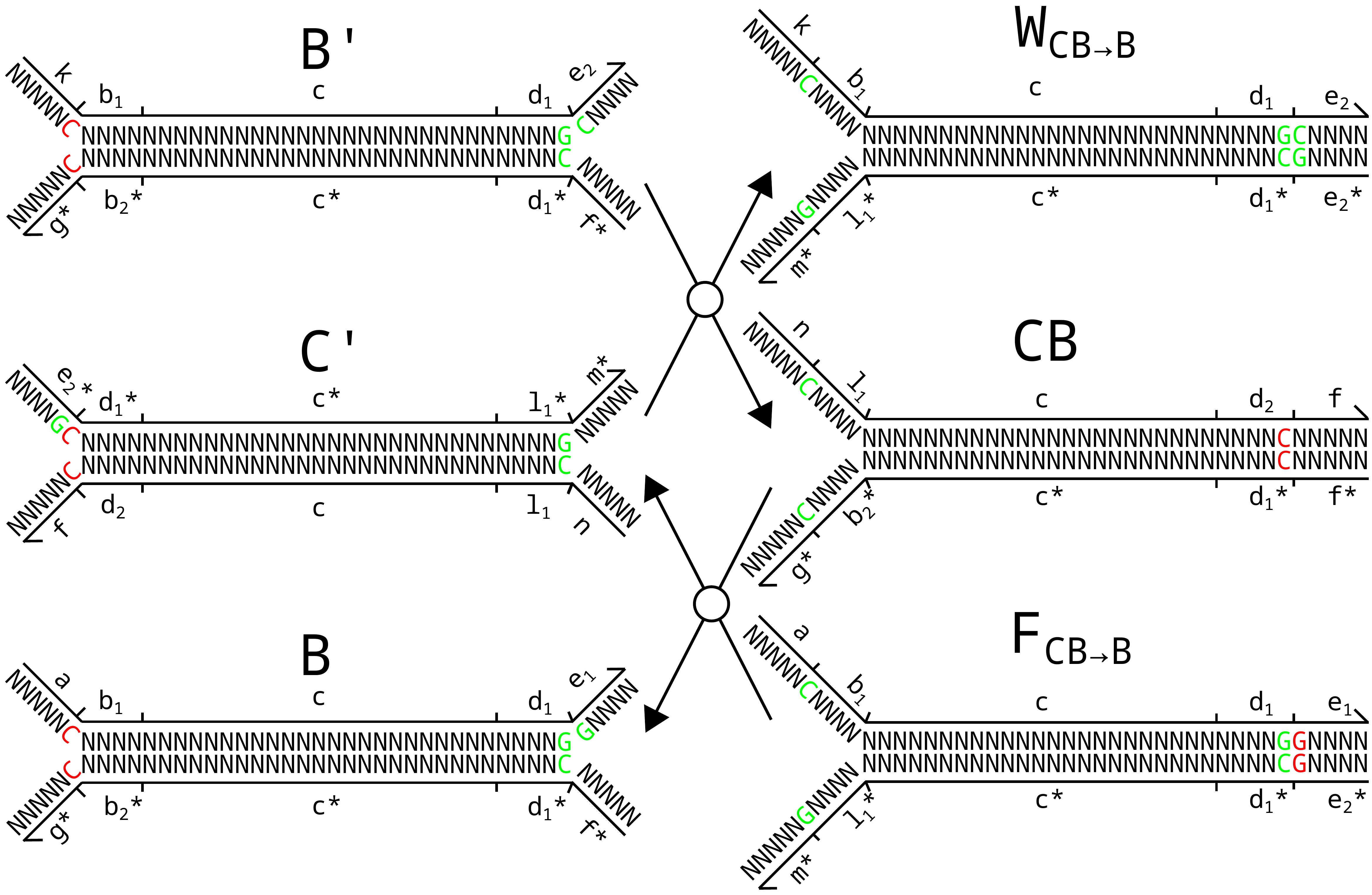}
 \caption{$C \dashv B$}
 \end{subfigure}
 \caption{Illustration of the proposed mismatch schemes for reactions $A \rightarrow B$ and $C \dashv B$, assuming toeholds of length 5 nucleotides and central domains of length 23 nucleotides. Specific mismatched bases are highlighted in red, and the same bases are highlighted in green when not part of a mismatch. 
 The domains are separated with ticks on each species, and upstream interfaces of the major species are shown on the right of each diagram.}
 \label{fig:mismatches}
\end{figure}

The second type of mismatch in Definition \ref{def:miss1} is more subtle. When a catalyst $A^\prime$ interacts with its substrate $B$, a mismatch at the very end of the catalyst duplex is converted into a mismatch within the stem of of the catalyst-substrate complex $AB$. Since mismatches are known to be more destabilizing in duplex interiors \cite{santalucia_thermodynamics_2004, ouldridge_structural_2011}, this conversion represents a local barrier to branch migration. The thermodynamic favourability of the full \textit{2r-4} reaction $A^\prime + B \rightarrow AB + W_{AB\rightarrow B^\prime}$ (or the equivalent step in a deactivation reaction) is marginal, as the mismatch at the downstream end of $B$ counters this barrier.  We assume that the local barriers introduced would not prohibit the intended reactions - indeed, conventional 3-way strand displacement is able to proceed through unmitigated C-C mismatch formation, albeit with a significant effect on kinetics \cite{machinek_programmable_2014}. In this case, any penalty is likely to be far weaker. 

The second step of the  catalytic turnover,  $AB + F_{AB\rightarrow B^\prime} \rightarrow A^\prime + B^\prime$ (or the equivalent in a deactivation) is  thermodynamically favourable (two internal mismatches are converted into exterior mismatches) and without local barriers, although one of the toeholds is effectively shortened to 4bp. The overall catalytic (de)activation cycle effectively eliminates a single C-C (G-G) mismatch initially present in the fuel. The reaction as a whole is therefore driven forwards by the free energy of base-pairing via ``hidden thermodynamic driving'' \cite{haley_design_2020}; products are more stable than reactants without consumption of initially available toeholds. In this sense, the mismatches proposed in Definition \ref{def:miss1} will improve the efficacy of the ACDC motif, as the concentration excess of fuel relative to waste required to drive the reaction in the desired direction would be reduced.

\section{A Compiler for ACDC Networks}
\label{sec:compiler}
To construct an ACDC network that implements a given graph, three things need to be done: (1) verification that  the network is realisable; (2) enumerating all domains on all species given the graph topology; and (3) compile sequences for each domain and thus for each strand present in the system. We have created an ACDC compiler with this functionality \cite{lankinen_acdc_2020}. While compilers for DSD systems that could be potentially be extended to accommodate our framework exist \cite{badelt_general-purpose_2017, srinivas_enzyme-free_2017}, we decided to make our own since our framework has unique requirements about verifying the feasibility of a given CRN and introducing mismatches within domains.

The first part is done, at least at the level of each cascade and loop present, by analysing the properties of a given graph. For every pair of nodes $i,j$, all directed simple paths are computed. We search for paths of length $N\geq 3$ that containing edge weights of -1 anywhere other than at the first or last edge; these cascades are not rendered realisable by our mismatch scheme, per Theorem \ref{thm:cascade4}. Moreover, if there exists more than 1 path between the nodes, then either a FFL (at least two paths from $i$ to $j$ or from $j$ to $i$) or a FBL (at least one path from $i$ to $j$ and from $j$ to $i$) exists in the graph. The realisability of the loop(s) can be verified from the lengths of the paths according to Theorems \ref{thm:odd_loops}, \ref{thm:self_bi} and \ref{thm:ffl}.

If a given graph is found to be realisable, then domains are assigned for each strand of each species, such that all complementarities and mismatches required by the topology are satisfied. This ask can be achieved by local analysis of the network topology. 
Finally, a NUPACK\cite{zadeh_nupack_2011} script is generated to generate optimal sequences for each strand. The required mismatches are hard-coded into the domain definitions in the script. The software is available at \url{https://doi.org/10.5281/zenodo.3838080}.

\section{Discussion}
\label{sec:discussion}
We have introduced the ACDC scheme for constructing DNA-based networks that perform direct catalysis, analysed its shortcomings, and subsequently proposed practical improvements. As of now, we have focused only on the realisibility of ACDC implementations for some graphs, not their dynamical behaviour. Three natural directions for further theoretical investigation are: (1) proving the realisability of arbitrary networks; (2) implementing additional hidden thermodynamic driving so that both \textit{2r-4} substeps of a catalytic reaction are thermodynamically downhill; and (3) automated design of ACDC networks to perform some desired transfer function between input concentrations $x_i(t),\ i=1..N$ and output concentrations $y_j(t),\ j=1..M$. With regard to the first, we conjecture that all violations of realisability in arbitrary networks are attributable to the causes identified in Section \ref{sec:networks}. 

Equally important, however, is experimentally testing the ACDC motif. Whilst 4-way branch migration has been used in several contexts \cite{venkataraman_autonomous_2007,dabby_synthetic_2013,lin_hierarchical_2018,kotani_multi-arm_2017}, the toehold exchange mechanism proposed here is relatively untested. It is also important to establish that the mismatches function as intended, limiting sequestration reactions and providing strong overall thermodynamic driving without causing excessive local barriers that frustrate the necessary reactions. A final consideration is the possibility of leak reactions involving non-complementary toeholds that we have assumed to be negligible. It remains to be established that unintended reactions will occur at a negligible rate, particularly in the context of species containing mismatches. This research is ongoing within the group.

A key property of ACDC is the two recognition interfaces within each species and the inherent symmetry in the species that follows. While this is a design feature that allows both substrate-like and catalyst-like behaviour for a single species, it also has a drawback that domains that are essential for some reaction to occur are also present in reactions where they only act as identity placeholders (downstream interface of a catalyst and an upstream interface of a substrate) that do not interact with any other domain. Consider the reaction in Figure \ref{fig:reaction}; the identity of the ``placeholder domains'' $a, b, g, h, i, j, k$ that aren't involved in the initial binding and migration reactions could be swapped to arbitrary domains that aren't complementary with $ d, e, f$ or each other in only one species and the reaction could still occur (assuming the correct fuel species is generated based on the substrate and catalyst). However, this may not be possible if $A$ and $B$ are part of some larger computation network where the placeholder domain identities are important. Another drawback of the symmetry is the limitation of loop lengths to even numbers, characterised in Theorem \ref{thm:odd_loops}. An obvious potential mitigation to this problem is to make the central domain its own complement, although this choice risks the formation of self-complementary hairpins. 

The weaknesses of the ACDC motif invite the exploration of other possible designs of catalytic activation networks that operate via direct bimolecular catalysis. It is an open question as to whether the shortcomings of ACDC can be mitigated without a substantial increase in complexity or abandoning the mechanism of direct catalytic action.

\section{Conclusion}
\label{sec:conclusion}
We have established the concept of a direct catalytic reaction and discussed why previous work on catalytic DNA computing does not fulfil this definition. We have then proposed a framework, ACDC, for implementing non-equilibrium catalytic (de)activation networks using direct catalytic activation, analogous to systems seen in living cells. ACDC is simple in the sense that all species contain only two strands - an important consideration in the context of implementing DSD circuitry in a broad range of contexts. 

 We have analysed the framework's expressiviness by exploring the implementation of seven network motifs with ACDC. The basic design is highly limited by the inherent symmetry of components, prohibiting long cascades and most feedforward and feedback loops. However, we propose that systematic placement of mismatches can obviate these difficulties in many contexts. Moreover, we argue that these initially-present mismatches can contribute a ``hidden thermodynamic driving'' \cite{haley_design_2020} to the ACDC motifs, increasing the robustness of the design to subtleties in DNA thermodynamics and reducing the concentration imbalances of fuels required to drive the reactions forward. We present a compiler for the sequence design of ACDC-based networks that implements these findings \cite{lankinen_acdc_2020}.



\bibliography{dna-nanotech}

\appendix
\section{Notation For ACDC Species and Reactions}
\paragraph*{Notation}
$[a\ b]$ denotes a strand consisting of domains $a$ and $b$. Logical not is denoted by $\lnot$ and logical and by $\land$. $\{n..m\}$,  with $n < m$, denotes the integer interval between $n$ and $m$.
\paragraph*{Definitions}
\begin{definition}{(ACDC reactant structure).}
 Each reactant in an ACDC network consists of two strands, each of which have one long domain and four toehold domains. The two strands are called \textit{state strand} and \textit{identity strand} based on the fact that one strand decodes the state of the species and other the identity. A reactant $X$ has the following domains (note the use of $H$ for ``inner'' to avoid confusion with ``identity'':
 \begin{itemize}
  \item $SH5(X)$: the inner toehold domain on the 5' side (downstream end) of the state strand.
  \item $SO5(X)$: the outer toehold domain on the 5' side (downstream end) of the state strand.
  \item $SH3(X)$: the inner toehold domain on the 3' side (upstream end) of the state strand.
  \item $SO3(X)$: the outer toehold domain on the 3' side (upstream end) of the state strand.
  \item $IH5(X)$: the inner toehold domain on the 5' side (upstream end) of the identity strand.
  \item $IO5(X)$: the outer toehold domain on the 5' side (upstream end) of the identity strand.
  \item $IH3(X)$: the inner toehold domain on the 3' side (downstream end) of the identity strand.
  \item $IO3(X)$: the outer toehold domain on the 3' side (downstream end) of the identity strand.
  \item $SL(X)$: the long domain on the state strand.
  \item $IL(X)$: the long domain on the identity strand.
 \end{itemize}
\end{definition}

\begin{definition}{(Subset and logical operations for ACDC species).}
 The following operations will be useful in the analysis of ACDC networks:
 \begin{itemize}
  \item Complementarity $\diamond$ : $x \diamond y$ is true for sequences $x, y$ iff $x = y^*$ (and $x^* = y$).
  \item Complementarity with mismatch $\square$ : $x \square y$ is true for sequences $x, y$ iff $x = y^*$ (and $x^* = y$) except for a single centrally-placed C-C or G-G mismatch. $x \square y$ is distinct from $\lnot x \diamond y$, for which it is assumed that interactions between $x$ and $y$ are negligible. 
  \item $5^\prime$ (downstream end) state toehold sequence $S5(A) := [SO5(A)\ SH5(A) ]$.
  \item $3^\prime$ (upstream end) state toehold sequence $S3(A) := [SH3(A)\ SO3(A) ]$.
  \item $5^\prime$ (upstream end) identity toehold sequence $I5(A) := [IO5(A)\ IH5(A) ]$.
  \item $3^\prime$ (downstream end) identity toehold sequence $I3(A) := [IH3(A)\ IO3(A) ]$.
 \end{itemize}
\end{definition}

\begin{definition}{(Major species).}{\label{def:species}}
 A species $X$ is either a major species  only if
 \begin{align}
  \lnot & \big(SO5(X) \diamond IO3(X)\big) \land \nonumber \\
  & \big(SH5(X) \diamond IH3(X)\big) \land \nonumber \\
  & \big(SL(X) \diamond IL(X)\big) \land \nonumber \\
  & \big(SH3(X) \diamond IH5(X)\big) \land \nonumber \\
  \lnot & \big(SO3(X) \diamond IO5(X)\big). \nonumber
 \end{align}
\end{definition}

\begin{definition}{(Domain complementarities in an ACDC reaction without mismatches).}{\label{def:domaincomp}}
 An ACDC reaction
 \begin{align}
  A \rightarrow B \nonumber
 \end{align}
 or 
 \begin{align}
  A \dashv B \nonumber
 \end{align}
 implies
 \begin{align}
  & S5(A') \diamond S3(B) =S3(B^\prime) &\land \nonumber \\
  & IL(A') =IL(A) \diamond IL(B)=IL(B^\prime) &\land \nonumber \\
  & I3(A') =I3(A) \diamond I5(B) =I5(B^\prime). \nonumber
 \end{align}
 Domains not constrained by these requirements are non-complementary.
\end{definition}

\begin{definition}{(Domain complementarities in ACDC reactions with mismatches).}{\label{def:mismatch_domaincomp}}
 An ACDC reaction \ref{def:miss1}
 \begin{align*}
  A \rightarrow B
 \end{align*}
 with mismatches placed as per Definition \ref{def:miss1} implies
 \begin{align*}
  & S5(A') \diamond S3(B) &\land \\
  & S5(A') \square S3(B') &\land \\
  & IL(A') =IL(A) \diamond IL(B) =IL(B^\prime) &\land \\
  & I3(A') =I3(A) \square I5(B) = I5(B^\prime).
 \end{align*}
  Domains not constrained by these requirements are non-complementary.

 An ACDC reaction
 \begin{align*}
  A \dashv B
 \end{align*}
 with mismatches placed as per Definition \ref{def:miss1} implies
 \begin{align*}
  & 5(A') \square S3(B) &\land \\
  & S5(A') \diamond S3(B') &\land \\
  & IL(A') =IL(A) \diamond IL(B) =IL(B^\prime) &\land \\
  & I3(A') =I3(A) \square I5(B) = I5(B^\prime).
 \end{align*}
  Domains not constrained by these requirements are non-complementary.
\end{definition}

\section{Proofs of Theorems and Lemmas \ref{thm:split} - \ref{thm:cascade4}}
\label{app:proofs}
\setcounter{theorem}{7}
\begin{theorem}[Split motifs are realisable]
 Consider the set of $N$ reactions
 \begin{align*}
   A \rightarrow B_1 \hspace{5mm}
   A \rightarrow B_2 \hspace{5mm}
 \hdots \hspace{5mm}
 &A \rightarrow B_N,
 \end{align*}
 in which all $B_i$ are distinct nodes from $A$.
 Such a network is realisable for any $N \geq 1$.
\end{theorem}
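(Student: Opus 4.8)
The plan is to write down an explicit domain structure for the major species and then verify, one at a time, the four conditions of Definition~\ref{def:realisable}; the ancillary species (fuels, wastes, catalyst--substrate complexes) inherit all of their domains from the major species exactly as described in Section~\ref{sec:species}, so no additional sequences need to be chosen. The construction is essentially forced by Definition~\ref{def:domaincomp}: since the single downstream interface of the catalyst $A'$ must drive every reaction $A\to B_i$, all $N$ substrates are obliged to share one common upstream interface (complementary to that of $A'$) and one common long-domain pair. Concretely, I would fix a long-domain sequence $\ell$ and set $IL(A)=IL(A')=\ell$ (hence $SL(A)=SL(A')=\ell^{*}$) and $IL(B_i)=IL(B_i')=\ell^{*}$, $SL(B_i)=SL(B_i')=\ell$ for every $i$, which respects the alternation between a sequence and its complement that a \emph{2r-4} step requires. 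Then I would fix two toehold-pair sequences $\sigma,\iota$, put $S3(B_i)=S3(B_i')=\sigma$ and $I5(B_i)=I5(B_i')=\iota$ for all $i$, and make the downstream interface of $A'$ exactly complementary, $S5(A')\diamond\sigma$ and $I3(A')\diamond\iota$. Every remaining toehold — the upstream interface of $A$, the inactive downstream interface of $A$ (which agrees with that of $A'$ except in the activation-marker domain $SO5$), and the downstream interfaces $S5(B_i),I3(B_i),S5(B_i'),I3(B_i')$ (with $B_i$ and $B_i'$ again differing only in $SO5$) — is a pure placeholder, constrained only by the identity-strand equality $I3(B_i)=I3(B_i')$ and by Definition~\ref{def:domaincomp}. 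By Assumption~\ref{ass:orthogonality} I would take all of these to be fresh toehold-pair sequences that are pairwise non-complementary, non-complementary to $\sigma$, $\iota$ and their complements, and not self-complementary.

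Conditions~1 and~3 are then quick. Condition~1 holds because the complementarities of Definition~\ref{def:domaincomp} are true for each $A\to B_i$ by construction and the ancillary species are the standard ones, so each reaction runs as the three-step \emph{2r-4} sequence of Figure~\ref{fig:reaction}. For Condition~3, an uninterrupted duplex of four or more bound domains would have to span a long domain (each strand carries at most two contiguous toeholds at each end, separated by its long domain), hence would need two complementary long domains; the only such pair is $\ell,\ell^{*}$, and every pair of strands carrying $\ell,\ell^{*}$ has its two flanking outer toeholds non-complementary — by the $\lnot$ clauses of Definition~\ref{def:species} for the major-species duplexes, and by the orthogonal placeholder assignment for the state--state and identity--identity duplexes that constitute the ancillary species — so the longest uninterrupted duplex is the long domain plus its two adjacent inner toeholds, three domains.

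The substance of the proof is Conditions~2 and~4. An unintended \emph{2r-4} between major species would require the downstream interface of some major species to be complementary to the upstream interface of another. The only upstream interfaces present are the common $(\sigma,\iota)$ shared by all $B_i$ and $B_i'$, and the orthogonal upstream interface of $A$; the downstream interface of $A'$ is complementary to $(\sigma,\iota)$, and these are precisely the intended reactions $A\to B_i$; every other downstream interface ($A$'s, each $B_i$'s, each $B_i'$'s) carries a state toehold orthogonal to $\sigma$, so it is not complementary to $(\sigma,\iota)$ even in the single case — that of $A$, where $I3(A)=I3(A')\diamond\iota$ — in which the identity half matches, and all of them are orthogonal to $A$'s upstream interface. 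For the ancillary species, each $AB_i$ is a duplex of the two identity strands whose available toehold pairs are $I5(A')$ and $I3(B_i)$, while each $W_{AB_i\rightarrow B_i'}$ and each $F_{AB_i\rightarrow B_i'}$ is a duplex of the two state strands whose available toehold pairs are $S3(A')$ and $S5(B_i)$ (respectively $S5(B_i')$); all of these pairs are mutually non-complementary and non-complementary to every available toehold of every major species, so no two ancillary species and no ancillary--major pair present a contiguous complementary two-toehold duplex or can trigger a \emph{2r-4}. This gives all four conditions, hence realisability for every $N\geq1$.

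I expect the only genuine pitfall to be this last step. Because every reaction in a split motif has the same catalyst, the ancillary species $AB_i$, $W_{AB_i\rightarrow B_i'}$ and $F_{AB_i\rightarrow B_i'}$ overlap in the strands they inherit from $A'$, and the forced equalities ($S5(A')\diamond\sigma$, $I3(A')\diamond\iota$, $I3(B_i)=I3(B_i')$, and $A$ differing from $A'$ only in $SO5$) could in principle conspire to make two \emph{unintended} interfaces complementary. What rescues the construction is that everything the substrates are forced to share lives on their upstream side, whereas everything that distinguishes them — and everything that could make one of them react with anything other than $A'$ — lives on their downstream side, and downstream interfaces are never required by Definition~\ref{def:domaincomp} to be complementary to anything except through $A'$; so assigning the placeholder toeholds orthogonally suffices. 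Equivalently, one can check that Lemma~\ref{lemma:leak1} yields no leak here, since $A$ has no upstream reaction and no $B_i$ has a downstream reaction.
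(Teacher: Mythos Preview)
Your proof is correct and follows the same route as the paper's: fix an explicit domain assignment (common upstream interface and long domain on the $B_i$, downstream interface of $A'$ complementary to it, every remaining toehold a fresh orthogonal placeholder) and then verify the four conditions of Definition~\ref{def:realisable} by inspection; you simply carry out in detail what the paper compresses into ``it can be directly verified''. One cosmetic slip to tidy: your assertion that ``no two ancillary species \ldots can trigger a \emph{2r-4}'' is literally false, since the intended $AB_i+F_i$ and $AB_i+W_i$ steps are themselves ancillary--ancillary \emph{2r-4} reactions permitted by the ``unless'' clause of Condition~2, and the way to exclude the unintended ones (e.g.\ $AB_i+F_j$ for $i\neq j$) is to observe that only one of the two inner-toehold pairings at the junction ($IH5(A)\diamond SH3(A')$ but $IH3(B_i)\not\diamond SH5(B_j')$) holds --- not full two-toehold-sequence non-complementarity, which is the right premise for Condition~4 but not Condition~2.
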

\begin{proof}
 To realise the above system, we must have:
 \begin{itemize}
     \item Definition \ref{def:species} must apply for $A^\prime$ and $B_i$ for all $i$,
     \item Definition \ref{def:species} must apply for all pairs $A^\prime$, $B_i$,
     \item  $S5(B_i)$ and and $I3(B_i)$ must be unique for all $i$.
     \nonumber \\
 \end{itemize}
 By simple inspection it can be verified that there is no contradiction in these requirements. Moreover, all toeholds other than those required to be complementary can be chosen to be non-complementary to each other. If these assignments are made, it can be directly verified that Definition \ref{def:realisable} is not violated by the major species and associated ancillary species. Thus the motif is realisable.
\end{proof}
\begin{theorem}[Integrate motifs are realisable]{\label{thm:integrate}}
 Consider the set of $N$ reactions
 \begin{align*}
   A_1 \rightarrow B \hspace{5mm}
   A_2 \rightarrow B \hspace{5mm}
 \hdots \hspace{5mm}
 &A_N \rightarrow B,
 \end{align*}
 in which all $A_i$ are distinct nodes from $B$. Such a system is realisable
 for any $N \geq 1$.
\end{theorem}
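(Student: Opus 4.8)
My plan is to mirror the proof of Theorem~\ref{thm:split}, exploiting the symmetry between the catalyst and substrate roles in the ACDC architecture: a catalyst engages its \emph{downstream} interface exactly as a substrate engages its \emph{upstream} interface, so from the point of view of domain assignment the integrate motif is the mirror image of the split motif, with one shared substrate $B$ playing the role of the one shared catalyst $A^\prime$ there. First I would write down, via Definition~\ref{def:domaincomp}, the constraints each reaction $A_i \rightarrow B$ imposes, namely $S5(A_i') \diamond S3(B)$, $IL(A_i) \diamond IL(B)$ and $I3(A_i) \diamond I5(B)$, together with Definition~\ref{def:species} for every major species. The crucial observation is that every constraint links some $A_i$ to $B$ and none links $A_i$ to $A_j$: all catalysts are forced to present the same downstream interface (the reverse complement of $B$'s upstream interface), the same long domains ($IL(A_i)=IL(B)^*$ and hence, via $SL(A_i)\diamond IL(A_i)$, $SL(A_i)=IL(B)$), and the same downstream identity toehold $I3(A_i)=I5(B)^*$, while the upstream-interface toeholds $S3(A_i)$ and $I5(A_i)$ are left entirely free.

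I would then make the assignments: fix the shared domains as above, choose $S3(A_i)$ and $I5(A_i)$ distinct across $i$, and, invoking Assumption~\ref{ass:orthogonality}, choose every toehold not forced to be complementary so that distinct toeholds are mutually non-complementary. A short check confirms there is no contradiction (the forced long-domain equalities are consistent with $SL\diamond IL$ inside each species, and the forced complementarities $SO5(A_i')\diamond SO3(B)$, $IO3(A_i')\diamond IO5(B)$ do not clash with the non-complementarities $\lnot(SO5(A_i')\diamond IO3(A_i'))$, $\lnot(SO3(B)\diamond IO5(B))$ required by Definition~\ref{def:species}, which involve different, free domains), and that Definition~\ref{def:species} holds for $A_i'$, $B$ and $B^\prime$.

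The substance of the proof is verifying the four conditions of Definition~\ref{def:realisable} for the resulting major species ($A_i'$, $B$, $B^\prime$) and ancillary species (for each $i$: the complex $A_iB$, the fuel $F_{A_iB\rightarrow B^\prime}$ and the waste $W_{A_iB\rightarrow B^\prime}$). Condition~1 holds by construction. For conditions~2--4 the key point is that the integrate motif contains no chain of reactions — no node is simultaneously downstream of one reaction and upstream of another — so the ``two-steps-downstream'' sequestration of Lemma~\ref{lemma:leak1} cannot arise. Concretely I would tabulate the available toeholds: $A_iB$ exposes $B$'s downstream identity toeholds together with the per-$i$ upstream identity toeholds of $A_i$; $F_{A_iB\rightarrow B^\prime}$ and $W_{A_iB\rightarrow B^\prime}$ expose $B^\prime$'s (resp.\ $B$'s) downstream state toeholds together with the per-$i$ upstream state toeholds of $A_i$. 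The only complementary contacts among these are single inner toeholds ($IH5(A_i)\diamond SH3(A_i)$ inside species $A_i$, and $IH3(B)\diamond SH5(B)=SH5(B^\prime)$ inside $B$, $B^\prime$), and in each case the adjacent toehold is forced non-complementary (by $\lnot(SO3(A_i)\diamond IO5(A_i))$, etc.), so no contiguous two-toehold duplex exists and, by Assumption~\ref{ass:specificity}, no unintended \textit{2r-4} reaction is triggered; for $i\neq j$ no contact at all survives orthogonality. Finally, since all state-strand long domains equal $IL(B)$ and all identity-strand long domains equal $IL(B)^*$, I must check condition~3 for an $A_i$-state strand against an $A_j$-identity strand: these are complementary over the long domain and the one adjacent inner toehold (forced by $SH3(B)\diamond IH5(B)$), but the next domain out, $SO5$ against $IO3$, is forced non-complementary, so the duplex never reaches four bound domains.

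I expect the main obstacle to be precisely this last bookkeeping for conditions~3 and~4: making $N$ catalysts share an entire downstream half and a long domain, with many ancillary species that share strands, one must rule out every contiguous ($\ge 2$-toehold) duplex and every $\ge 4$-domain duplex that the shared substructure might create. The resolution is that all genuinely shared sequence is either internally base-paired (the long domains) or consists of $B$'s own interfaces, which are chosen non-self-complementary and non-complementary to the per-$i$ toeholds; once this is written out the conclusion follows exactly as in Theorem~\ref{thm:split}, and the motif is realisable for every $N\ge1$.
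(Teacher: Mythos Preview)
Your proposal is correct and follows essentially the same approach as the paper: write down the domain constraints from Definitions~\ref{def:species} and~\ref{def:domaincomp}, observe they are mutually consistent, assign all unconstrained toeholds to be pairwise non-complementary, and then verify by inspection that none of the conditions in Definition~\ref{def:realisable} is violated by the resulting major and ancillary species. The paper's own proof is much terser (it simply asserts the ``by inspection'' verification), whereas you spell out the condition-by-condition bookkeeping in the spirit of Lemma~\ref{lemma:leak1}; the underlying argument is the same.
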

\begin{proof}
To realise the above system, we must have:
 \begin{itemize}
     \item Definition \ref{def:species} must apply for $A^\prime_i$ and $B$ for all $i$,
     \item Definition \ref{def:species} must apply for all pairs $A^\prime_i$, $B$.
     \nonumber \\
 \end{itemize}
By simple inspection it can be verified that there is no contradiction in these requirements. Moreover, all toeholds other than those required to be complementary can be chosen to be non-complementary to each other. If these assignments are made, it can be directly verified that Definition \ref{def:realisable} is not violated by the major species and associated ancillary species. Thus the motif is realisable.
\end{proof}
\begin{lemma}[The ancillary species of a catalyst's upstream reactions and substrate's  downstream reactions cause leak reactions]
Consider a reaction $B \rightarrow C$, and further assume that $A \rightarrow B$ and $C \rightarrow D$ for at least one species $A$ and at least one species $D$. Then  $AB$ and $CD$, and $F_{AB\rightarrow B^\prime}$ and $F_{CD\rightarrow D^\prime}$/$W_{CD\rightarrow D^\prime}$ possess two available toehold pairs that could form a contiguous complementary duplex. No other violations of realisability occur.
\end{lemma}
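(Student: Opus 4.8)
The plan is to proceed in three stages: read off the domain content of the ancillary species generated by the three reactions, use the complementarity that $B\to C$ forces between the interface toeholds of $B^\prime$ and of $C$ to exhibit the three contiguous duplexes, and finally argue that a generic choice of the remaining toeholds leaves no further violation of Definition \ref{def:realisable}.

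First I would unpack the ancillary species from the anatomy of Figures \ref{fig:species} and \ref{fig:reaction} together with Definitions \ref{def:species} and \ref{def:domaincomp} (an explicit instance is Figure \ref{fig:leaks}). For a reaction $A\to B$, the two \textit{2r-4} steps split the four constituent strands so that $AB$ is the complex of the two identity strands: its single available interface displays the upstream identity toeholds $I5(A^\prime)$ on the catalyst strand and the downstream identity toeholds $I3(B) = I3(B^\prime)$ on the substrate strand. The waste $W_{AB\to B^\prime}$ is the complex of the two state strands and displays the upstream state toeholds $S3(A^\prime)$ and the downstream state toeholds $S5(B)$, while $F_{AB\to B^\prime}$ is identical except that it carries the state strand of $B^\prime$, and so displays $S3(A^\prime)$ and $S5(B^\prime)$. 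The same reading of $C\to D$ gives $CD$ (displaying $I5(C)$ among others) and $W_{CD\to D^\prime}$, $F_{CD\to D^\prime}$ (each displaying $S3(C)$ among others).

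Next I would apply Definition \ref{def:domaincomp} to $B\to C$: since $B^\prime$ must (de)activate $C$, we have $S5(B^\prime) \diamond S3(C)$ and $I3(B^\prime) \diamond I5(C)$, so the downstream toehold pair of $B^\prime$ is the contiguous Watson--Crick complement of the upstream toehold pair of $C$, on both the state and the identity strand. Combined with the previous step this yields the three claims at once: the $I3(B)$ overhang of $AB$ is contiguously complementary to the $I5(C)$ overhang of $CD$, and the $S5(B^\prime)$ overhang of $F_{AB\to B^\prime}$ is contiguously complementary to the $S3(C)$ overhang carried by both $W_{CD\to D^\prime}$ and $F_{CD\to D^\prime}$ --- in each case two available toehold pairs that close into a contiguous duplex, i.e.\ exactly the configuration forbidden by condition 4 of Definition \ref{def:realisable}. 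Here one should note that $W_{AB\to B^\prime}$ is not implicated on the state side, because it exposes the inactive outer toehold in $S5(B)$ rather than that of $S5(B^\prime)$, which $B\to C$ does not constrain; and there is no ``waste'' counterpart of $AB$ on the identity side.

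The final and most laborious step is to check that nothing else breaks --- which is also what makes the lemma strong enough for Theorems \ref{thm:cascade} and \ref{thm:cascade3}. Here I would give every toehold domain that is not forced to coincide with, or to be complementary to, another one (by Definition \ref{def:species} for each of $A,B,C,D$, or by Definition \ref{def:domaincomp} for $A\to B$, $B\to C$, $C\to D$) a fresh orthogonal sequence, which Assumption \ref{ass:orthogonality} permits, and then run through the four conditions of Definition \ref{def:realisable} over all pairs of species. Condition 1 holds by construction; condition 3 holds because the longest two-strand duplex occurring anywhere --- in a major species, an ancillary species, or any complex formed from the forced complementarities --- spans only three domains (the long domain flanked by two inner toeholds); for condition 2 one observes that every complex assembled from forced complementary toeholds is either part of an intended ACDC unit or merely stacks duplexes end to end across three-way junctions (as do $AB\cdot CD$ and the $F/W$ complexes), so no unintended four-way branch migration can nucleate; and condition 4 fails only through the three interactions already found, every other pair of available overhangs sharing at most a single complementary toehold, which detaches harmlessly by Assumptions \ref{ass:detach} and \ref{ass:specificity}. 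I expect this exhaustive case analysis to be the main obstacle: no single check is hard, but the bookkeeping over major--major, major--ancillary and ancillary--ancillary pairs has to be organised so as to be demonstrably complete.
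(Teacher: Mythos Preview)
Your proposal is correct and follows essentially the same approach as the paper: list the available toehold overhangs on each ancillary species, apply the complementarities forced by $B\to C$ via Definition~\ref{def:domaincomp} to exhibit exactly the three problematic pairings, and then dispose of the remaining conditions of Definition~\ref{def:realisable} by inspection under a generic (orthogonal) choice of all unconstrained toeholds. Your write-up is in fact a little more explicit than the paper's --- in particular your observation that $W_{AB\to B'}$ escapes because it carries $S5(B)$ rather than $S5(B')$, and your separate treatment of conditions~2 and~3 --- but the logical skeleton is identical.
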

\begin{proof}
It can be verified by inspection that there is no inconsistency in the domain requirements for $A, B, C, D$ to be defined as major species (Definition \ref{def:species}) and for $A \rightarrow B \rightarrow C \rightarrow D$ (Definition \ref{def:domaincomp}). All domains can be chosen to be non-complementary unless specified by these requirements. When these domain assignments are made, it can be verified by inspection that criteria 1, 2 and 3 of Definition \ref{def:realisable} are not violated. 

To establish whether criterion 4 of Definition \ref{def:realisable} is violated, one need only consider the unbound domains on the ancillary species in the system $A \rightarrow B \rightarrow C \rightarrow D$:
\begin{itemize}
 \item $I5(A),\ I3(B)$ in $AB$
 \item $S3(A'),\ S5(B')$ in $F_{AB \rightarrow B'}$
 \item $S3(A'),\ S5(B)$ in $W_{AB \rightarrow B'}$
 \item $I5(B),\ I3(C)$ in $BC$
 \item $S3(B'),\ S5(C')$ in $F_{BC \rightarrow C'}$
 \item $S3(B'),\ S5(C)$ in $W_{BC \rightarrow C'}$
 \item $I5(C),\ I3(D)$ in $CD$
 \item $S3(C'),\ S5(D')$ in $F_{CD \rightarrow D'}$
 \item $S3(C'),\ S5(D)$ in $W_{CD \rightarrow D'}$.
\end{itemize}
Observe that the reaction $B \rightarrow C$ implies $I3(B) \diamond I5(C),\ S5(B') \diamond S3(C)$, meaning $AB$ and $BC$ can bind by the two  contiguous toehold domains $I3(B),\ I5(C)$, and $F_{AB \rightarrow B'}$ can bind with $F_{CD \rightarrow D'}$ and $W_{CD \rightarrow D'}$ by the two contiguous toehold domains in $S5(B'),\ S3(C)$. It can be verified by inspection that no other violations of criterion 4 occur. 
\end{proof}
\setcounter{theorem}{12}
\begin{theorem}[Long cascades are non-realisable due to a particular type of leak reaction only]
Consider the set of $N$ reactions
$A_1 \rightarrow A_{2},\, A_{2} \rightarrow A_{3} \,...\, A_{N-1}\rightarrow A_N$, in which all $A_i$ are distinct. This network would be realisable if reactions between ancillary species $A_iA_{i+1}$ and $A_{i+2}A_{i+3}$, and  $F_{A_{i}A_{i+1}\rightarrow A_{i+1}^\prime}$ and $F_{A_{i+2}A_{i+3}\rightarrow A_{i+3}^\prime}$/$W_{A_{i+2}A_{i+3}\rightarrow A_{i+3}^\prime}$, were absent. 
\end{theorem}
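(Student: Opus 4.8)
The plan is to show that the domain structure forced on the cascade $A_1 \rightarrow A_2 \rightarrow \cdots \rightarrow A_N$ by Definitions~\ref{def:species} and~\ref{def:domaincomp} violates Definition~\ref{def:realisable} \emph{only} through the two families of ancillary interactions named in the statement, so that removing exactly those interactions restores realisability. First I would fix a canonical assignment: reaction $A_i \rightarrow A_{i+1}$ forces $S5(A_i') \diamond S3(A_{i+1})$, $IL(A_i) \diamond IL(A_{i+1})$ and $I3(A_i') \diamond I5(A_{i+1})$ (Definition~\ref{def:domaincomp}), and I take every other pair of toehold domains to be non-complementary, which is possible by Assumption~\ref{ass:orthogonality}. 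The long-domain constraints are jointly consistent because the cascade is a simple path: $IL(A_1), IL(A_2), \dots$ merely alternate between a fixed sequence and its complement, so no long domain need be self-complementary --- this is precisely the point at which a closed loop of odd length would fail (cf.\ Theorem~\ref{thm:odd_loops}), but a cascade does not. With this assignment, conditions~1--3 of Definition~\ref{def:realisable} are verified by the same inspection used in the proofs of Theorems~\ref{thm:split} and~\ref{thm:integrate} and of Lemma~\ref{lemma:leak1}: every reaction is a basic ACDC unit by construction; the only toehold complementarities present are the ``catalyst-downstream meets substrate-upstream'' pairs of genuine reactions, and since all $A_i$ are distinct these line up to enable a full \emph{2r-4} exchange only as intended; and since the inner toeholds flanking $IL(A_i)$ are complementary only to those flanking $IL(A_{i+1})$, never to both neighbours, no uninterrupted duplex of four or more bound domains can form.

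The content is condition~4, which I would establish in a local and a non-local layer. For the local layer, I apply Lemma~\ref{lemma:leak1} to every length-three sub-cascade $A_i \rightarrow A_{i+1} \rightarrow A_{i+2} \rightarrow A_{i+3}$ for $i = 1, \dots, N-3$: each window contributes precisely the leak between $A_iA_{i+1}$ and $A_{i+2}A_{i+3}$ (bound through $I3(A_{i+1}) \diamond I5(A_{i+2})$) and the leak between $F_{A_iA_{i+1}\rightarrow A_{i+1}^\prime}$ and each of $F_{A_{i+2}A_{i+3}\rightarrow A_{i+3}^\prime}$, $W_{A_{i+2}A_{i+3}\rightarrow A_{i+3}^\prime}$ (bound through $S5(A_{i+1}^\prime) \diamond S3(A_{i+2})$), and, crucially, no further violation among the species of reactions $i, i+1, i+2$. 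Since for $N \geq 4$ every pair of reactions that are at most two apart lies in a common such window (with the ends $i = 1$ and $i = N-3$ treated explicitly, and $N \leq 3$ handled directly via Theorem~\ref{thm:cascade2}), the local layer settles condition~4 for all ``nearby'' pairs of species.

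For the non-local layer I would show that two species drawn from reactions $k$ and $\ell$ with $|k - \ell| \geq 3$ cannot share a contiguous complementary toehold duplex. Every complementarity in the assignment is produced by some reaction $m$ and relates domains within $\{A_m, A_{m+1}\}$; for reaction $m$ to bridge a domain appearing in reaction $k$'s species and one appearing in reaction $\ell$'s species we would need $m \in \{k-1,k,k+1\} \cap \{\ell-1,\ell,\ell+1\}$, which is empty when $|k-\ell| \geq 3$. The one complementarity family that chains beyond a single reaction --- the long domains, $IL(A_k) \diamond IL(A_\ell)$ whenever $k$ and $\ell$ differ by an odd number --- is irrelevant here, because no ancillary species and no major species ever presents an \emph{available} long domain (they are always bound), so it touches neither condition~4 nor condition~3. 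Assembling the two layers shows the only violations of Definition~\ref{def:realisable} are the stated ones. I expect the main obstacle to be exactly this gluing step: verifying that the leaks catalogued by Lemma~\ref{lemma:leak1} in each four-node window are genuinely the \emph{complete} list --- that overlapping toehold complementarities of consecutive reactions cannot conspire into an unanticipated available complementary stretch, and that the boundary and small-$N$ cases only \emph{remove} leaks rather than introduce new ones --- which is bookkeeping-heavy but conceptually straightforward given the locality observation above.
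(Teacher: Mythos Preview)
Your proposal is correct and follows essentially the same approach as the paper: fix the minimal domain assignment forced by Definitions~\ref{def:species} and~\ref{def:domaincomp}, observe that toehold domains appearing in reaction $A_i \rightarrow A_{i+1}$ cannot reappear in reactions more than two steps away, and then reduce the question to the $N=4$ window handled by Lemma~\ref{lemma:leak1}. Your explicit split into a ``local layer'' (slide Lemma~\ref{lemma:leak1} across each four-node window) and a ``non-local layer'' (the intersection argument $m \in \{k-1,k,k+1\}\cap\{\ell-1,\ell,\ell+1\}=\emptyset$ for $|k-\ell|\ge 3$) is a more detailed unpacking of exactly the locality observation the paper states in one sentence before invoking the lemma, so there is no substantive difference in strategy.
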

\begin{proof}
It can be directly verified at that an arbitrarily-long cascade can be constructed at the domain level in which each $A_i$ satisfies Definition \ref{def:species} and each pair $A_i,A_{i+1}$ satisfies Definition \ref{def:domaincomp}, satisfying criterion 1 of Definition \ref{def:realisable}.

If  all sequences not constrained to be complementary by these definitions are chosen to be non-complementary, potential violations of the criteria 2-4 of Definition \ref{def:realisable} arise due to an unavoidable unwanted complementarity between toehold domains in species that are not intended to interact. By explicitly constructing a cascade at the domain level, it can be verified that some toehold domains (or their complements) present in the reaction $A_i \rightarrow A_{i+1}$ \textbf{must} also be present in $A_{i+1} \rightarrow A_{i+2}$ and $A_{i+2} \rightarrow A_{i+3}$, but not in  $A_{i+n} \rightarrow A_{i+n+1}$ for $n\geq 3$. Intuitively, strands that participate at cascade level $j$ also participate at level $j-1$ or $j+1$, but no further away.  It is therefore sufficient to consider a cascade with $N=4$ to identify all violations of realisability in a cascade. The required result then follows directly from Lemma \ref{lemma:leak1}.
\end{proof}

\setcounter{theorem}{14}
\begin{theorem}[Self interactions and bidirectional edges are not realisable]
Consider a system of reactions $A_1 \rightarrow A_2 \rightarrow A_3 \hdots A_{N-1} \rightarrow A_1$. This network is not realisable if $N\leq2$.
\end{theorem}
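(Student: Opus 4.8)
The plan is to reduce both cases --- $N=1$ (the auto-activation edge $A_1 \rightarrow A_1$) and $N=2$ (the bidirectional edge $A_1 \rightarrow A_2$, $A_2 \rightarrow A_1$) --- to a violation of condition~3 of Definition~\ref{def:realisable}: I will show that the domain complementarities forced by Definition~\ref{def:domaincomp} make a pair of strands present in the system complementary across all five of their domains, so that they could form an uninterrupted five-domain duplex. Because only identity-strand domains enter the argument, the active/inactive ($A$ versus $A'$) distinction never matters, which keeps the bookkeeping light; and because the complementarities invoked are mandated by Definition~\ref{def:domaincomp} (a consequence of condition~1 of realisability) rather than being optional design choices, the argument rules out \emph{every} valid domain assignment and hence establishes non-realisability.

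For $N=2$ I would first record the constraints. The reaction $A_1 \rightarrow A_2$ forces $IL(A_1) \diamond IL(A_2)$ and $I3(A_1) \diamond I5(A_2)$, i.e. $IH3(A_1) \diamond IH5(A_2)$ and $IO3(A_1) \diamond IO5(A_2)$; the reaction $A_2 \rightarrow A_1$ symmetrically forces $I3(A_2) \diamond I5(A_1)$, i.e. $IH3(A_2) \diamond IH5(A_1)$ and $IO3(A_2) \diamond IO5(A_1)$, together with $IL(A_2) \diamond IL(A_1)$, consistent with the first reaction. Writing the identity strand of $A_1$ from its $5'$ end as $[IO5(A_1)\ IH5(A_1)\ IL(A_1)\ IH3(A_1)\ IO3(A_1)]$ and aligning it antiparallel against the identity strand of $A_2$, each of the five opposing domain pairs is complementary by the relations just listed. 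Hence these two distinct strands, both present in the system, can hybridise into a fully paired five-domain duplex, violating condition~3, so the $N=2$ network is not realisable. Moreover, since $N=2$ is an even loop, this case is not covered by Theorem~\ref{thm:odd_loops} and genuinely needs the separate argument.

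For $N=1$ the catalyst $A_1'$ and the substrate $A_1$ share a single identity strand, so Definition~\ref{def:domaincomp} applied with $A = B = A_1$ forces $IL(A_1) \diamond IL(A_1)$ (a self-complementary long central domain) together with $I3(A_1) \diamond I5(A_1)$, i.e. $IH3(A_1) \diamond IH5(A_1)$ and $IO3(A_1) \diamond IO5(A_1)$. Then the identity strand $[IO5(A_1)\ IH5(A_1)\ IL(A_1)\ IH3(A_1)\ IO3(A_1)]$ equals its own reverse complement, so two copies pair over all five domains and again violate condition~3. This closes the loophole left open by Theorem~\ref{thm:odd_loops}: even choosing $IL$ self-complementary does not rescue the $N=1$ case. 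Combining the two cases gives the theorem. I do not expect a genuine obstacle here; the only care needed is in tracking the $5'/3'$ orientations of the two strands when verifying the antiparallel alignments, and in noting explicitly that the state strands are subject to analogous forced complementarities but are not required for the argument.
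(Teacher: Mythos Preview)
Your proof is correct. For $N=2$ you give exactly the paper's argument: the two reactions force $I3(A_1)\diamond I5(A_2)$, $IL(A_1)\diamond IL(A_2)$, and $I3(A_2)\diamond I5(A_1)$, so the two identity strands are fully complementary and condition~3 of Definition~\ref{def:realisable} is violated.

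For $N=1$ you take a slightly different route. The paper simply invokes Theorem~\ref{thm:odd_loops} (an auto-activation loop has odd length), which formally leaves open the self-complementary central domain case; the paper handles that case only by the informal remark following Theorem~\ref{thm:odd_loops} that such domains are not considered further. You instead argue directly that Definition~\ref{def:domaincomp} with $A=B=A_1$ forces the identity strand to be its own reverse complement across all five domains, so two copies of it form a full duplex and condition~3 is again violated. This is a small but genuine improvement: your argument is self-contained, does not rely on the convention excluding palindromic central domains, and makes explicit that the $N=1$ case fails even under that exception. The cost is only a couple of extra lines of domain bookkeeping.
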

\begin{proof}
The result for $N=1$ is a direct consequence of Theorem \ref{thm:odd_loops}. For $N=2$, consider the set of reactions
\begin{align*}
  &A \rightarrow B \\
  &B \rightarrow A. 
 \end{align*}
 By Definition \ref{def:domaincomp}, $A \rightarrow B$ implies $I3(A) \diamond I5(B)$ and $IL(A) \diamond IL(B)$. In addition, $B \rightarrow A$ implies $I5(A) \diamond I3(B)$. The identity strands of $A$ and $B$ are then fully complementary, violating criterion 3 of Definition \ref{def:realisable}.
\end{proof}
\setcounter{theorem}{16}
\begin{theorem}[Long feedback loops with an even number of units are non-realisable due to a particular type of leak reaction only]
{\label{thm:fbl}}
  Consider the feedback loop
 \begin{align*}
  &A_1 \rightarrow A_2 \hspace{5mm} A_2 \rightarrow A_3 \hspace{5mm} \hdots \hspace{5mm} A_{N-1} \rightarrow B_N \hspace{5mm} A_N \rightarrow A_1 
  \end{align*}
For $N$ even, $N\geq 4$, this network would be realisable if reactions between ancillary species $A_iA_{i+1}$ and $A_{i+2}A_{i+3}$, and  $F_{A_{i}A_{i+1}\rightarrow A_{i+1}^\prime}$ and $F_{A_{i+2}A_{i+3}\rightarrow A_{i+3}^\prime}$/$W_{A_{i+2}A_{i+3}\rightarrow A_{i+3}^\prime}$,  were absent.  Here, the index $j$ in $A_j$ should be interpreted modularly: $A_j = A_{j-N}$ for $j>N$.       
\end{theorem}
\begin{proof}
 It can be directly verified at that for $N$ even, $N\geq 4$, a loop can be constructed at the domain level in which each $A_i$ satisfies Definition \ref{def:species} and each pair $A_i,A_{i+1}$ (defined modularly) satisfies Definition \ref{def:domaincomp}, satisfying criterion 1 of Definition \ref{def:realisable}.

To identify the violations of realisability that arise from unwanted interactions, let us first consider a cascade without the $A_N \rightarrow A_1 $ reaction. The only violations of realisability are those identified in \ref{thm:cascade3}: between $A_iA_{i+1}$ and $A_{i+2}A_{i+3}$, and  $F_{A_{i}A_{i+1}\rightarrow A_{i+1}^\prime}$ and $F_{A_{i+2}A_{i+3}\rightarrow A_{i+3}^\prime}$/$W_{A_{i+2}A_{i+3}\rightarrow A_{i+3}^\prime}$, without interpreting the index modularly. Now we consider the additional effect of requiring $A_N \rightarrow A_1$. The only  domains that must be changed are $S5(A_N^\prime)$ and $I3(A_N^\prime)$. These domains and their complements are only present in reactions $A_{N-1} \rightarrow A_N$, $A_N \rightarrow A_1$, $A_1 \rightarrow A_2$, and so it is sufficient to consider only this cascade to identify additional violations of realisability. By Lemma \ref{lemma:leak1}, the resultant violations of realisability are exactly those stated in the theorem. 
\end{proof}

\begin{theorem}[The relative lengths of paths are constrained in feedforward loops]{\label{thm:ffl}}
 Consider the generalised feedforward loop
 \begin{align*}
  &A \rightarrow B_1 \hspace{5mm} B_1 \rightarrow B_2 \hspace{5mm} \hdots \hspace{5mm} B_{N-1} \rightarrow B_N \hspace{5mm} B_N \rightarrow D \\
  &A \rightarrow C_1 \hspace{5mm} C_1 \rightarrow C_2 \hspace{5mm} \hdots \hspace{5mm} C_{M-1} \rightarrow C_M \hspace{5mm} C_M \rightarrow D 
 \end{align*}
 For such a network to be realisable, it is necessary that $N \geq 1$, $M \geq 1$, and $N-M$ is even. 
\end{theorem}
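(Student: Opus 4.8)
The plan is to prove the two halves of the statement separately: the parity condition $N-M$ even I would read off directly from Theorem~\ref{thm:odd_loops}, and the positivity conditions $N\geq 1$, $M\geq 1$ I would obtain by chasing the domain-complementarity constraints of Definition~\ref{def:domaincomp} until they force an unintended strand exchange.

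First, for parity, I would regard the feedforward loop as an undirected graph. Because its two directed branches meet only at the endpoints $A$ and $D$ and the nodes $A,B_1,\dots,B_N,D,C_M,\dots,C_1$ are pairwise distinct, the underlying graph is a single simple cycle on $N+M+2$ vertices, every edge of which joins the catalyst and the substrate of some network reaction. The long central domain must alternate between a fixed sequence and its complement along the identity strands of successive catalyst--substrate pairs, so traversing the whole cycle forces $N+M+2$ to be even unless that domain is self-complementary, which is excluded. This is exactly the argument behind Theorem~\ref{thm:odd_loops}, so I would simply apply that theorem to the $(N+M+2)$-cycle and conclude that $N+M$, hence $N-M$, is even.

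Next, for positivity, by the symmetry that swaps the two branches it suffices to rule out $N=0$. Assuming $N=0$, the network contains the direct reaction $A\rightarrow D$; a feedforward loop with two distinct branches then has $M\geq 1$, and the parity result just proved gives $M$ even, hence $M\geq 2$ and $C_M\neq C_1$. Now I would chase Definition~\ref{def:domaincomp}. The reactions $A\rightarrow D$ and $A\rightarrow C_1$ both have catalyst $A'$, so they force $D$ and $C_1$ to share their upstream interfaces and long domains: $S3(D)=S3(C_1)$, $I5(D)=I5(C_1)$, $IL(D)=IL(C_1)$. Substituting these equalities into the constraints attached to $C_M\rightarrow D$ yields $S5(C_M')\diamond S3(C_1)$, $I3(C_M')\diamond I5(C_1)$ and $IL(C_M')\diamond IL(C_1)$ --- precisely the three complementarities that Definition~\ref{def:domaincomp} attaches to a reaction ``$C_M\rightarrow C_1$''. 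Hence $C_M'$ and $C_1$ carry two available, complementary toehold pairs whose binding initiates a full \textit{2r-4} strand exchange that preserves the number of bound domains; since $C_M\neq C_1$ and $C_M\rightarrow C_1$ is not a network reaction, criterion~2 of Definition~\ref{def:realisable} fails, so the network is not realisable. Thus $N\geq 1$, and symmetrically $M\geq 1$.

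I expect the parity half to be routine. The hard part will be the positivity half, and there the real obstacle is conceptual rather than computational: recognising that when one branch collapses to a single edge the obstruction is not a direct interaction between the major species $A$ and $D$, but a ``chord'' reaction $C_M\rightarrow C_1$ that the constraints silently force inside the surviving branch. The accompanying bookkeeping is then delicate but mechanical --- I would need to verify that \emph{all three} complementarity requirements of Definition~\ref{def:domaincomp} for that chord reaction are forced (a mere partial toehold match would not trip criterion~2), and that the derivation is insensitive to the active/inactive distinction, using that activation changes only the outer state toehold $SO5$.
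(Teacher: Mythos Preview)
Your proposal is correct and follows essentially the same approach as the paper's proof: the parity condition is read off from Theorem~\ref{thm:odd_loops}, and the positivity condition is obtained by showing that when $N=0$ (and hence $M\geq 2$ even), the constraints from $A\rightarrow C_1$ and $C_M\rightarrow D$ together with $A\rightarrow D$ force an unintended branch migration between $C_M'$ and $C_1$, violating criterion~2 of Definition~\ref{def:realisable}. Your version is more explicit in chasing the domain complementarities of Definition~\ref{def:domaincomp} than the paper's terse one-line argument, but the underlying idea is identical.
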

\begin{proof}
 The claim about $N-M$ having to be even follows from Theorem \ref{thm:odd_loops}.
 
 Assume for contradiction that a FFL with $N=0$ and $M \geq 2$ and even is realisable. Since $A$ activates $C_1$, and both $A$ and $C_M$ activate $D$, it must be that $C_M$ can also perform a branch migration with $C_1$, which is an unwanted reaction violating criterion 2 of Defintion \ref{def:realisable}.\par
\end{proof}

\setcounter{theorem}{22}
\begin{theorem}[Mismatches successfully destabilize unintended complexes]
 The scheme proposed in Definition \ref{def:miss1} satisfies the following:
 \begin{enumerate}
 \item  All motifs that are realisable in the mismatch-free ACDC design remain realisable in the mismatch-based scheme. 
     \item Cascades of arbitrary length $N$ with at most the first and last reactions deactivating are realisable;
     \item Feedback loops with $N$ even and $N\geq 6$ in which all reactions are activating are realisable;
     \item Feedforward loops with $N\geq 1$, $M \geq 1$, $N-M$ even, in which at most the first and last reactions are deactivating in each branch, are realisable. 
 \end{enumerate}
\end{theorem}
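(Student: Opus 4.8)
The plan is to reduce each of the four sub-claims to the leak analysis already carried out in Section~\ref{sec:networks}, and then to read off from Definition~\ref{def:mismatch_domaincomp} where the mismatch prescribed by Definition~\ref{def:miss1} sits inside each offending complex. The crucial bookkeeping is encoded by Definition~\ref{def:mismatch_domaincomp}: a reaction $X\rightarrow Y$ forces $S5(X')\diamond S3(Y)$ but $S5(X')\square S3(Y')$ and $I3(X')\square I5(Y)$, whereas $X\dashv Y$ forces $S5(X')\square S3(Y)$, $S5(X')\diamond S3(Y')$ and $I3(X')\square I5(Y)$; all other domains are non-complementary. Since $\square$ is a \emph{centrally-placed} C--C/G--G mismatch, in $I3\,\square\,I5$ it lands at the $IH/IO$ junction and in $S5\,\square\,S3$ at the $SO/SH$ junction, i.e. exactly at the interface of the two toehold domains holding a leak complex together, so Assumption~\ref{ass:mismatches} applies. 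The recurring observation I will use is therefore: the identity-strand complementarity exploited by the $AB$/$CD$ leak of Theorem~\ref{thm:cascade3} always carries a mismatch, while the state-strand complementarity exploited by the fuel/waste leak carries a mismatch precisely when the reaction that creates that complementarity is an activation.

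For \textbf{parts 2 and 4} I would first invoke Theorem~\ref{thm:cascade3}: the only obstructions in a cascade are the complex of $A_iA_{i+1}$ with $A_{i+2}A_{i+3}$ through the contiguous toeholds $I3(A_{i+1}),I5(A_{i+2})$, and the complex of $F_{A_iA_{i+1}\rightarrow A_{i+1}'}$ with $F_{A_{i+2}A_{i+3}\rightarrow A_{i+3}'}$ or $W_{A_{i+2}A_{i+3}\rightarrow A_{i+3}'}$ through $S5(A_{i+1}'),S3(A_{i+2})$. By the observation above the first is destabilised unconditionally (reaction $A_{i+1}\to A_{i+2}$ gives $I3(A_{i+1})\square I5(A_{i+2})$ whether it activates or deactivates), and the second is destabilised exactly when $A_{i+1}\to A_{i+2}$ is an activation. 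As $i$ ranges over the four-step windows of a length-$N$ cascade the index $i{+}1$ ranges over precisely positions $2,\dots,N{-}2$, so the hypothesis ``at most the first and last reactions deactivating'' makes every instance of both leak families carry a mismatch; Assumption~\ref{ass:mismatches} then removes them as obstructions, Definition~\ref{def:mismatch_domaincomp} rules out any further unwanted complementarity, and Theorem~\ref{thm:cascade3} gives realisability. Consistency of the assignment is immediate, since each reaction independently fixes one $G/C$ complementarity and active/inactive interfaces are distinct. For a feed-forward loop each branch is such a cascade, so the argument applies branch-by-branch; the shared endpoints $A$ and $D$ only introduce split/integrate structure, realisable by Theorems~\ref{thm:split} and~\ref{thm:integrate}, with no new complementarity arising between the two branches' ancillary species, and the parity condition $N-M$ even is inherited from Theorem~\ref{thm:odd_loops} via Theorem~\ref{thm:ffl}.

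For \textbf{part 3} I would start from the long feedback-loop statement (Theorem~\ref{thm:fbl}): for $N$ even, $N\geq4$ the only obstructions are the same two leak families under modular indexing, and if every reaction is an activation then every window's middle reaction is an activation, so the cascade argument supplies a mismatch to each leak complex. The reason the claim requires $N\geq6$ is the coincidence, when $N=4$, of the ancillary species two steps downstream with those two steps upstream: $A_1A_2$ and $A_3A_4$ are then joined \emph{simultaneously} through $I3(A_2),I5(A_3)$ (from $A_2\to A_3$) and through $I5(A_1),I3(A_4)$ (from $A_4\to A_1$), producing a pseudo-cyclic complex held by two toehold--toehold duplexes in addition to two long-domain duplexes; Assumption~\ref{ass:mismatches} only controls complexes held by a single contiguous toehold pair, and such a pair of ancillary species can moreover undergo a full unmitigated \textit{2r-4} exchange, so $N=4$ is not rescued. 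For $N\geq6$ each ancillary species meets any partner through at most one contiguous toehold pair, now carrying a mismatch, and realisability follows.

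For \textbf{part 1} I would argue that the mismatches of Definition~\ref{def:miss1} neither delete a required interaction nor manufacture a forbidden one: Definition~\ref{def:mismatch_domaincomp} preserves every complementarity demanded by an ACDC unit, as $\diamond$ or as $\square$, so all intended units persist; the first mismatch type occurs only in initially-prepared fuels and in wrong-pairing catalyst--substrate complexes and so blocks no necessary reaction, while the second merely turns an exterior mismatch into an interior one, a surmountable local barrier (Section~\ref{sec:mismatch}); and no realisability-violating complementarity is created, by Assumption~\ref{ass:mismatch_new_leaks} together with the fact that mismatches only weaken complementarity, so conditions~2 and~3 of Definition~\ref{def:realisable} cannot be freshly violated. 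Hence splits, integrates, short cascades and the $N=M=1$ feed-forward loop remain realisable. I expect the main obstacle to be the systematic part of the argument in paragraphs two and three: establishing cleanly that the list of leaks in Theorem~\ref{thm:cascade3} (and its loop analogue) is genuinely exhaustive once mismatches are present, and --- the subtler point --- pinning down exactly why the up-/downstream coincidence defeats the scheme at $N=4$ yet not at $N\geq6$, i.e. that no single-mismatch argument can disable a doubly-bound ancillary complex.
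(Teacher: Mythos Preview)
Your proposal is correct and follows essentially the same route as the paper: reduce each claim to the leak classification of Theorems~\ref{thm:cascade3} and~\ref{thm:fbl}, read off from Definition~\ref{def:mismatch_domaincomp} that the identity-strand leak always carries a $\square$ mismatch while the state-strand leak does so exactly when the intermediate reaction is an activation, and dispose of the $N=4$ feedback loop by the double-binding observation. The only small departure is your extra remark that the $N=4$ ancillary pair can undergo a full \textit{2r-4} exchange; the paper does not make that claim and rests solely on the point (which you also give) that Assumption~\ref{ass:mismatches} governs complexes held by a single mismatched contiguous toehold pair, not two.
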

\begin{proof}
 Consider the first claim. For any network in which it is possible to select domains that satisfy Definition \ref{def:species} and Definition \ref{def:domaincomp}, it is trivial to convert those domains to satisfy \ref{def:species} and \ref{def:domaincomp} by introducing the specific bases at the required locations in the major species, and adjusting the fuel and waste to compensate. By Assumption \ref{ass:mismatches}, these changes do not introduce new violations of realisability.
 
Now consider the second claim. By the first claim and \ref{thm:cascade3}, it is sufficient to consider whether the sequestration reactions characterised by Lemma \ref{lemma:leak1} occur between ancillary species in any cascade of $N=4$ components in the mismatch-based scheme of Definition \ref{def:miss1}.
 
 First, consider the unbound domains in the ancillary species in the system $A \rightarrow / \dashv B \rightarrow C \rightarrow / \dashv D$, with mismatches placed as per Definition \ref{def:miss1}:
\begin{itemize}
 \item $I5(A),\ I3(B)$ in $AB$
 \item $S3(A'),\ S5(B')$ in $F_{AB \rightarrow B'}\ W_{AB \rightarrow B}$
 \item $S3(A'),\ S5(B)$ in $W_{AB \rightarrow B'}\ F_{AB \rightarrow B}$
 \item $I5(B),\ I3(C)$ in $BC$
 \item $S3(B'),\ S5(C')$ in $F_{BC \rightarrow C'}$
 \item $S3(B'),\ S5(C)$ in $W_{BC \rightarrow C'}$
 \item $I5(C),\ I3(D)$ in $CD$
 \item $S3(C'),\ S5(D')$ in $F_{CD \rightarrow D'}\ W_{CD \rightarrow D}$
 \item $S3(C'),\ S5(D)$ in $W_{CD \rightarrow D'}\ F_{CD \rightarrow D}$.
\end{itemize}

By Definition \ref{def:mismatch_domaincomp}, observe that  the reaction $B \rightarrow C$ implies $I3(B) \square I5(C),\ S5(B') \square S3(C')$. Moreover, $\lnot S5(B) \diamond S3(C')$. By Assumption \ref{ass:mismatches}, none of the violations of realisability that would otherwise occur due to binding of $AB$ and $CD$; $F_{AB \rightarrow B'}\ W_{AB \rightarrow B}$ and $F_{CD \rightarrow D'}\ W_{CD \rightarrow D}$; and $F_{AB \rightarrow B'}\ W_{AB \rightarrow B}$ and $W_{CD \rightarrow D'}\ F_{CD \rightarrow D}$ characterised by Lemma \ref{lemma:leak1}, occur.  

We note that if $B \dashv C$ in the above network, Definition \ref{def:mismatch_domaincomp} implies $S5(B') \diamond S3(C')$, meaning that sequestration reactions still occur between ancillary fuel and waste species. Cascades with deactivation reactions as intermediate steps are therefore not realisable in this scheme.



Now consider the third claim. By Theorem \ref{thm:fbl} and the first claim of this Theorem, it is sufficient to consider only the sequestration reactions listed in Theorem \ref{thm:fbl}. Further, since the only difference between a feedback loop with exclusively activating interactions and an activating cascade with $N$ species is that  $A_{N} \rightarrow A_1$ in a loop, the second claim of this Theorem implies that it is only necessary to consider changes in realisability due to the introduction of the $A_{N} \rightarrow A_1$ reaction.

For $N\geq 6$, it can be verified that imposing $I3(A_N) \square I5(A_1)$, $S5(A_N^\prime) \square S(3) A_1^\prime$, as required by $A_{N} \rightarrow A_1$, does not create new realisability violations for a cascade of length $N$ with exlcusively activating reactions. The ancillary species of the reactions $A_{N-2}\rightarrow A_{N-1}, A_{N-1}\rightarrow A_N, A_N\rightarrow A_{1}, A_1 \rightarrow A_2, A_2 \rightarrow A_3$ can only form complexes held together by two contiguous toehold domains with a central mismatch, and thus do not violate realisability by Assumption \ref{ass:mismatches}. All other ancillary species are unaffected. 

We note that the above argument does not apply to FBLs of length $N=4$, which remain unrealisable. In that case, adding the reaction $A_{N} \rightarrow A_1$ creates complexes of ancillary species that are held together by two separate sets of contiguous toehold domains, each with a central mismatch, either side of a 4-way junction. In effect, the short periodicity of an $N=4$ loop means that the unwanted interaction identified in Lemma \ref{lemma:leak1} happens twice for each pair of ancillary species. We do not assume in Assumption \ref{ass:mismatches} that such a structure will dissociate. We also note that feedback loops with any deactivating reactions remain unrealisable, since each reaction $A_i \rightarrow A_{i+1}$ is effectively an intermediate reaction between  $A_{i-1} \rightarrow A_i$ and $A_{i+1} \rightarrow A_{i+2}$.



Finally we turn to the fourth claim. By the first claim of this Theorem, and Theorem \ref{thm:ffl}, it is sufficient to consider only the potential unwanted sequestration reactions between ancillary species identified in Theorem \ref{thm:ffl} for each feed-forward branch. The proof is then identical to that of the second claim of this Theorem.
\end{proof}

\end{document}